%% file: main.tex
\documentclass[aps,prl,reprint,superscriptaddress,nofootinbib,nobibnotes,floatfix]{revtex4-2}
\usepackage[T1]{fontenc}
\usepackage{lmodern}
\usepackage{amsmath,amssymb,bm,mathtools,graphicx,booktabs,array,tabularx}
\usepackage{tikz}
\usetikzlibrary{arrows.meta,positioning,calc,fit}
\usepackage[colorlinks=true,linkcolor=blue!50!black,citecolor=blue!50!black,urlcolor=blue!50!black]{hyperref}
\newtheorem{theorem}{Theorem}
\newtheorem{maincorollary}[theorem]{{Corollary}}

\newcommand{\ket}[1]{\lvert #1\rangle}

\newcommand{\Tr}{\operatorname{Tr}}

\newcommand{\SU}{\mathrm{SU}}

\newcommand{\head}[1]{\par\smallskip\noindent\emph{#1.---}}
\newif\ifSupplementContents

\newtheorem{wlemma}{Lemma}[section]
\newtheorem{wtheorem}[wlemma]{Theorem}
\newtheorem{wchangedtheorem}[wlemma]{{Theorem}}
\newtheorem{wcorollary}[wlemma]{Corollary}
\newtheorem{wproposition}[wlemma]{Proposition}
\newtheorem{wdefinition}[wlemma]{Definition}
\newtheorem{wremark}[wlemma]{Remark}
\newtheorem{wexamplebody}[wlemma]{Example}
\newenvironment{wexample}[1][]{%
  \if\relax\detokenize{#1}\relax
    \begin{wexamplebody}\normalfont
  \else
    \begin{wexamplebody}[#1]\normalfont
  \fi
}{\end{wexamplebody}}
\newenvironment{wproof}[1][Proof]{\par\noindent\textit{#1.}\ }{\hfill$\square$\par}
\newcommand{\nbfile}{\texttt{circuit.nb}}
\allowdisplaybreaks[3]
\begin{document}
\title{A Catalytic Algorithm for {Transposing} Unknown Qutrit Unitaries\\ {with Tight Query Bounds}}
\author{Yu-Ao Chen}
\email{chenyuao@buaa.edu.cn}
\affiliation{School of Software, Beihang University}
\date{{October 9}, 2026}
\hypersetup{pdftitle={A Catalytic Algorithm for Transposing Unknown Qutrit Unitaries with Tight Query Bounds},pdfauthor={Yu-Ao Chen}}
\begin{abstract}

We construct exact deterministic circuits for transposing and inverting an unknown qutrit unitary $U\in\mathrm{SU}(3)$ using only forward queries. Three queries prepare a catalyst independently of the input state. Four further queries implement transposition while returning the catalyst unchanged and uncorrelated with the output. Two uses within an antisymmetric encoding implement inversion with eight processing queries and return the same catalyst. Reusing the catalyst implements $m$ transpositions or inversions with $3+4m$ or $3+8m$ total queries, respectively. More generally, we prove that exact transposition and inversion of an unknown $U\in\mathrm{SU}(d)$, $d\ge2$, require at least $d+1$ and $d^2-1$ processing queries, respectively, in fixed causal order. These processing bounds hold for any program state prepared independently of the input state using finitely many forward queries, even if that state is consumed during processing. Both our catalytic qutrit algorithms attain their respective bounds. Moreover, exact qutrit transposition with four processing queries requires at least three preparation queries, and our construction attains this bound.
\end{abstract}
\maketitle

{Forward access to an} unknown quantum evolution {lets} {us} {transform} {it
by}{ }{interleaving oracle calls with known operations}. Quantum supermaps{
and} combs {describe }{this}{ }{use of an oracle}{~}\cite{supermaps,combs,networks2009}{.}{ }{Two basic examples are unitary transposition, $U\mapsto U^\top$, and inversion, $U\mapsto U^\dagger$.} {Heralded}{
}{universal} protocols{~}\cite{quintino2019prl,quintino2019pra},
deterministic sequential {transformations~}\cite{quintinoebler}{,} exact
qubit inversion{~}\cite{yoshida}{,} and the quantum unitary reversal
algorithm (QURA)~\cite{qura} {achieve these tasks in}{ different ways}{.}{
}

{{We store information about the oracle in a program state prepared using finitely many forward queries to $U$, independently of the {state to be transformed}, as in quantum learning and storage-and-retrieval protocols~\cite{bisio2010,sedlak2019}. We distinguish these preparation queries from the processing queries, which are used with the stored state to implement the target transformation.}
If {processing returns} the {state} unchanged and uncorrelated with the output, {it acts as a catalyst and }its preparation cost is paid only
once.}{ This role is shared by }{returned }{resources}{ }{in}{ entanglement}{ }{catalysis~\cite{jonathan1999},}{
}catalytic {magic-state}{ }{conversion~\cite{campbell2011,gidney2019}}, {and}{
}{catalytic} qubit{ }inversion~\cite{yoshida}. {}

{We give an analytic catalytic realization of the seven-query cost numerically identified for qutrit transposition in Ref.~\cite{grinko}.
Three forward queries prepare a catalyst, and four further queries implement
$U^\top$ while returning {the catalyst} unchanged. This processing cost is optimal even
{if processing} may {consume the} program{ state}. At the four-query threshold,
at least three preparation queries are necessary, and our construction attains
this bound. Two uses of the circuit within an antisymmetric encoding implement
$U^\dagger$, also at the optimal processing cost. Reusing the catalyst gives
total query costs $3+4m$ for $m$ transpositions and $3+8m$ for $m$ inversions.}

{We}{ }{obtain}{ these {query costs}}{ by making transposition a fixed operation on
encoded {input states}}{.}{ }{Two queries encode the {system qutrit and catalyst} into a representation space}{;
}a {fixed}{ flip }{then}{ }{produces}{ the corresponding state}{ }{for $U^\dagger$}{
}and {the input state transformed by $U^\top$. }{We}{ }{invert the}{ encoder }{with }{two forward queries}{
}{to $U$}{ and}{ restore the catalyst with}{ a final fixed exchange}{.}{
}{Covariant-network and representation methods }{support this
construction~}{\cite{covariant2009,quintinoebler,grinko}}{, with}{ the}{
recouplings, weights, and fixed gates}{ given in the Supplemental
Material~\cite{supplement}. } 

\head{Model and catalyst}
Let $U\in\SU(3)$ be an unknown oracle. Transpose, entrywise conjugation, and conjugate
transpose are denoted by $\top$, $*$, and $\dagger${ in a fixed basis}.
We allow forward queries, known operations, and arbitrary finite
ancillary systems in a fixed causal order. {Preparation uses $p$ {forward} queries {to $U$ }{to produce a program state}{ from a fixed initial state}, {independently of the state to be transformed}. Processing {acts on the state to be transformed and the program state, using} $q$ further queries to implement $f(U)$, where
$f(U)=U^\top$ or $U^\dagger$.} We write $q_\top$ and $q_\dagger$ for the processing counts of transposition and inversion, respectively. {The protocol implements the unitary channel associated with $f(U)$ exactly on every input state.}

{The catalyst combines a singlet with the adjoint representation.
}{Choose a Hilbert--Schmidt orthonormal basis} $T_{0},\ldots,T_{7}$ {of
traceless} Hermitian {$3\times3$ matrices} with algebraic entries and {define the $8\times8$ matrix $R(U)$ by its entries}
\begin{equation}
 R(U)_{{jk}}=\Tr(T_{{j}}^\dagger UT_{{k}}U^\dagger){,\qquad 0\le {j,k}\le7}.
 \label{eq:adjoint}
\end{equation}
{This real orthogonal} matrix represents conjugation on the
traceless-operator space{. With
$|R(U)\rangle\!\rangle=\sum_{{j,k}=0}^7 R(U)_{{jk}}|{j}\rangle\otimes|{k}\rangle$},
{the program space} and {catalyst are}
\begin{equation}
 |c(U)\rangle_{\mathsf P}=\frac1{\sqrt5}\oplus
 \frac{|R(U)\rangle\!\rangle}{\sqrt{10}}\mathrel{\in}\mathcal H_{\mathsf P}
 =\mathbb C\oplus(\mathbb C^8\otimes\mathbb C^8).
 \label{eq:catalyst}
\end{equation}
{The scalar first component carries the trivial representation}{, and}{
}{$\langle\!\langle R(U)|R(U)\rangle\!\rangle=8$}{ }{gives}{ the }{normalization}. {We embed this} 65-dimensional {program space as the first
summand of a 90-dimensional memory
$\mathcal H_{\mathsf A}=\mathcal H_{\mathsf P}\oplus\mathbb C^{25}$,
and use the same ket with subscript $\mathsf A$} for the {embedded state}.
{The system qutrit $\mathsf S$, on which the target transformation acts,
has state space $V=\mathcal H_{\mathsf S}=\mathbb C^3$.} {The circuit} acts on $\mathcal H=\mathcal H_{\mathsf S}\otimes\mathcal H_{\mathsf A}$,
and each query is $U_{\mathsf S}\otimes I_{\mathsf A}$.

\begin{theorem}[Exact catalytic transposition]
\label{thm:construction}
The catalyst $\ket{c(U)}_{{\mathsf A}}$ {defined by} Eq.~\eqref{eq:catalyst}
can be prepared with three forward queries {to $U$}.
There exists a deterministic {circuit} ${C}_4(U)$, built from
$U$-independent gates and four forward queries {to $U$}, such that
\begin{equation}
 C_4(U)\bigl(I_{\mathsf S}\otimes\ket{c(U)}_{\mathsf A}\bigr)
 =U_{{\mathsf S}}^\top\otimes\ket{c(U)}_{\mathsf A}
 \label{eq:catalytic}
\end{equation}
for every $U\in\SU(3)$.
For every positive integer $m$, reusing this catalyst implements
$m$ transpositions with $3+4m$ {queries in total, including catalyst preparation}.
\end{theorem}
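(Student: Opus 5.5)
The plan has three parts: prepare the catalyst from three queries, build $C_4(U)$ by a representation-theoretic reduction, and then argue reuse by induction. Preparation comes first and is the easy part. Since $U\in\SU(3)$, the antisymmetric square satisfies $\Lambda^2 V\cong V^*$ as $\SU(3)$-representations. Consequently, $U\otimes U$ restricted to $\Lambda^2V$ is unitarily equivalent, via a fixed isometry, to $U^*$. Three forward queries $U^{\otimes3}$ acting on $V\otimes\Lambda^2V\subset V^{\otimes3}$ therefore realise $U\otimes U^*$ on $V\otimes V^*\cong\mathbf 1\oplus\mathbf 8$. In the Hilbert--Schmidt picture this is the trivial block $1$ together with the conjugation action on traceless operators, whose matrix in the basis $T_j$ is $R(U)$ of Eq.~\eqref{eq:adjoint}.

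The preparation protocol is then as follows:
\begin{itemize}
\item[(a)] Prepare a $U$-independent state on $(V\otimes\Lambda^2V)\otimes\mathsf R$, where $\mathsf R\cong\mathbb C\oplus\mathbb C^8$ is a reference. The trivial component carries amplitude $1/\sqrt5$ and is paired with the scalar summand. The adjoint component is the maximally entangled vector $\sum_k |T_k\rangle\otimes|k\rangle_{\mathsf R}$ with amplitude $1/\sqrt{10}$.
\item[(b)] Apply the three queries. The trivial part is invariant, and the adjoint part becomes $\sum_{j,k}R(U)_{jk}|T_j\rangle|k\rangle$.
\item[(c)] Apply a fixed isometry that maps $\mathbf 1\oplus(\mathbf 8\otimes\mathbb C^8)$ onto $\mathcal H_{\mathsf P}$ and then embeds it into $\mathcal H_{\mathsf A}$.
\end{itemize}
The normalisation works out because $1/5+8/10=1$. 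The entries of $T_j$ are algebraic, so all gates are fixed.

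For $C_4(U)$ I would follow the architecture announced in the introduction: encode with two queries, apply a fixed flip, decode with two queries, and apply a fixed exchange. The first step is to identify the relevant equivariance. Under the substitution $U\mapsto VUW$, both sides of Eq.~\eqref{eq:catalytic} must transform consistently. Because the catalyst transforms only through $R$, the problem reduces to finding $U$-independent operators that intertwine representations of $\SU(3)$.

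Concretely, I would write the encoder as $E(U)=G_2(U_{\mathsf S}\otimes I)G_1(U_{\mathsf S}\otimes I)G_0$. With $G_0$, $G_1$, $G_2$ chosen appropriately, this maps the pair (system qutrit, catalyst) into a space that decomposes under Schur--Weyl duality as a sum of irreps $\lambda$ of $\SU(3)$ (e.g.\ $\mathbf 3,\bar{\mathbf 3},\mathbf 6,\mathbf{15},\dots$) with multiplicity spaces. Inside this decomposition the encoded state should have the form ``$U^\top$ applied to the input'' tensored with a component that a fixed permutation of multiplicity labels (the flip) can swap. After the flip, the two-query inverse encoder and the fixed exchange should return the catalyst to its original summand. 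The 25 extra dimensions of $\mathcal H_{\mathsf A}$ are there to make the intermediate isometries unitary.

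The concrete steps are:
\begin{itemize}
\item[(i)] Decompose $V\otimes(\mathbf 1\oplus\mathbf 8)\otimes V^{\otimes2}$ into irreps, using Littlewood--Richardson coefficients.
\item[(ii)] Write the general comb as a sum of intertwiners with unknown weights, i.e.\ recoupling coefficients.
\item[(iii)] Impose Eq.~\eqref{eq:catalytic} on the Choi operator. By Schur's lemma this becomes finitely many polynomial equations in the weights, one block per irrep.
\item[(iv)] Solve these equations and check unitarity of each fixed gate.
\end{itemize}
Step (iv) is the main obstacle. The equations are nonlinear in the weights, and the catalyst amplitudes $1/\sqrt5$ and $1/\sqrt{10}$ must emerge as exactly the values for which a consistent solution exists. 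The first thing I would try is to fix the recouplings from the $6j$-type structure of $\SU(3)$. Then I would solve the remaining scalar constraints block by block, check them symbolically for a generic $U$, and use the equivariance to extend the result to all $U$.

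Finally, the reuse statement $3+4m$ follows by induction on $m$. Eq.~\eqref{eq:catalytic} holds as an operator identity on $\mathsf S$ with the catalyst in product form. Each application therefore returns exactly $\ket{c(U)}_{\mathsf A}$, uncorrelated with the output, and the next round applies $C_4(U)$ to a fresh system. The total count is the three preparation queries plus four per round.
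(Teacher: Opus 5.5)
Your preparation stage and the $3+4m$ count are correct and match the paper. Two forward queries through the Levi--Civita isometry give $U^*$ (using $\det U=1$), and a third gives $U$. Then $U\otimes U^*\simeq 1\oplus R(U)$, acting on one half of the weighted singlet--adjoint Bell state, produces $\ket{c(U)}$, and reuse follows from the product form of Eq.~\eqref{eq:catalytic}.

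The genuine gap is $C_4(U)$, which is the substance of the theorem. Steps (i)--(iv) describe a search, not a construction. You never show that the system has a solution realised by $U$-independent unitaries with two encoding queries, a query-free middle step, and two decoding queries that return the catalyst exactly, and you explicitly leave (iv) open. The mechanism you sketch is also not the one that works. The encoded map is not ``$U^\top$ on the input tensored with a swappable component.'' The flip is not a permutation of multiplicity labels; it exchanges the left and right group actions. The final exchange does not move the catalyst between summands. Nor is $a=4/5$ the unique admissible weight: every $a\in[4/5,1]$ works with a larger midpoint, and $4/5$ is what permits the $45$-dimensional three-block carrier.

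The missing idea is transpose covariance, which reduces everything to checks at $U=I$. If $K(gUh^\top)g^*=\rho(g,h)K(U)$, then $K(U)=\rho(U,I)K(I)U^\top$. Hence a fixed unitary $J$ with $J\rho(g,h)=\rho(h^*,g^*)J$ and $JK(I)=K(I)$ satisfies $JK(U)=K(U^\dagger)U^\top$ for every $U$.

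The paper builds $\mathcal E(U)$ so that $\mathcal E(U)(I\otimes\ket{c(U)})=A_{\mathcal M}^{\mathcal H}\Phi(U)$. Here $\Phi$ occupies the blocks $(3,6)$, $(\bar6,\bar3)$, $(3,\bar3)$ with weights $1/5,1/5,3/5$, and $J$ swaps the first two blocks and acts as $-\mathrm{SWAP}$ on the third. The flipped state is then $\mathcal E(U^\dagger)(U^\top\otimes\ket{c(U^\dagger)})$.

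The point your plan misses is that $\mathcal E(U^\dagger)^\dagger=G_0^\dagger(U\otimes I)G_1^\dagger(U\otimes I)G_2^\dagger$ uses \emph{forward} queries. Two more calls therefore yield $U^\top\otimes\ket{c(U^\dagger)}$. Swapping the two adjoint factors then restores $\ket{c(U)}$, because $R(U^\dagger)=R(U)^\top$; that swap is what your ``fixed exchange'' must be.

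Building $\mathcal E(U)$ also needs concrete ingredients that your plan leaves as unknowns. The selection rule $\lambda_{\mathrm{out}}\in\operatorname{Irr}(3\otimes\lambda)\cap\operatorname{Irr}(\bar3\otimes\mu)$ kills the unwanted branches after each query (the $10$ after the first, the $15$ after the second). The recouplings $O_1,O_2$ fix the amplitude vectors $(1/\sqrt5,1/\sqrt{10})$ and $(\sqrt{3/5},1/\sqrt5)$. A multiplicity rotation merges the two $(8,3)$ copies. These seed-matching steps force the $1:4$ weight and produce the equal first two midpoint weights needed for $J\Phi(I)=\Phi(I)$. Without them, or an explicit solution of your system, the circuit is not established.
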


\begin{figure}[!ht]
\centering
\resizebox{\columnwidth}{!}{%
\begin{tikzpicture}[x=1cm,y=1cm,>=Latex,
 gate/.style={draw,rounded corners=1.5pt,fill=blue!5,align=center}]
 \draw[->] (0,.7)--(3.85,.7);
 \node[left] at (0,.7) {$\ket\psi_{\mathsf S}$};
 \draw[->] (0,-.55)--(1.0,-.55);
 \node[left] at (0,-.55) {$\ket0_{{\mathsf A}}$};
 \node[gate,minimum width=1.2cm,minimum height=.65cm] (prep) at (1.6,-.55)
 {$C_3(U)$};
 \draw[->] (2.2,-.55)--(3.85,-.55);
 \node[below,font=\small] at (3.025,-.55) {$\ket{c(U)}_{{\mathsf A}}$};
 \node[gate,minimum width=1.2cm,minimum height=1.8cm] at (4.45,.075)
 {${C}_4(U)$};
 \draw[->] (5.05,.7)--(6.15,.7);
 \draw[->] (5.05,-.55)--(6.15,-.55);
 \node[right] at (6.15,.7) {$U^{{\top}}\ket\psi_{\mathsf S}$};
 \node[right] at (6.15,-.55) {$\ket{c(U)}_{{\mathsf A}}$};
 \node[below=4pt,font=\small] at (prep.south)
 {3 queries (once)};
 \node[above,font=\small] at (4.45,1.13) {4 queries per use};
 \draw[->,dashed] (5.65,-.55)--(5.65,-1.45)--(6.5,-1.45);
 \node[right,font=\small] at (6.5,-1.45) {next use};
\end{tikzpicture}}
\caption{An exact reusable catalyst. Three forward queries prepare
$\ket{c(U)}_{{\mathsf A}}${ from the known memory state $\ket0_{\mathsf A}$}; each four-query {circuit} returns it in a product with the
{output state}. The {returned} catalyst {is} {ready} {for} {reuse}.}
\label{fig:transpose}
\end{figure}

\head{Preparation and processing}
The antisymmetric {encoding}{ }{supplies}{{ }}{the conjugate oracle needed to
prepare the catalyst. Independently of the input state, we apply the}{{ fixed matrix }}{of}{{ }}{the isometry
}{${A_{\bar3}^{3,3}:}{ V^*}\to{ V}\otimes{ V}$}{{ to an auxiliary qutrit in its}}{
}{{computational basis}}{,}{{ as in Eq.~\eqref{w4:eq:iota}}}{. It}{
obeys}~\cite{miyazaki}
\begin{equation}
 {\bigl(A_{\bar3}^{3,3}\bigr) }^\dagger {A_{\bar3}^{3,3} }=I_3,\qquad (U\otimes U){A_{\bar3}^{3,3} }=\det(U){A_{\bar3}^{3,3} }U^*.
 \label{eq:wedge}
\end{equation}
{Thus two forward queries implement $U^*$, while a third supplies $U$.
}{Using}{ $U\otimes U^*\simeq1\oplus R(U)$}{,}{ }{we}{ }{prepare the catalyst in}{
Eq.~\eqref{eq:catalyst} from its}{ value at $U=I$ by acting on}{ }{one half}{
}{of the adjoint Bell state~\cite{hayashi2026}. }The fixed encodings and decodings are given in Sec.~\ref{w4:sec:catalyst-preparation} of the Supplemental Material.{
}
{For processing, }{we construct fixed unitaries $G_0,G_1,G_2$ such that the two-query encoder
}{\begin{equation}
 \mathcal E(U)=G_2(U_{\mathsf S}\otimes I_{\mathsf A})G_1(U_{\mathsf S}\otimes I_{\mathsf A})G_0
 \label{eq:encoder-main}
\end{equation}
}{satisfies
\begin{equation}
 \mathcal E(U)(I_{\mathsf S}\otimes|c(U)\rangle_{\mathsf A})
 =A_{\mathcal M}^{\mathcal H}\Phi(U).
 \label{eq:encoder-endpoint-main}
\end{equation}
}{The fixed map}{ $A_{\mathcal M}^{\mathcal H}$ embeds the}{
45-dimensional intermediate carrier
}{\begin{equation}
 \mathcal M=({V}\otimes\operatorname{Sym}^2{V})\oplus(\operatorname{Sym}^2{(}{V}^*{)}\otimes {V}^*)\oplus({V}\otimes {V}^*)
 \label{eq:middle-main}
\end{equation}
}{into the workspace{ $\mathcal H$}, where ${V^*}$ is {the conjugate Hilbert space of $V$}. }The three summands allow a fixed flip to exchange the representation actions and apply $U^\top$ to the input state, as in Eq.~\eqref{eq:flip-main}. {The map $\Phi(U)$ is an
isometry from the system space{ $\mathcal H_{\mathsf S}$} to $\mathcal M$.
}
{On $\mathcal M$ we choose a} fixed unitary flip $J${, whose} physical
extension $\widehat J$ satisfies
{\begin{equation}
 \widehat J A_{\mathcal M}^{\mathcal H}\Phi(U)=A_{\mathcal M}^{\mathcal H}\Phi(U^\dagger)U^\top.
 \label{eq:flip-main}
\end{equation}
}{The flipped map is therefore the encoder output at $U^\dagger$ for the input transformed by $U^\top$ and catalyst $|c(U^\dagger)\rangle_{\mathsf A}$.
}{The inverse encoder is
}{\begin{equation}
 \mathcal E(U^\dagger)^\dagger=G_0^\dagger(U_{\mathsf S}\otimes I_{\mathsf A})G_1^\dagger(U_{\mathsf S}\otimes I_{\mathsf A})G_2^\dagger,
 \label{eq:reverse-main}
\end{equation}
}{which requires two further forward queries. }{Applying the inverse encoder to the right-hand side of Eq.~\eqref{eq:flip-main}
and using Eq.~\eqref{eq:encoder-endpoint-main} with $U$ replaced by $U^\dagger$ gives}
{\begin{equation}
 \mathcal E(U^\dagger)^\dagger A_{\mathcal M}^{\mathcal H}\Phi(U^\dagger)U_{\mathsf S}^\top
 =U_{\mathsf S}^\top\otimes|c(U^\dagger)\rangle_{\mathsf A}.
 \label{eq:uncompute-main}
\end{equation}}
{T}he fixed {swap gate }$G_{{\mathrm{swap}}}=(1\oplus\mathrm{SWAP}_{8,8})\oplus I_{25}$ {satisfies }${G_{\mathrm{swap}}|c(U^\dagger)\rangle_{\mathsf A}}\mathrel{{=}}|c(U)\rangle_{\mathsf A}$. {Combining these steps gives the} complete circuit
{\begin{equation}
 C_4(U)=(I_{\mathsf S}\otimes G_{{\mathrm{swap}}})
 \mathcal E(U^\dagger)^\dagger\widehat J\mathcal E(U).
 \label{eq:closure-main}
\end{equation}
}{Equations~\eqref{eq:flip-main}--\eqref{eq:closure-main} establish the
catalytic identity with four processing queries. }
\head{Transpose covariance}
{To design one circuit that works for every unknown $U$, we use transpose
covariance to relate the intermediate maps at different values of $U$.
For $g,h\in\SU(3)$, transposition obeys
$(gUh^\top)^\top g^*=hU^\top$; we extend this relation to larger output spaces.
Let the output space $\mathcal X$ carry a unitary representation $\rho$ of
$\SU(3)\times\SU(3)$. We call a family of linear maps
$K(U):\mathcal H_{\mathsf S}\to\mathcal X$ transpose covariant if}
\begin{equation}
 K(gUh^\top)g^*=\rho(g,h)K(U)
 \label{eq:covariance-main}
\end{equation}
for {all }$g,h{,U}\in\SU(3)${. This relation determines the family from its seed $K(I)$ through }$K(U)=\rho(U,I)K(I)U^\top${. Thus }two {transpose-covariant families} with the same output
representation agree for every $U$ if their seeds agree.
{We use}{ }{Clebsch--Gordan decompositions}{~\cite{grinko}} {to}{ }{identify the representation}{
}{branches}{ }{reached by}{ a query}{,}{ }{apply tensor--Hom duality~\cite{etingof2011} to select those admitting nonzero seeds, and then match }{their}{ seeds }{through}{ }{two
recoupling}{ }{identities}{. }{For this three-block midpoint, the first fixes the}{ }{singlet--adjoint weight}{
}{$1:4$}{ and permits a fixed}{ }{rotation of two equivalent representation}{
}{copies; the second produces}{ }{$\Phi$. }{These}{ }{encodings occupy
memories of dimensions 90 and 30 at the}{ first and second }{queries}{ and}{
}{extend to unitaries on the common}{ workspace}{
}{(Secs}{.}{~\ref{w4:chap:covariance} and~\ref{w4:chap:processing}}{ }{of the}{
}{Supplemental Material}{)}{.}{
}
\head{Reuse and inversion}
{Equation}~\eqref{eq:catalytic} returns {the catalyst unchanged} and {uncorrelated with} the output. {The circuit} can {therefore be applied to} another input state or to the previous output {without a} new preparation.
{Repeating it} $m$ {times uses} $3+4m$ queries {in total}.

\begin{maincorollary}[Exact catalytic inversion]
\label{thm:inversion}
For every $U\in\SU(3)$, two successive uses of the {circuit} in
Theorem~\ref{thm:construction}, combined with fixed antisymmetric encoding
and decoding, implement $U^\dagger$ on any qutrit input while returning
$\ket{c(U)}_{{\mathsf A}}$ unchanged and uncorrelated with the output.
For every positive integer $m$, {reusing this catalyst implements} $m$ inversions {with} $3+8m$ {queries in total},
including catalyst preparation.
\end{maincorollary}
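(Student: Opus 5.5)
The plan is to combine transposition with conjugation, writing $U^\dagger=(U^\top)^*$, and to obtain $U^*$ from two forward uses of a ``transposition oracle'' through the antisymmetric isometry of Eq.~\eqref{eq:wedge}. Given the system qutrit in state $\ket\psi$, I first apply the fixed antisymmetric encoding, namely the isometry $A_{\bar3}^{3,3}:V^*\to V\otimes V$ extended to a fixed unitary. This encodes $\ket\psi$, viewed as a vector of a conjugate qutrit, into two qutrits $\mathsf S_1\mathsf S_2$. Replacing $U$ by $U^\top$ in Eq.~\eqref{eq:wedge} and using $\det(U^\top)=\det U=1$ for $U\in\SU(3)$ gives $(U^\top\otimes U^\top)A_{\bar3}^{3,3}=A_{\bar3}^{3,3}(U^\top)^*=A_{\bar3}^{3,3}U^\dagger$. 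Hence applying $U^\top$ to $\mathsf S_1$ and then to $\mathsf S_2$ maps the encoded state $A_{\bar3}^{3,3}\ket\psi$ to $A_{\bar3}^{3,3}U^\dagger\ket\psi$. Here $\ket\psi$ is identified with an element of $V^*$ through the fixed basis.

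Each application of $U^\top$ is realized by the circuit $C_4(U)$ of Theorem~\ref{thm:construction}, acting on one qutrit together with the catalyst register $\mathsf A$. Theorem~\ref{thm:construction} is stated as an operator identity, $C_4(U)(I_{\mathsf S}\otimes\ket{c(U)}_{\mathsf A})=U^\top_{\mathsf S}\otimes\ket{c(U)}_{\mathsf A}$. It therefore holds by linearity when $\mathsf S$ is entangled with further systems such as the other encoded qutrit, since $C_4$ acts as the identity on those systems. After the first use the catalyst is returned in product with everything else, so the second use on $\mathsf S_2$ sees exactly $\ket{c(U)}_{\mathsf A}$ again. The joint state is then $A_{\bar3}^{3,3}U^\dagger\ket\psi\otimes\ket{c(U)}_{\mathsf A}$. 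Applying the fixed decoding $(A_{\bar3}^{3,3})^\dagger$ maps this back to $U^\dagger\ket\psi$ on one qutrit, with the catalyst untouched and uncorrelated. Because the isometry holds exactly, the decoding unitary returns the other register to a fixed ancilla state $\ket0$.

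The only point needing care is the decoding. One must extend $A_{\bar3}^{3,3}$ to a unitary on $\mathbb C^3\otimes\mathbb C^3$ whose other output is a fixed ancilla state, for example by taking $W\ket{j}\ket0=A_{\bar3}^{3,3}\ket j$ and completing $W$ arbitrarily. Then $W^\dagger$ sends $A_{\bar3}^{3,3}U^\dagger\ket\psi$ to $U^\dagger\ket\psi\otimes\ket0$ exactly for every $U$ and every $\ket\psi$, so the whole procedure implements the unitary channel of $U^\dagger$ on all inputs, including the identification of the basis for $V^*$.

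The counting is then immediate. Each inversion uses two copies of $C_4(U)$, that is $8$ processing queries, and returns $\ket{c(U)}_{\mathsf A}$ unchanged. By induction on $m$, each of $m$ successive inversions starts from the product state with $\ket{c(U)}_{\mathsf A}$ and ends in one. Adding the three preparation queries of Theorem~\ref{thm:construction} gives $3+8m$ queries in total.
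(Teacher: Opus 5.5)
Your proposal is correct and follows essentially the same route as the paper: apply Eq.~\eqref{eq:wedge} to $U^\top$ to get $\bigl(A_{\bar3}^{3,3}\bigr)^\dagger(U^\top\otimes U^\top)A_{\bar3}^{3,3}=U^\dagger$, realize each $U^\top$ factor by one call to $C_4(U)$ with the same returned catalyst (the paper's fixed SWAPs that route $\mathsf S_2$ into the system slot are your relabeling of which qutrit $C_4$ acts on), and decode with $\bigl(A_{\bar3}^{3,3}\bigr)^\dagger$. Your explicit unitary completion $W$ of the encoder and your use of the operator form of Eq.~\eqref{eq:catalytic} for the entangled second qutrit only spell out details that the paper leaves implicit.
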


\begin{figure}[!ht]
\centering
\resizebox{\columnwidth}{!}{%
\begin{tikzpicture}[x=1cm,y=1cm,>=Latex,
 gate/.style={draw,rounded corners=1pt,fill=blue!5}]
 \draw (0,-.7)--(8.6,-.7);
 \node[left] at (0,-.7) {$\ket{c(U)}_{{\mathsf A}}$};
 \node[right] at (8.6,-.7) {$\ket{c(U)}_{{\mathsf A}}$};
 \draw (0,.7)--(.35,.7);
 \node[left] at (0,.7) {$\ket\psi_{\mathsf S}$};
 \node[gate,minimum width=.90cm,minimum height=1.35cm]
 at (.8,.7) {${A_{\bar3}^{3,3} }$};
 \draw (1.25,1.15)--(3.45,1.15);
 \draw (1.25,.25)--(3.45,.25);
 \node[above=3pt,font=\small] at (1.525,1.15) {$\mathsf S_1$};
 \node[above=3pt,font=\small] at (1.525,.25) {$\mathsf S_2$};
 \draw (3.45,1.15)--(3.95,.25);
 \draw (3.45,.25)--(3.95,1.15);
 \draw (3.95,1.15)--(5.95,1.15);
 \draw (3.95,.25)--(5.95,.25);
 \draw (5.95,1.15)--(6.45,.25);
 \draw (5.95,.25)--(6.45,1.15);
 \draw (6.45,1.15)--(6.70,1.15);
 \draw (6.45,.25)--(6.70,.25);
 \node[gate,minimum width=1.20cm,minimum height=1.4cm,align=center]
 at (2.4,-.225) {${C}_4(U)$};
 \node[gate,minimum width=1.20cm,minimum height=1.4cm,align=center]
 at (4.95,-.225) {${C}_4(U)$};
 \node[gate,minimum width=1.20cm,minimum height=1.35cm]
 at (7.3,.7) {${\bigl(A_{\bar3}^{3,3}\bigr) }^\dagger$};
 \draw[->] (7.90,.7)--(8.6,.7);
 \node[right] at (8.6,.7) {$U^\dagger\ket\psi_{\mathsf S}$};
 \node[below,font=\small] at (2.4,-1.02) {4 queries};
 \node[below,font=\small] at (4.95,-1.02) {4 queries};
\end{tikzpicture}}

\caption{Exact catalytic inversion.
Two four-query uses of ${C}_4(U)$,
combined with the fixed {isometric }encoding ${A_{\bar3}^{3,3} }$, SWAP gates, and decoding ${\bigl(A_{\bar3}^{3,3}\bigr) }^\dagger$,
apply $U^\dagger$ to the {system qutrit} and return $\ket{c(U)}_{{\mathsf A}}$ in a
product with the output.
The catalyst {is} {ready for reuse}.}
\label{fig:inverse}
\end{figure}

{Applying Eq.~\eqref{eq:wedge} to $U^\top$ gives}
\begin{equation}
 {\bigl(A_{\bar3}^{3,3}\bigr) }^\dagger(U^{{\top}}\otimes U^{{\top}}){A_{\bar3}^{3,3} }
       =(U^{{\top}})^*=U^\dagger .
 \label{eq:inverse}
\end{equation}
{We realize the two factors using successive calls to $C_4(U)$ with the same catalyst, between the fixed encoding and decoding.}
{The first inversion therefore uses} eleven forward queries, {including catalyst preparation, which is one fewer than the twelve required by the} modified QURA construction~\cite[Appendix B.3, Theorem S13]{qura}.

\head{Matching resource bounds}
{Our construction attains the minimum processing costs for exact qutrit transposition and inversion: four and eight forward queries, respectively. {We obtain these values from processing bounds valid in arbitrary dimension, even when the prepared program state is consumed. For qutrit transposition, attaining the four-query processing bound also fixes the minimum preparation cost at three queries.}}

\begin{theorem}[{Processing lower bounds}]
\label{thm:bounds}
{Let $d\ge2$ and let $U\in\SU(d)$ be unknown. Consider a deterministic protocol in a fixed causal order. Its preparation stage uses any finite number $p$ of forward queries to $U$ to produce a program state from a fixed initial state, independently of the state to be transformed. Its processing stage uses {the prepared program state and }$q$ further forward queries and implements the unitary channel associated with $f(U)$ exactly for every $U$, where $f(U)=U^\top$ or $U^\dagger$. All other operations are known and independent of $U$. Then}
\begin{equation}
             q_{\top }\ge d+1,\qquad q_{{\dagger}}\ge d^2-1 .
 \label{eq:rate}
\end{equation}
{These bounds allow arbitrary finite-dimensional ancillary systems, consumption of the program state, and a final unobserved environment that depends on $U$.}
\end{theorem}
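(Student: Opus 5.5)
The plan is to reduce both bounds to a representation-theoretic counting statement about a single covariant, isometric ``processing comb'' acting on the input system together with an arbitrary $U$-dependent program state.

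\emph{Step 1: removing the program state.} Purify everything. Without loss of generality the program is a pure state $|c(U)\rangle$ on a memory $\mathsf A$ (enlarge $\mathsf A$ by the purifying system), and the processing is a product $W_q(U\otimes I)\cdots W_1(U\otimes I)W_0=:\mathcal P(U)$ of fixed unitaries and queries on $\mathsf S\otimes\mathsf A\otimes\mathsf E$. Exactness on every input, together with a $U$-dependent but unobserved environment, forces the Kraus-level identity
\[
\mathcal P(U)\bigl(I_{\mathsf S}\otimes|c(U)\rangle\bigr)=f(U)_{\mathsf S}\otimes|e(U)\rangle
\]
for some state $|e(U)\rangle$, as in Eq.~\eqref{eq:catalytic} but with $|e(U)\rangle$ no longer equal to $|c(U)\rangle$. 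Since $p$ is arbitrary, I would not track the polynomial degree of $|c(U)\rangle$. Instead I would \emph{differentiate} at $U=I$ along $X\in\mathfrak{su}(d)$.

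At $U=I$ the identity becomes $\mathcal P(I)(I\otimes|c_0\rangle)=I\otimes|e_0\rangle$. The first-order term reads
\[
\sum_{k=1}^{q}W_q\cdots W_k(X\otimes I)W_{k-1}\cdots W_0(I\otimes|c_0\rangle)+\mathcal P(I)(I\otimes|\dot c\rangle)=df(X)\otimes|e_0\rangle+I\otimes|\dot e\rangle ,
\]
where $df(X)=X^\top$ for transposition and $-X$ for inversion. The program and environment derivatives contribute only terms of the form $I_{\mathsf S}\otimes(\cdot)$, or terms that factor through the $U=I$ isometry. Projecting onto the component orthogonal to $I_{\mathsf S}\otimes\mathcal H_{\mathsf A}$-valued shifts should eliminate them. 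This leaves a linear identity: $df(X)$ must be expressible as a contraction $\sum_{k=1}^q B_k(X\otimes I)A_k$ with fixed $A_k,B_k$. This step kills the dependence on $p$, and it is what makes the bounds insensitive to consumption of the program.

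\emph{Step 2: the counting.} First-order data alone only yields $q\ge1$, so I would iterate the derivative to higher orders, or equivalently twirl. Twirling the protocol over $g,h\in\SU(d)$ via $U\mapsto gUh$ makes the comb covariant in the sense of Eq.~\eqref{eq:covariance-main}, without changing $q$ or exactness. The output then requires the system index to acquire the left action $g^*\simeq\Lambda^{d-1}g$.

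The $g$-action reaching the system comes only from the $q$ queries, since the program is covariant but $g$-balanced by Step~1. By Schur--Weyl duality, $V^{*}$ first appears in $V^{\otimes k}$ at $k=d-1$. One further query is needed to absorb the input copy of $V$ into the determinant, giving $\Lambda^dV$ trivial. Another is needed to release the $h$-action onto the output, as in the encode/flip/decode structure of Eq.~\eqref{eq:closure-main}. Together these should give $q_\top\ge(d-1)+2=d+1$.

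For inversion, $U^\dagger$ has left--right roles exchanged. I would argue that a $q$-query inversion protocol, composed with fixed antisymmetric encodings as in Eq.~\eqref{eq:inverse}, cannot beat the bound of Ref.~\cite{qura}-type arguments. The degree count there is $\dim\mathfrak{su}(d)=d^2-1$: each independent adjoint direction must be cancelled by at least one query, since $U^\dagger$ is a degree-$(d-1)$ polynomial in $U$ via the adjugate, and the adjugate has to be supplied $d+1$ times in a linearly independent way.

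\emph{Main obstacle.} The hardest step is making the inversion count $d^2-1$ rigorous. I would first try a rank argument: after twirling, consider the operator $\Omega=\int dU\,\mathcal P(U)^{\dagger}(\cdots)$ obtained by pairing the comb with $U^{\dagger}\otimes\overline{|e(U)\rangle}$. I would then show that its restriction to the adjoint isotypic component must have rank at least $d^2-1$, while each query raises the attainable rank by at most one. If that fails, the fallback is to bound $q$ via the degree of the Choi polynomial. Here the program's $U$-dependence is factored out by Step~1, so only the $q$ processing queries contribute to the relevant degree in $(U,U^{*})$.
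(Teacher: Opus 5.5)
Your first-order identity in Step~1 is the right starting point, but you then discard the structure that proves the theorem, and Step~2 does not supply a valid replacement.

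\textbf{The missing idea.} Multiply the differentiated identity on the left by the adjoint of the unperturbed logical isometry, $(I_{\mathsf S}\otimes\langle c_0|)\mathcal P(I)^\dagger=I_{\mathsf S}\otimes\langle e_0|$. The query terms then become $\Lambda(X)=\sum_{k=1}^q\mathcal V_k^\dagger(X\otimes I)\mathcal V_k$, with prefix isometries $\mathcal V_k=W_{k-1}\cdots W_0(I_{\mathsf S}\otimes|c_0\rangle)$. The same $\mathcal V_k$ sits on both sides, so $\Lambda$ is completely positive with $\Lambda(I)=qI$. Writing the response as $\sum_kB_k(X\otimes I)A_k$ with unrelated $A_k,B_k$ throws this positivity away.

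The program and environment derivatives should not be projected out either. After this sandwich they contribute exactly $(\langle e_0|\dot e\rangle-\langle c_0|\dot c\rangle)I$, a real scalar that is linear in the generator. Hence, for all matrices $M$, $\Lambda(M)=M^\top+\operatorname{Tr}(FM)I$ or $\Lambda(M)=-M+\operatorname{Tr}(FM)I$ for some Hermitian $F$. The normalization $\Lambda(I)=qI$ fixes $\operatorname{Tr}F=q_\top-1$ or $\operatorname{Tr}F=q_\dagger+1$, respectively. The unknown traceless part of $F$ absorbs all program-state and environment dependence, including consumption.

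Choi positivity then finishes the proof at first order:
\begin{itemize}
\item Pairing $\mathrm{SWAP}_{d,d}+F^\top\otimes I\succeq0$ with $I-\mathrm{SWAP}_{d,d}$ gives $(d-1)(\operatorname{Tr}F-d)\ge0$, hence $q_\top\ge d+1$.
\item Evaluating $F^\top\otimes I-|I_d\rangle\!\rangle\langle\!\langle I_d|\succeq0$ on $|I_d\rangle\!\rangle$ gives $\operatorname{Tr}F\ge d^2$, hence $q_\dagger\ge d^2-1$.
\end{itemize}
So your claim that first-order data yields only $q\ge1$ is incorrect. Neither higher derivatives nor twirling is needed.

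\textbf{Step~2 would not yield a proof.} The count $(d-1)+2$ assumes the $g$-action on the system is built only from query copies of $V$. A finitely prepared program state can carry arbitrary irreducible $g$-actions. The paper's own catalyst carries $(1,1)\oplus(8,8)$, and a program carrying $\bar6$ already reaches $\bar3\subset3\otimes\bar6$ after a single query. So the fact that $V^*$ first appears in $V^{\otimes(d-1)}$ bounds nothing here. The extra ``absorb into the determinant'' and ``release the $h$-action'' queries are heuristics that reproduce the answer rather than derive it.

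Your inversion paragraph (the adjugate supplied $d+1$ times, one unit of rank per query) contains no argument, as you acknowledge. Both bounds come from the positivity constraint above, not from representation counting.
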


{For $d=3$, Theorem~\ref{thm:construction} and Corollary~\ref{thm:inversion} attain these processing bounds while returning the catalyst.}

{We extend the differential completely positive (CP) method of Ref.~\cite{odake} to finitely prepared program states that may be consumed. Holding the input program state fixed while differentiating the processor gives a CP response $\Lambda$ with $\Lambda(I)=qI$. Differentiating the exact input-output identity also retains the derivatives of the input program state and final environment; these contribute only a scalar multiple of the system identity. For traceless Hermitian $H$, the response is therefore $H^\top+\Tr(FH)I$ for transposition or $-H+\Tr(FH)I$ for inversion, with a Hermitian $F$ chosen separately for each target. Choi positivity gives the stated bounds (Sec.~\ref{sm:consumable} of the Supplemental Material).}

{At the qutrit transposition threshold, saturation of the processing bound {yields the following} preparation{ bound, whose proof is given} in Sec.~\ref{sm:generalstorage} of the Supplemental Material.}

{\begin{theorem}[Preparation lower bound]
\label{thm:preparation}
In the model of Theorem~\ref{thm:bounds}, every exact qutrit transposition protocol with $q_\top=4$ requires $p\ge3$ preparation queries, including when processing consumes the program state. The construction in Theorem~\ref{thm:construction} attains $p=3$ at $q_\top=4$.
\end{theorem}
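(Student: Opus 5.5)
The plan is to start from the differential CP argument behind Theorem~\ref{thm:bounds} and study its equality case. For transposition, that argument produces a CP map $\Lambda$ with $\Lambda(I)=qI$ and $\Lambda(H)=H^\top+\Tr(FH)I$ for traceless Hermitian $H$. Here $F$ is a Hermitian matrix that collects the derivative contributions of the program state and of the final environment.

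\emph{Step 1: pin down $F$ and the Choi operator at $q_\top=4$.} For $d=3$ and $q_\top=4=d+1$, I will write out the Choi operator $J_\Lambda$ explicitly. The transpose part contributes a multiple of the swap operator on $V\otimes V$. Positivity of $J_\Lambda$ at the threshold should leave no slack, so I expect it to force two things. First, $J_\Lambda$ has a prescribed kernel, namely the antisymmetric subspace $\wedge^2 V$. Second, $F$ is pinned to a specific nonzero value, up to the $\SU(3)$-covariance we may impose by twirling the protocol over $U\mapsto gUh^\top$. The precise claim I need is that the program-state contribution to $F$ cannot vanish at $q_\top=4$: with no program state ($p=0$) one only gets the bound $q_\top\ge d+2$ or similar. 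I also need to know which representation of the derivative $\partial_H\ket{\pi(U)}$ must be nonzero.

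\emph{Step 2: representation content of a $p$-query program state.} A program state prepared with $p$ forward queries from a fixed initial state has amplitudes that are homogeneous polynomials of degree $p$ in the entries of $U$. Restricted to $\SU(3)$, these span matrix coefficients of irreducible representations occurring in $V^{\otimes p}$, all of triality $p \bmod 3$. By Step~1, the response $H\mapsto\Tr(FH)I$ must arise from an overlap $\langle\pi(U)|\partial_H\pi(U)\rangle$ together with the processor's first-order coupling. This means the program state must carry, coherently with its trivial component, a component transforming in the adjoint representation; that is the singlet--adjoint structure of Eq.~\eqref{eq:catalyst}. Such a nontrivial triality-zero component lies in the span of the matrix coefficients of $V^{\otimes p}$ only if $V^{\otimes p}$ contains both the trivial representation and the adjoint. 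For $p=0$ this fails because only the trivial representation appears; for $p=1$ only $V$ appears; for $p=2$ only $\operatorname{Sym}^2V$ and $V^*$ appear. Hence $p\ge3$. If processing consumes the program state, the same argument applies, because Step~1 only uses the input derivative and never returns the state. The attainment claim then follows from Theorem~\ref{thm:construction}, since $p=3$ queries prepare $\ket{c(U)}_{\mathsf A}$.

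\emph{Main obstacle.} The hard part is Step~1, specifically turning Choi saturation into a statement about the program state's representation content rather than just a scalar constraint on $F$. Triality alone is delicate: the environment derivative also contributes to $F$, so I must show that the saturated $F$ cannot be supplied by the environment term together with a low-degree program state. My first attempt will be to twirl the whole protocol over $g,h$, reducing $F$ and the derivative terms to covariant form. I will then compare the saturated value of $\Tr(FH)$ with the largest value attainable when the program-state amplitudes lie in matrix coefficients of $V^{\otimes p}$ for $p\le2$, and aim to show a strict gap. If the first-order argument is too weak, I will fall back on a second-order expansion of the exact input-output identity.
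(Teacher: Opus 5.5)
Your proposal has a genuine gap, and it lies in Step 1. At the threshold $q_\top=d+1$ you have $\Tr F_\top=d$. The saturated inequality $\Tr[(I-\mathrm{SWAP})\operatorname{Choi}(\Lambda_\top)]=0$, together with positivity, forces $\operatorname{Choi}(\Lambda_\top)=\mathrm{SWAP}+F_\top^\top\otimes I$ to annihilate the antisymmetric subspace. That is, $(F_\top^\top\otimes I)(\ket a\ket b-\ket b\ket a)=\ket a\ket b-\ket b\ket a$ for all $a,b$. Contracting the second factor with $\langle b|$ and then with $\langle a|$, for orthonormal $a,b$, gives $F_\top=I$.

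Hence $\Tr(F_\top H)=0$ for every traceless $H$. At saturation the scalar program-state response vanishes identically, so it cannot carry any representation content of $\ket{c_{\mathrm{in}}(U)}$, contrary to your claim that it cannot vanish. That response is $-i(\langle c_{\mathrm{out}}|c'_{\mathrm{out}}\rangle-\langle c_{\mathrm{in}}|c'_{\mathrm{in}}\rangle)$, so it also involves the output state. Your remark that Step 1 uses only the input derivative is therefore wrong when the state is consumed.

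Step 2 fails independently. A term $\langle c|\partial_H c\rangle$ is quadratic in the amplitudes. Already for $p=1$ it is a nonzero function of $U$ built from adjoint matrix coefficients: $c(U)=U\ket0$ gives $i\langle0|U^\dagger HU|0\rangle$. So nothing forces the amplitudes themselves to contain trivial and adjoint components. Triality cannot close the gap either. For $p=2$ and $q=4$, the left side has degree $6$, and this is balanced by a consumed output state $\ket{c_{\mathrm{out}}(U)}$ of triality $2$. The singlet--adjoint structure is read off from the construction, not derived from the bound.

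The missing idea is that saturation constrains each query separately, not only $F$. The response $\Lambda_\top=\sum_j\mathcal V_j^\dagger(\,\cdot\,\otimes I)\mathcal V_j$ is a sum of CP maps with positive semidefinite Choi operators. If the sum is supported on the symmetric subspace, so is every term. Define the first-query slices $L_j(U)=(I_{\mathsf S}\otimes\langle j|)G_0(I_{\mathsf S}\otimes\ket{c_{\mathrm{in}}(U)})$. The constraint on the first term says that each $UL_j(U)$ is a symmetric matrix.

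If $p=2$, the $L_j$ are quadratic in $U$. Extend the symmetry identity polynomially to all $M\in\mathbb C^{3\times3}$ and combine it with $\sum_jL_j(M)^\dagger L_j(M)\propto I$. This shows that the $L_j$ vanish at rank one and factor through the adjugate, and finally that $L_j(M)=\operatorname{adj}(M)L_j(I)$. Then $UL_j(U)=L_j(I)$, so the state after the first processing query is independent of $U$. Preparing that state directly leaves an exact three-query transposition, contradicting Theorem~\ref{thm:bounds}.

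The cases $p<2$ reduce to $p=2$ by padding with idle preparation calls. The attainment claim follows from Theorem~\ref{thm:construction}, as you say.
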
}

{\head{Verification}}
{We verify our quantum circuit for $U\in\SU(3)$ symbolically in
Mathematica, checking the unitarity of the fixed gates and the three-query
preparation. We extend the circuit expressions
to arbitrary complex $3\times3$ matrices and use $|c(U)\rangle_{\mathsf A}$
for the degree-three memory column produced by preparation. Treating the
nine matrix entries as independent commuting variables, we verify
\begin{equation}
C_4(U)\bigl(I_{\mathsf S}\otimes|c(U)\rangle_{\mathsf A}\bigr)=\det(U)\bigl[U^\top\otimes|c(U)\rangle_{\mathsf A}\bigr].
\label{eq:notebook-main}
\end{equation}
Both sides are $270\times3$ polynomial matrices, and all 810 entry
differences vanish exactly. On $\SU(3)$, the prepared column equals the
catalyst of Eq.~\eqref{eq:catalyst}, and reduction modulo $\det U-1$ gives
Eq.~\eqref{eq:catalytic}. Section~\ref{sm:symbolic} gives the explicit
polynomial extension and reproducible checks.}
{We also formalize the exact constructions and lower bounds
of Secs.~\ref{w4:chap:construction}--\ref{sm:generalstorage}, together with
the repeated-use constructions, in Lean 4 (Sec.~\ref{sm:lean}); the bounds
apply to the finite causal protocol models specified there.}

\head{Discussion}
{The reversible processing cycle attains the optimal per-use processing query cost in this model
while returning the program state after every invocation.}
{Each subsequent invocation combines the returned state with fresh queries to $U$, without repeating preparation.}
Returning the program state therefore carries no additional processing-query cost for exact qutrit transposition and inversion.

The{ }{remaining}{ }{resource questions}{ concern }{single-use}{ }{queries and}{
}{memory.}{ }{Our}{ first transposition }{uses} seven queries and {our}{ first}{
inversion} {eleven}. {For}{ unrestricted} single-use{ protocols}{, }{where} the{
}{input state }{can} participate from the first query{, the} analytic lower bounds{
}{of Ref.~\cite{odake} }{are} six {and nine} queries{ }for {the}{ respective}{
}qutrit {tasks. }{Our preparation bound applies when transposition uses four processing queries and preparation acts independently of the state to be transformed. }{The construction gives $M_{\min}\le90$, where $M_{\min}$ is the smallest
peak memory dimension outside the query qutrit {throughout preparation and processing} among pure
coherent, deterministic protocols in a fixed causal order that prepare the catalyst with three queries, independently of the state to be transformed, and use four processing queries to
transpose an unknown qutrit unitary exactly and return the catalyst.
The peak includes the logical system whenever stored outside the query
qutrit. Exact coefficient-Gram computations suggest that
$M_{\min}=90$ in this model; a matching memory lower bound
remains open (Sec.~\ref{sm:auxiliary}).}

\begin{acknowledgments}
\head{Acknowledgments}
This work was supported by the Fundamental Research Funds for the Central
Universities (Grant No. KG16-4773-01).
{The author used OpenAI Codex (ChatGPT 6 Astra) {for assistance} with mathematical derivations, {computational and Lean verification, and} manuscript preparation{.}}
\end{acknowledgments}
{\head{Data availability}
}{The Mathematica notebooks, Python scripts, and Lean sources used in this work are included in the accompanying ancillary files.}

\makeatletter
\def\bibsection{%
  \par
  \begingroup\baselineskip14\p@
    \bib@device{\hsize}{72\p@}%
  \endgroup
  \nobreak\@nobreaktrue\addvspace{7\p@}%
}
\makeatother

\clearpage
\onecolumngrid
\input{supplemental.tex}

\end{document}

%% file: supplemental.tex
\setcounter{section}{0}
\setcounter{equation}{0}
\setcounter{table}{0}
\setcounter{secnumdepth}{3}
\renewcommand{\thesubsubsection}{\thesubsection.\arabic{subsubsection}}
\renewcommand{\thesection}{S\arabic{section}}
\renewcommand{\thesubsection}{\thesection.\arabic{subsection}}
\makeatletter
\renewcommand{\p@subsection}{}
\renewcommand{\p@subsubsection}{}
\makeatother
\renewcommand{\theequation}{S\arabic{equation}}
\renewcommand{\thetable}{S\arabic{table}}
\renewcommand{\theHsection}{supp.\arabic{section}}
\renewcommand{\theHequation}{supp.\arabic{equation}}
\renewcommand{\theHtable}{supp.\arabic{table}}
\newcommand{\tr}{\Tr}
\begin{center}
\textbf{\large Supplemental Material}\par\medskip
\textbf{A Catalytic Algorithm for {Transposing} Unknown Qutrit Unitaries {with Tight Query Bounds}}\par\medskip
Yu-Ao Chen\par
School of Software, Beihang University
\end{center}
We give a complete analytic proof of the four-query catalytic transposition algorithm in the first three sections, from catalyst preparation and transpose covariance to the explicit representation construction.
We then establish the optimal processing query costs in Sec.~\ref{sm:consumable} and the optimal preparation query cost for four-query transposition in Sec.~\ref{sm:generalstorage}.
Sec.~\ref{sm:mathematica} presents exact symbolic verification of the complete circuit and formal verification {of Secs.~\ref{w4:chap:construction}--\ref{sm:generalstorage} and the repeated-use constructions} in Lean 4.
It also gives computational evidence on the minimum auxiliary dimension for pure coherent transposition of an unknown qutrit unitary with three preparation queries and four processing queries.

\par\medskip\noindent\textbf{Contents}\par\smallskip
\begingroup
\setcounter{tocdepth}{2}
\SupplementContentsfalse
\makeatletter
\def\l@section{\@dottedtocline{1}{0em}{2.5em}}
\def\l@subsection{\@dottedtocline{2}{2.5em}{3.5em}}
\def\l@subsubsection{\@dottedtocline{3}{6em}{4.5em}}
\def\@pnumwidth{2em}
\def\@tocrmarg{3em}
\let\SupplementContentsLine\contentsline
\renewcommand{\contentsline}[4]{%
  \ifSupplementContents\SupplementContentsLine{#1}{#2}{#3}{#4}\fi}
\@starttoc{toc}
\makeatother
\endgroup
\addtocontents{toc}{\protect\SupplementContentstrue}
\medskip

\clearpage
\section*{Summary of notation}
The table collects recurring symbols and basic conventions. Symbols used only in a local derivation are defined where they first occur. The last column points to defining equations or passages in the supplement; grouped symbols may have several entries.
\par\medskip
\begingroup
\small\renewcommand{\arraystretch}{1.18}\setlength{\tabcolsep}{5pt}\noindent
\begin{tabularx}{\textwidth}{@{}>{\raggedright\arraybackslash}p{0.29\textwidth}>{\raggedright\arraybackslash}X>{\raggedright\arraybackslash}p{0.14\textwidth}@{}}
\toprule
\textbf{Symbol} & \textbf{Meaning and convention} & \textbf{Definition}\\
\midrule
\(U;\ g,h\) & Unknown unitary {$U\in SU(3)$} and left/right covariance variables {$g,h\in SU(3)$}. The {processing bound allows $U\in SU(d)$, $d\ge2$}. & Secs.~\ref{w4:sec:target}--\ref{w4:sec:notation}{; Sec.~\ref{sm:consumable}}\\
\(V,\ V^*\) & Fundamental carrier $V=\mathbb C^3$ and its conjugate carrier, with actions $g$ and $g^*$, respectively. & Secs.~\ref{w4:sec:target}--\ref{w4:sec:notation}\\
\(\mathcal H_{\mathsf S},\mathcal H_{\mathsf P}, \mathcal H_{\mathsf A},\mathcal H\) & {Input state space}, program{ space}, full processing memory, and joint workspace; dimensions $3,65,90,270$ in the {fixed $a=4/5$} construction. {In Secs.~\ref{sm:consumable}--\ref{sm:generalstorage}, $\mathcal H_{\mathsf P}$ is an arbitrary finite ancillary space; Sec.~\ref{sm:consumable} uses $\dim\mathcal H_{\mathsf S}=d$.} & Sec.~\ref{w4:sec:target}; Eq.~\eqref{w4:eq:workspace}{; Sec.~\ref{sm:consumable}}\\
\(\mathsf S,\mathsf P,\mathsf A; I_d,I_{\mathsf S},I_{\mathsf A}\) & Upright labels name physical systems. Identity subscripts specify a dimension or a system. & Sec.~\ref{w4:sec:notation}\\
\({}^\top,\ {}^*,\ {}^\dagger{;\ \operatorname{adj}M}\) & Transpose, entrywise conjugation in the fixed bases, and adjoint. {$\operatorname{adj}M$ is the classical adjugate; in Sec.~\ref{sm:generalstorage}, $M^\dagger$ has this meaning.} & Sec.~\ref{w4:sec:target}{; Eq.~\eqref{sm:encodingfactorization}; Sec.~\ref{sm:symbolic}}\\
\(|M\rangle\!\rangle,\ |I_d\rangle\!\rangle\) & Unnormalized vectorization in the fixed bases{, with the output factor first}; $|I_d\rangle\!\rangle$ is the unnormalized Bell vector. & Sec.~\ref{w4:sec:notation}\\
\(\operatorname{Sym}^2V,\operatorname{Sym}^2(V^*), \operatorname{Sym}^2U\) & Symmetric tensor carriers and the induced representation. $\operatorname{Sym}^2U$ is not the matrix square $U^2$. & Sec.~\ref{w4:sec:notation}\\
\(\mathrm{SWAP}_{d_1,d_2},\mathrm{SWAP}\) & Exchange of ordered tensor factors. In the qutrit construction, $\mathrm{SWAP}$ abbreviates $\mathrm{SWAP}_{3,3}$. & Eq.~\eqref{w4:eq:swap}\\
\(\lambda,\mu;\ W_\lambda,W_\mu{,}W_8; d_\lambda,d_\mu\) & Irreducible representations, their carriers, and their dimensions; representation matrices are written $\lambda(g),\mu(g)$. {$W_8$ is the }{t}raceless-matrix adjoint carrier{.} & Sec{s}.~\ref{w4:tc:sec:blocks}{, }\ref{w4:sec:branches}\\
\(1,3,\bar3,6,\bar6,8,10, \overline{10},15,\overline{15},27\) & Dimension labels denote the particular irreducible representations used here; bars denote conjugates. General representation sums retain Young-diagram labels. & \hyperref[w4:tc:representation-labels]{Notation in Sec.~\ref*{w4:chap:covariance}}\\
\(A_\nu^{\lambda,\mu};\ A_{\lambda,\mu}^{\mathcal X}\) & Fixed isometric embeddings: a Clebsch--Gordan (CG) component $W_\nu\hookrightarrow W_\lambda\otimes W_\mu$, or a left/right block $W_\lambda\otimes W_\mu\hookrightarrow\mathcal X$. Bases and phases are fixed below. & Eq.~\eqref{w4:tc:block-embedding}{; Sec.~\ref{w4:sec:notation}}\\
\(A_{\bar3}^{3,3},A_6^{3,3},A_8^{3,\bar3}\) & Standard embeddings of the antisymmetric, symmetric, and adjoint components, respectively. Conjugate embeddings use the entrywise-conjugate matrices. & Sec.~\ref{w4:sec:notation}; Eq.~\eqref{w4:eq:adjoint-coordinate-embedding}\\
\(\Pi_\lambda,\Pi_{\operatorname{Sym}^2}\) & Orthogonal projections onto the specified irreducible component or symmetric subspace, in the stated ambient carrier. & Secs.~\ref{w4:sec:branches}, \ref{w4:sec:notation}\\
\(T_0,\ldots,T_7\) & Fixed Hilbert--Schmidt orthonormal basis of traceless Hermitian $3\times3$ matrices. & Sec.~\ref{w4:sec:notation}\\
\(X,\ |X\rangle\) & {A traceless matrix $X\in W_8$} and its eight-dimensional coordinate ket in the fixed Hermitian basis. & Sec.~\ref{w4:sec:branches}\\
\(\operatorname{Ad}_g,\ R(g)\) & Adjoint action $X\mapsto gXg^\dagger$ and its real $8\times8$ matrix in the fixed Hermitian basis. & Eq.~\eqref{w4:eq:adjoint-action}; Sec.~\ref{w4:sec:target}\\
\(a,\ |c_a(U)\rangle{;\ |c(U)\rangle}\) & Catalyst weight {$0\le a\le1$} and catalyst family. The main algorithm uses $a=4/5$; the known preparation {seed} is $|c_a(I)\rangle$. {The main text writes $|c(U)\rangle=|c_{4/5}(U)\rangle$ and explicitly extends this ket polynomially in Verification. The compatible four-query family uses $4/5\le a\le1$.} & Eq.~\eqref{w4:eq:catalyst}{; Eq.~\eqref{w4:eq:preparation-seed}; Remark~\ref{w4:rem:general-catalyst-family}}\\
\bottomrule
\end{tabularx}
\endgroup
\clearpage
\noindent\textit{Summary of notation (continued)}
\par\medskip
\begingroup
\small\renewcommand{\arraystretch}{1.18}\setlength{\tabcolsep}{5pt}\noindent
\begin{tabularx}{\textwidth}{@{}>{\raggedright\arraybackslash}p{0.29\textwidth}>{\raggedright\arraybackslash}X>{\raggedright\arraybackslash}p{0.14\textwidth}@{}}
\toprule
\textbf{Symbol} & \textbf{Meaning and convention} & \textbf{Definition}\\
\midrule
\(\mathcal M_1,\mathcal M_2,\mathcal M_3,\mathcal M_{4};\ A_{\mathcal M_4}^{\mathcal H}\) & Intermediate spaces in circuit order {for $a=4/5$}: memories $\mathcal M_1,\mathcal M_3$ have dimensions $90,30$; full carriers $\mathcal M_2,\mathcal M_4$ have dimensions $75,45$. $A_{\mathcal M_4}^{\mathcal H}$ is the fixed isometric embedding of $\mathcal M_4$ into $\mathcal H$. {The main text abbreviates $\mathcal M_4$ and $A_{\mathcal M_4}^{\mathcal H}$ as $\mathcal M$ and $A_{\mathcal M}^{\mathcal H}$.} & Eqs.~\eqref{w4:eq:first-memory}, \eqref{w4:eq:first-occupied-space}, \eqref{w4:eq:second-memory}, \eqref{w4:eq:middle-space}; Sec.~\ref{w4:sec:notation}\\
\(\Phi_{\lambda,\mu}(U)\) & Concrete normalized covariant components. The three basic types are $(3,6)$, $(\bar6,\bar3)$, and $(3,\bar3)$. & Eq.~\eqref{w4:eq:phi-components}\\
\(\Phi_\alpha(U),\ \alpha{;\ \Phi(U)}\) & Weighted direct sum of the three basic components, with $0\le\alpha\le1/2$; the main algorithm uses $\alpha=1/5$. {The main text writes $\Phi(U)=\Phi_{1/5}(U)$.} & Eq.~\eqref{w4:eq:phi-family}\\
\(K(U),\mathcal X,\rho(g,h)\) & General transpose-covariant family, its output space, and the product-group representation on that space. & Def.~\ref{w4:def:tc}\\
\(S;\ S_{\lambda,\mu}\) & General block seed and a specified normalized block seed of type $(\lambda,\mu)$. A general $S$ need not be normalized. & Sec.~\ref{w4:tc:sec:blocks}; Eq.~\eqref{w4:eq:initial-block-seeds}\\
\(\mathcal E(U),\mathcal E_{\mathrm{fix}}, \mathcal E_2\) & Encoding maps: the physical two-query encoder, a fixed linear encoding, {and }the second-query encoding. Isometry or unitarity is stated separately. & Eq.~\eqref{w4:eq:two-query-circuit}; Sec.~\ref{w4:tc:sec:queries}; Eq.~\eqref{w4:eq:second-encoding}\\
\(\mathcal D_1,\mathcal D_2\) & Fixed first- and second-query decoding isometries on the stated representation subspaces. & Eqs.~\eqref{w4:eq:first-decoding}, \eqref{w4:eq:second-decoding}\\
\(G_0,G_1,G_2, G_{3\bar3},G_{{\mathrm{swap}}}\) & Fixed physical gates: the encoder gates, singlet--adjoint basis change, and catalyst restoration. {In Secs.~\ref{sm:consumable}--\ref{sm:generalstorage}, $G_j$ denotes the fixed gates of the general protocol.} & Eq.~\eqref{w4:eq:two-query-circuit}; Secs.~\ref{w4:sec:catalyst-preparation}, \ref{w4:sec:four-query-closure}{; Sec.~\ref{sm:consumable}}\\
\(J,\widehat J\) & Fixed flip on the stated carrier, and its unitary extension to the physical workspace. & Sec.~\ref{w4:tc:sec:flip}; Lemma~\ref{w4:lem:middle-flip}\\
\({C}_4(U),\ {C}_{q}{(U)}\) & \leavevmode{}Specific four-query catalytic processor and general {$q$-query }processing circuit, respectively. & Theorem~\ref{w4:thm:S1}; Eq.~\eqref{sm:consumedidentity}\\
\(p,q,q_\top ,q_{\dagger}\) & Preparation and processing {ordinary forward} query counts; $q_\top $ and $q_{\dagger}$ denote processing counts for transposition and inversion. & Sec.~\ref{sm:consumable}\\
\(|c_{\mathrm{in}}(U)\rangle,\ |c_{\mathrm{out}}(U)\rangle\) & \leavevmode{}Normalized input and output program states in the general lower-bound model; all auxiliary systems{, including retained environments,} are included in $\mathsf P$, and the output need not equal the input. & Eq.~\eqref{sm:consumedidentity}\\
{\(\Lambda_f(U,\cdot);\ F_\top,F_\dagger\)} & {Completely positive processing-query response, with $\Lambda_f(U,I)=qI$. The Hermitian matrices $F_\top,F_\dagger$ represent the scalar program-state response; their traces are $q_\top-1$ and $q_\dagger+1$.} & {Eqs.~\eqref{sm:generalET}, \eqref{sm:generalEI}}\\
{\(L_j(M),\widetilde L_j(M)\)} & {Quadratic $3\times3$ slices of the fixed initial encoding with a program{ state prepared using two queries}, and their linear factorization through the classical adjugate.} & {Eqs.~\eqref{sm:encodingslices}, \eqref{sm:encodingfactorization}}\\
{\(M_{\min}\)} & {Smallest peak dimension outside the query qutrit throughout preparation and processing in the pure coherent fixed-causal $3+4$ catalytic model, including logical data stored outside the query register.} & {Sec.~\ref{sm:auxiliary}}\\
\(\delta_{ij},\epsilon_{ijk}\) & Kronecker delta and Levi--Civita tensor, with $\epsilon_{012}=1$. & Sec.~\ref{w4:sec:branches}\\
\bottomrule
\end{tabularx}
\endgroup
\clearpage

\section{Transposition algorithm and catalyst preparation}\label{w4:chap:construction}
Our construction separates a three-query preparation stage from a four-query processor that applies \(U^\top\) to the data and restores the catalyst for reuse. In Sec.~\ref{w4:sec:target}, we state the algorithm and explain its encode--flip--decode mechanism. Sec.~\ref{w4:sec:notation} introduces the notation and intermediate maps, and Sec.~\ref{w4:sec:catalyst-preparation} constructs the catalyst before the data arrive.

\subsection{Transposition algorithm}\label{w4:sec:target}

Let \(U\in SU(3)\) be the unknown unitary acting on the {system} qutrit,
whose Hilbert space \(V=\mathcal H_{\mathsf S}=\mathbb C^3\) has the
fixed computational basis \(\{|j\rangle:j=0,1,2\}\).
For matrices, superscripts \(\top\), \(*\), and \(\dagger\) denote
transpose, entrywise complex conjugation, and conjugate transpose,
respectively.
Let \(R(U)\) be the real orthogonal \(8\times8\) matrix of the
conjugation action \(X\mapsto UXU^\dagger\) in a fixed Hilbert--Schmidt
orthonormal basis of traceless Hermitian \(3\times3\) matrices.
For this matrix, write
\[
|R(U)\rangle\!\rangle
=\sum_{s=0}^7(R(U)|s\rangle)\otimes|s\rangle,
\]
where both ket factors use the same fixed eight-dimensional coordinate
basis. The program space and the processing memory are
\[
\mathcal H_{\mathsf P}=\mathbb C\oplus(\mathbb C^8\otimes\mathbb C^8),
\qquad
\mathcal H_{\mathsf A}=\mathcal H_{\mathsf P}\oplus\mathbb C^{25}.
\]
Define the catalyst used in the construction by
\begin{equation}\label{w4:eq:catalyst-specific}
|c_{4/5}(U)\rangle_{\mathsf P}
=\frac1{\sqrt5}\oplus\frac1{\sqrt{10}}|R(U)\rangle\!\rangle,
\end{equation}
whose scalar component lies in the trivial carrier \(\mathbb C\).
Write \(|c_{4/5}(U)\rangle_{\mathsf A}\) for its image under the natural
first-summand inclusion
\(\mathcal H_{\mathsf P}\hookrightarrow\mathcal H_{\mathsf A}\).
In the main text we write $|c(U)\rangle$ for $|c_{4/5}(U)\rangle$,
and likewise suppress the subscript in its polynomial extension.
The processor acts on \(\mathcal H_{\mathsf S}\otimes\mathcal H_{\mathsf A}\), the joint Hilbert space of the {system} qutrit and this 90-dimensional memory.

\begin{wtheorem}[Four-query catalytic transposition algorithm]\label{w4:thm:S1}
There exists a circuit \({C}_4(U)\), composed of fixed unitary gates and
four forward processing queries to \(U\), such that {for }every \(U\in SU(3)\){,}
\begin{equation}\label{w4:eq:w4-target}
{C}_4(U)\bigl({I_{\mathsf S}}\otimes|c_{4/5}(U)\rangle_{\mathsf A}\bigr)
=U^{\top}\otimes|c_{4/5}(U)\rangle_{\mathsf A},
\end{equation}
where the catalyst \(|c_{4/5}(U)\rangle_{\mathsf A}\) can be prepared
deterministically using fixed unitary gates and three forward queries to \(U\).
\end{wtheorem}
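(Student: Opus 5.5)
The proof splits into the two claims of Theorem~\ref{w4:thm:S1}: the three-query preparation of $|c_{4/5}(U)\rangle_{\mathsf A}$, and the four-query processor satisfying Eq.~\eqref{w4:eq:w4-target}. The processor will be assembled as in Eq.~\eqref{eq:closure-main}, $C_4(U)=(I_{\mathsf S}\otimes G_{\mathrm{swap}})\mathcal E(U^\dagger)^\dagger\widehat J\mathcal E(U)$. Since $\mathcal E(U^\dagger)^\dagger=G_0^\dagger(U\otimes I)G_1^\dagger(U\otimes I)G_2^\dagger$ uses only forward queries, the circuit has four forward queries. The identity then reduces to three facts, and we chain them exactly as in Eqs.~\eqref{eq:encoder-endpoint-main}--\eqref{eq:uncompute-main}:
\begin{itemize}
\item[(i)] the encoder endpoint identity \eqref{eq:encoder-endpoint-main};
\item[(ii)] the flip identity \eqref{eq:flip-main};
\item[(iii)] $G_{\mathrm{swap}}|c(U^\dagger)\rangle=|c(U)\rangle$.
\end{itemize}
Fact (iii) is immediate, because $R(U^\dagger)=R(U)^\top$, since $R$ is real orthogonal, and swapping the two $\mathbb C^8$ factors transposes $|R\rangle\!\rangle$.

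\textbf{Preparation.} Applying Eq.~\eqref{eq:wedge} to $U\otimes U$ on the range of $A_{\bar3}^{3,3}$ implements $\det(U)U^*=U^*$ with two queries, and a third query gives $U$. Starting from the fixed seed $|c_{4/5}(I)\rangle$, which is the singlet plus a normalized adjoint Bell vector, we embed it through the decomposition $V\otimes V^*\simeq 1\oplus 8$ using a fixed basis change $G_{3\bar3}$. We then act with $U\otimes U^*$ on one half, which is $1\oplus R(U)$ on one factor, and map back. This yields $\frac1{\sqrt5}\oplus\frac1{\sqrt{10}}(R(U)\otimes I)|I_8\rangle\!\rangle=|c_{4/5}(U)\rangle$. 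All remaining steps are fixed unitary dilations of isometries.

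\textbf{Encoder and flip via transpose covariance.} For (i) and (ii) I would use the covariance relation \eqref{eq:covariance-main}: a family is fixed by its seed via $K(U)=\rho(U,I)K(I)U^\top$.

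First, I check that $U\mapsto I\otimes|c(U)\rangle$ is transpose covariant under a suitable $\rho$. For $g,h\in SU(3)$ we have $|c(gUh^\top)\rangle=(1\oplus R(g)\otimes R(h^*))|c(U)\rangle$ up to the chosen basis conventions.

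Second, each query is right-multiplication-compatible, so the circuit output $\mathcal E(U)(I\otimes|c(U)\rangle)$ is covariant with an output representation determined by Clebsch--Gordan branching: $3\otimes 8\to 3\oplus\bar6\oplus15$, then a second query into $\mathcal M$ of Eq.~\eqref{eq:middle-main}. Define $\Phi(U)$ as the weighted direct sum of the normalized covariant components of types $(3,6)$, $(\bar6,\bar3)$ and $(3,\bar3)$, with weight $\alpha=1/5$. By uniqueness of covariant families, (i) then reduces to matching seeds at $U=I$. I would solve this seed-matching with $G_1$, $G_2$ chosen as fixed recoupling isometries. The first recoupling forces the singlet:adjoint weight $1:4$, which is exactly $a=4/5$; the second produces $\Phi(I)$. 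I would then extend the resulting isometries on the 90- and 30-dimensional memories to unitaries on $\mathcal H$.

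For (ii), define $J$ blockwise to exchange the left and right actions: $V\otimes\operatorname{Sym}^2V\leftrightarrow\operatorname{Sym}^2V^*\otimes V^*$, plus the corresponding conjugate-swap on $V\otimes V^*$. The map $U\mapsto J\Phi(U)$ is then covariant with the representation that $\Phi(U^\dagger)U^\top$ carries. Checking equality at $U=I$ suffices, and the equal weights of the $(3,6)$ and $(\bar6,\bar3)$ blocks make the seeds agree.

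\textbf{Main obstacle.} I expect the hardest step to be the seed matching in (i). One must find $U$-independent $G_0,G_1,G_2$ such that two queries carry the seed $I\otimes|c(I)\rangle$ exactly onto $A_{\mathcal M}^{\mathcal H}\Phi(I)$ with no leakage into unwanted irreps (e.g.\ $15$, $\overline{15}$, $27$). This requires that tensor--Hom duality yields no nonzero seed in those branches, and that the norms work out, which is where $a=4/5$ and $\alpha=1/5$ are pinned down. I would first compute the branch multiplicities and the norm constraints symbolically at $U=I$. Only afterwards would I fix phases so that all the recoupling maps are isometries.
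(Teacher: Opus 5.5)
Your architecture matches the paper's. The three-query preparation via $G_{3\bar3}$ and the Levi--Civita realization of $U^*$ are the same. So are the reduction to (i)--(iii), the chaining through $\mathcal E(U^\dagger)^\dagger$, the swap restoration, and the flip argument via covariance plus seed invariance. One correction to the flip: on $V\otimes V^*$ it must be $-\mathrm{SWAP}$, since $\mathrm{SWAP}\,A_{\bar3}^{3,3}=-A_{\bar3}^{3,3}$; your seed check at $U=I$ would catch this.

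The genuine gap is step (i), which is the substance of the theorem. You describe what you would compute, but you do not show that fixed $G_0,G_1,G_2$ with $\mathcal E(U)(I_{\mathsf S}\otimes|c(U)\rangle_{\mathsf A})=A_{\mathcal M}^{\mathcal H}\Phi(U)$ exist. Your claim that the encoder output is automatically covariant because ``each query is right-multiplication-compatible'' is also not right as stated. A query preserves transpose covariance only when it acts on a qutrit carrying the right action $h^*$, which it converts to $g$ via $(gUh^\top)h^*=gU$. The data qutrit carries $g^*$, so it cannot be queried directly. Each fixed gate must therefore intertwine the representations on the occupied subspace and also expose such an $h^*$ qutrit. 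These are design constraints on the gates, not consequences of the circuit shape.

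The paper fills this gap with an explicit construction.
\begin{itemize}
\item $G_0$ transposes the second catalyst factor. It then couples the data's $\bar3$ to the first adjoint factor, giving blocks $(\bar3,1)\oplus(\bar3,8)\oplus(6,8)\oplus(\overline{15},8)$ with amplitudes $1/\sqrt5,1/\sqrt{10},1/\sqrt5,1/\sqrt2$.
\item It carves the first query qutrit out of the right factor: through $1\oplus8=\bar3\otimes3$ for the first two blocks, $8\hookrightarrow\bar3\otimes\bar6$ for $(6,8)$, and $8\hookrightarrow\bar3\otimes3$ for $(\overline{15},8)$.
\item These right memories are chosen so that the selection rule $\lambda_{\mathrm{out}}\in\operatorname{Irr}(3\otimes\lambda)\cap\operatorname{Irr}(\bar3\otimes\mu)$ kills $10$, $\overline{10}$ and $27$ at the first query and $15$ at the second. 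Note that $\overline{15}$ is not an unwanted branch, as your obstacle list suggests: it carries half the initial weight.
\item The first query produces two equivalent $(8,3)$ copies with the same seed. A fixed multiplicity rotation $O_{\mathrm{merge}}$ merges their amplitudes $1/\sqrt{10}$ and $1/\sqrt2$ into one copy of amplitude $\sqrt{3/5}$.
\item The recouplings $O_1,O_2$ have fixed vectors $(1/\sqrt5,1/\sqrt{10})$ and $(\sqrt{3/5},1/\sqrt5)$. Together with explicit decoding phases such as $-\bigl(A_8^{3,6}\bigr)^\dagger$, they make the final seed exactly $\Phi_{1/5}(I)$.
\end{itemize}
This last step is where $a=4/5$ and $\alpha=1/5$ enter, because $J$ fixes the seed only if the $(3,6)$ and $(\bar6,\bar3)$ weights are equal. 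Without these explicit choices, or another argument that a leak-free two-query path with matching seed exists, (i) remains unproved.
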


\noindent\textbf{Proof sketch of Theorem~\ref{w4:thm:S1}.}
We first encode the data together with the catalyst using two forward queries. The fixed flip transforms the encoded {map} \(\Phi_{1/5}(U)\) into \(\Phi_{1/5}(U^\dagger)U^\top\). This {is the encoder map at} \(U^\dagger\){ composed on the right with \(U^\top\)}. We then reverse the encoder at \(U^\dagger\): its query gates become \(U\), so two more forward queries recover {the map \(U^\top\otimes|c_{4/5}(U^\dagger)\rangle_{\mathsf A}\)}. A final swap of the catalyst's two adjoint factors restores \(|c_{4/5}(U)\rangle_{\mathsf A}\). The detailed processing construction appears in Sec.~\ref{w4:chap:processing}, with exact symbolic verification in Sec.~\ref{sm:symbolic}.

\subsection{Notation and intermediate maps}\label{w4:sec:notation}

Let \(g,h\in SU(3)\).
We write \(V^*\) for the conjugate Hilbert space of \(V\),
which carries \(g^*\) in the corresponding conjugate computational
basis.
Vectors in \(V^*\) are written as kets; the inner product identifies
\(V^*\) linearly with the linear dual of \(V\).
For a matrix \(M:\mathbb C^{d_2}\to\mathbb C^{d_1}\), with positive
integers \(d_1,d_2\), define the unnormalized vectorization and its bra by
\[
|M\rangle\!\rangle
=\sum_{k=0}^{d_2-1}(M|k\rangle)\otimes|k\rangle,
\qquad
\langle\!\langle M|=(|M\rangle\!\rangle)^\dagger,
\]
where the second ket \(|k\rangle\) belongs to an independent copy of the fixed input basis, and
\(\langle\!\langle M|M'\rangle\!\rangle=\operatorname{tr}(M^\dagger M')\).
Subscripts specify scalar components in these bases:
\(x_j=\langle j|x\rangle\) for a ket \(|x\rangle\), and
\(M_{jk}=\langle j|M|k\rangle\) for a matrix.
For maps written with multiple tensor indices, adjacent indices form
context-defined multi-indices, commas may distinguish factor groups,
and a final semicolon separates output indices from input columns.
Ket components have no input column, and dummy-index contractions
are written with explicit sums.
Denote the traceless Hermitian basis fixed in Sec.~\ref{w4:sec:target}
by \(T_0,\ldots,T_7\); it also spans the traceless complex
\(3\times3\) matrices. In this basis, the matrix introduced above is
\begin{equation}\label{w4:eq:adjoint-matrix}
R(U)=\Bigl(\sum_{r=0}^7|r\rangle\langle\!\langle T_r|\Bigr)
(U\otimes U^*)
\Bigl(\sum_{s=0}^7|T_s\rangle\!\rangle\langle s|\Bigr).
\end{equation}
The coordinate kets \(|r\rangle,|s\rangle\) lie in \(\mathbb C^8\),
whereas \(|T_r\rangle\!\rangle\in V\otimes V^*\) are nine-dimensional
matrix kets. Thus \(R(U)\in\mathbb R^{8\times8}\), and
\(|R(U)\rangle\!\rangle\in\mathbb C^8\otimes\mathbb C^8\).
The coefficients in Eq.~\eqref{w4:eq:adjoint-matrix} equal
\(\operatorname{tr}(T_r^\dagger UT_sU^\dagger)\), and this representation
satisfies \(R(U)^\top R(U)=I_8\) and \(R(U^\dagger)=R(U)^\top\).

The construction acts in the finite workspace
\begin{equation}\label{w4:eq:workspace}
\mathcal H=\mathcal H_{\mathsf S}\otimes\mathcal H_{\mathsf A},
\qquad(\dim\mathcal H_{\mathsf S},\dim\mathcal H_{\mathsf A})=(3,90),
\end{equation}
where \(\mathsf S\) is the {system} qutrit and \(\mathsf A\) is the complete
processing memory. Each query applies \(U_{\mathsf S}\otimes I_{\mathsf A}\).
Between queries, unitary gates independent of \(U\) act jointly on
\(\mathsf S\mathsf A\) to encode and decode the intermediate states.
Upright letters in ket or operator subscripts label systems or tensor
factors; numeric subscripts on \(I\) specify matrix sizes. The carrier
of an unsubscripted identity is determined by the composition.

For positive integers \(d_1,d_2\), define the fixed exchange
\(\mathrm{SWAP}_{d_1,d_2}:\mathbb C^{d_1}\otimes\mathbb C^{d_2}
\to\mathbb C^{d_2}\otimes\mathbb C^{d_1}\) by
\begin{equation}\label{w4:eq:swap}
\mathrm{SWAP}_{d_1,d_2}(|v\rangle\otimes|w\rangle)
=|w\rangle\otimes|v\rangle,
\qquad |v\rangle\in\mathbb C^{d_1},\quad |w\rangle\in\mathbb C^{d_2}.
\end{equation}
We write \(\mathrm{SWAP}=\mathrm{SWAP}_{3,3}\) for two qutrit factors;
in general \(\mathrm{SWAP}_{d_1,d_2}^\dagger=
\mathrm{SWAP}_{d_2,d_1}\).
Let \(\operatorname{Sym}^2V\) and
\(\operatorname{Sym}^2(V^*)\) be the six-dimensional symmetric
tensor subspaces of \(V^{\otimes2}\) and \((V^*)^{\otimes2}\), respectively.
Let \(A_6^{3,3}:\operatorname{Sym}^2V\hookrightarrow V\otimes V\)
be the inclusion in the fixed orthonormal symmetric basis. Its range
projector and the induced action of \(M\in\mathbb C^{3\times3}\) are
\[
\Pi_{\operatorname{Sym}^2}
=A_6^{3,3}\bigl(A_6^{3,3}\bigr)^{\dagger}
=\frac{I_9+\mathrm{SWAP}}2,\qquad
\operatorname{Sym}^2M
=\bigl(A_6^{3,3}\bigr)^{\dagger}(M\otimes M)A_6^{3,3}.
\]
The conjugate inclusion \(A_{\bar6}^{\bar3,\bar3}:=\bigl(A_6^{3,3}\bigr)^*\) and its range projector have the same real coefficients on
\(\operatorname{Sym}^2(V^*)\subset V^*\otimes V^*\).

We use the real Levi--Civita isometry
\(A_{\bar3}^{3,3}:V^{*}\to V^{\otimes2}\), defined by
\[
A_{\bar3}^{3,3}=\bigl[(|12\rangle-|21\rangle)\langle0|
+(|20\rangle-|02\rangle)\langle1|
+(|01\rangle-|10\rangle)\langle2|\bigr]/\sqrt2,
\]
which satisfies
\begin{equation}\label{w4:eq:iota}
\bigl(A_{\bar3}^{3,3}\bigr)^{\dagger}A_{\bar3}^{3,3}=I,\qquad
\mathrm{SWAP}A_{\bar3}^{3,3}=-A_{\bar3}^{3,3},\qquad
(U\otimes U)A_{\bar3}^{3,3}=A_{\bar3}^{3,3} U^*.
\end{equation}
In formulas for data maps, we apply this fixed coefficient matrix to the computational coordinates of \(V\).
We choose the \(18+18+9=45\)-dimensional intermediate space
\begin{equation}\label{w4:eq:middle-space}
\mathcal M_{4}=(V\otimes\operatorname{Sym}^2V)
\oplus(\operatorname{Sym}^2(V^*)\otimes V^*)
\oplus(V\otimes V^*).
\end{equation}
All three summands use fixed computational bases; the starred carriers have conjugate representation actions.
We store the midpoint's 45 intrinsic coordinates in the 270-dimensional workspace \(\mathcal H\) through a fixed isometry \(A_{\mathcal M_4}^{\mathcal H}:\mathcal M_{4}\to\mathcal H\), with \(\bigl(A_{\mathcal M_4}^{\mathcal H}\bigr)^\dagger A_{\mathcal M_4}^{\mathcal H}=I_{\mathcal M_{4}}\).
This \(270\times45\) embedding fixes the relation between representation order and the physical gate coordinates used below and in the accompanying notebook.
{The main text abbreviates $\mathcal M_4$ and $A_{\mathcal M_4}^{\mathcal H}$
as $\mathcal M$ and $A_{\mathcal M}^{\mathcal H}$, respectively.}
We later complete these isometries to unitaries on the workspace.
Define three maps from \(V\) into the first, second, and third
summands of \(\mathcal M_{4}\), respectively. Their formulas use the
fixed coefficient matrices of \(\bigl(A_6^{3,3}\bigr)^{\dagger}\), its conjugate, and
\(A_{\bar3}^{3,3}\) in the stated carrier bases:
\begin{equation}\label{w4:eq:phi-components}
\begin{aligned}
\Phi_{3,6}(U)&=\frac1{\sqrt2}\Bigl(U\otimes\bigl(A_6^{3,3}\bigr)^{\dagger}\Bigr)
  (|I\rangle\!\rangle\otimes U^{\top}),\\
\Phi_{\bar6,\bar3}(U)&=\frac1{\sqrt2}\Bigl(\bigl(A_{\bar6}^{\bar3,\bar3}\bigr)^{\dagger}\otimes U^\dagger\Bigr)
  (I\otimes|I\rangle\!\rangle),\\
\Phi_{3,\bar3}(U)&=(I\otimes U^\dagger)A_{\bar3}^{3,3}.
\end{aligned}
\end{equation}
The contractions with these symmetric Clebsch--Gordan (CG) maps~\cite[App.~A.2]{grinko} give the
first two maps their \(18\)-dimensional output coordinates.
For \(0\le\alpha\le\tfrac12\), define their weighted direct sum
\(\Phi_\alpha(U):V\longrightarrow\mathcal M_{4}\) by
\begin{equation}\label{w4:eq:phi-family}
\Phi_\alpha(U)=\sqrt\alpha\,\Phi_{3,6}(U)
\oplus\sqrt\alpha\,\Phi_{\bar6,\bar3}(U)
\oplus\sqrt{1-2\alpha}\,\Phi_{3,\bar3}(U).
\end{equation}
{The main text uses the abbreviation $\Phi(U)=\Phi_{1/5}(U)$.}
Each component is an isometric embedding. The unitary oracle
factors reduce the isometry checks for the first two maps to \(U=I\).
Since \(\Phi_{\bar6,\bar3}(I)=\mathrm{SWAP}_{3,6}\Phi_{3,6}(I)\),
both have the same Gram matrix, and
\[
\Phi_{3,6}(I)^\dagger\Phi_{3,6}(I)
=\frac12(\langle\!\langle I|\otimes I)(I\otimes \Pi_{\operatorname{Sym}^2})
(|I\rangle\!\rangle\otimes I)=\frac{3I_3+I_3}{4}=I_3.
\]

For the third component, isometry follows directly from
Eq.~\eqref{w4:eq:iota}.
The squared weights in Eq.~\eqref{w4:eq:phi-family} sum to one, so
\(\Phi_\alpha(U):V\hookrightarrow\mathcal M_{4}\) is also an isometric
embedding and sends every normalized \(|\psi\rangle\) to a normalized
intermediate state. The physical realization of \(\Phi_\alpha(U)\) is
\(A_{\mathcal M_4}^{\mathcal H}\Phi_\alpha(U):V\hookrightarrow\mathcal H\).

Transpose covariance reduces the encoder construction to matching seeds at \(U=I\): two covariant families with the same output representation and seed agree for every \(U\).
In Sec.~\ref{w4:chap:covariance}, we derive this reconstruction principle and use representation multiplicities to count the allowed seed coefficients.
Fixed encodings and queries then determine how the output representation evolves.
These ingredients give the encoder for \(\Phi_{1/5}(U)\), the fixed flip, and the reversed decoder in Sec.~\ref{w4:chap:processing}, including their tensor embeddings, phases, and unitary completions.
The resulting circuit is verified by exact symbolic computation in Sec.~\ref{sm:symbolic}.

\subsection{Catalyst preparation}\label{w4:sec:catalyst-preparation}

Our preparation method produces the catalyst family
\begin{equation}\label{w4:eq:catalyst}
|c_a(U)\rangle_{\mathsf P}
=\sqrt{1-a}
\oplus\sqrt{\frac a8}|R(U)\rangle\!\rangle,
\qquad 0\le a\le1.
\end{equation}
We write \(|c_a(U)\rangle_{\mathsf A}\) for the image under the first-summand inclusion into the processing memory.
The singlet--adjoint decomposition of \(U\otimes U^*\) produces the two components of \(|c_a(U)\rangle_{\mathsf P}\), so we can prepare the catalyst before the data arrive using temporary auxiliary systems.

\begin{wlemma}[Catalyst-family preparation]\label{w4:lem:catalyst-preparation}
For every known \(a\in[0,1]\), the state \(|c_a(U)\rangle_{\mathsf P}\)
can be prepared deterministically using at most three forward queries
to \(U\in SU(3)\) and known operations independent of \(U\).
\end{wlemma}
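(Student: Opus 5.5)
The plan is to realize the catalyst as the image of its known seed $|c_a(I)\rangle_{\mathsf P}$ under a single application of $U\otimes U^*$. This uses the decomposition $U\otimes U^*\simeq 1\oplus R(U)$ quoted in the main text: the trivial summand carries the scalar $\sqrt{1-a}$ and the adjoint summand rotates one half of the adjoint Bell vector. The conjugate factor $U^*$ will be produced with two forward queries through the Levi--Civita isometry of Eq.~\eqref{w4:eq:iota}, and the factor $U$ with the third query.

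\emph{Seed and encoding.} Work in a register $\mathsf B\otimes\mathsf C$ with $\mathsf B= V\otimes V^*$ (nine-dimensional) and $\mathsf C=\mathbb C^9=\mathbb C^8\oplus\mathbb C|\star\rangle$. Prepare the known state
\[
|\sigma\rangle=\sqrt{1-a}\,\tfrac{1}{\sqrt3}|I_3\rangle\!\rangle\otimes|\star\rangle+\sqrt{\tfrac a8}\sum_{s=0}^7|T_s\rangle\!\rangle\otimes|s\rangle .
\]
This state is normalized, because the $T_s$ are Hilbert--Schmidt orthonormal and orthogonal to $I_3$. Its image under the fixed isometry
\[
\iota=|\star\rangle\tfrac{1}{\sqrt3}\langle\!\langle I_3|\oplus\sum_r|r\rangle\langle\!\langle T_r|\;:\;\mathsf B\to\mathbb C\oplus\mathbb C^8
\]
applied to $\mathsf B$ (with the $\mathsf C$ label kept) is exactly $|c_a(I)\rangle_{\mathsf P}$ after relabelling $|\star\rangle\otimes|\star\rangle$ as the scalar summand. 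Next apply the fixed isometry $I_V\otimes A_{\bar3}^{3,3}$ to $\mathsf B$, which sends $V^*$ into $V\otimes V$. This gives three qutrits $\mathsf S_1\mathsf S_2\mathsf S_3$, and they are embedded unitarily into a larger memory initialized in $|0\rangle$.

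\emph{Queries.} Using fixed swaps, bring each of the three qutrits in turn into the query register and apply $U$. The result is $U^{\otimes3}$ on $\mathsf S_1\mathsf S_2\mathsf S_3$, which costs three forward queries. By Eq.~\eqref{w4:eq:iota}, $(U\otimes U\otimes U)(I\otimes A_{\bar3}^{3,3})=(I\otimes A_{\bar3}^{3,3})(U\otimes U^*)$; the $\det U$ factor of Eq.~\eqref{eq:wedge} equals $1$ on $\SU(3)$. Hence the state stays in the range of $I\otimes A_{\bar3}^{3,3}$, and applying its adjoint, completed to a fixed unitary, returns exactly $(U\otimes U^*)_{\mathsf B}|\sigma\rangle$ with the auxiliary register reset to $|0\rangle$.

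\emph{Decoding.} Since $(U\otimes U^*)|I_3\rangle\!\rangle=|UU^\dagger\rangle\!\rangle=|I_3\rangle\!\rangle$, the scalar branch is unchanged. Since $\operatorname{Ad}_U$ preserves the traceless subspace, Eq.~\eqref{w4:eq:adjoint-matrix} gives $\iota\,(U\otimes U^*)\,|T_s\rangle\!\rangle=\sum_r R(U)_{rs}|r\rangle$. Applying $\iota$ (a fixed unitary, since $\dim\mathsf B=9=1+8$) and discarding the trivially correlated label $\star$ by a fixed unitary relabelling therefore yields $\sqrt{1-a}\oplus\sqrt{a/8}\,|R(U)\rangle\!\rangle=|c_a(U)\rangle_{\mathsf P}$, prepared deterministically. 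For the memory inclusion, pad with $\mathbb C^{25}$.

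\emph{Main obstacle.} The delicate step is exactness of the decoding. The two post-query isometry adjoints act deterministically only if the state lies exactly in their ranges for every $U$. This rests on the intertwining identity of Eq.~\eqref{w4:eq:iota}, including its phase, and on the block-diagonal form of $U\otimes U^*$ with respect to $\mathrm{span}\{|I_3\rangle\!\rangle\}\oplus W_8$. So I would first verify carefully that the conjugate-carrier conventions for $V^*$ and $|T_s\rangle\!\rangle$ match those in Eq.~\eqref{w4:eq:adjoint-matrix}; the remaining steps are bookkeeping of fixed unitary completions.
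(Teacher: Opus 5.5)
Your proof is correct and follows the paper's route. Your $\iota$ is the paper's $G_{3\bar3}^\dagger$, so $U\otimes U^*$ acts as $1\oplus R(U)$ on one half of the seed $|c_a(I)\rangle_{\mathsf P}$, with $U^*$ obtained from two queries through the Levi--Civita isometry and $U$ from the third. The only difference is bookkeeping: you carry the reference as $\mathbb C^8\oplus\mathbb C|\star\rangle$, whereas the paper uses an $8$-dimensional reference and identifies the resulting $72$-dimensional space with $\mathcal H_{\mathsf P}\oplus\mathbb C^7$.
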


\begin{wproof}
Prepending the column \(|I/\sqrt3\rangle\!\rangle\) to the rightmost matrix in Eq.~\eqref{w4:eq:adjoint-matrix} gives a unitary \(G_{3\bar3}\in\mathrm U(9)\) satisfying
\begin{equation}\label{w4:eq:preparation-decomposition}
G_{3\bar3}^\dagger(U\otimes U^*)G_{3\bar3}=1\oplus R(U).
\end{equation}
We first prepare the fixed input state from Eq.~\eqref{w4:eq:catalyst},
\begin{equation}\label{w4:eq:preparation-seed}
|c_a(I)\rangle_{\mathsf P}
=\sqrt{1-a}
\oplus\sqrt{\frac a8}|I_8\rangle\!\rangle
\end{equation}
using operations independent of \(U\); here \(|I_8\rangle\!\rangle/\sqrt8\) is the normalized Bell state on \(\mathbb C^8\otimes\mathbb C^8\).
{We} apply the group action to the representation factor and preserve its reference{, using} the representation-space {framework} of Ref.~\cite{hayashi2026}:
\begin{equation}\label{w4:eq:preparation-action}
\bigl[1\oplus(R(U)\otimes I_8)\bigr]|c_a(I)\rangle_{\mathsf P}
=|c_a(U)\rangle_{\mathsf P}.
\end{equation}
Tensoring Eq.~\eqref{w4:eq:preparation-decomposition} with the identity on an
eight-dimensional reference gives \((1\oplus R(U))\otimes I_8\).
A fixed basis permutation identifies this \(9\times8=72\)-dimensional
space with \(\mathcal H_{\mathsf P}\oplus\mathbb C^7\), where the action is
\[
\bigl[1\oplus(R(U)\otimes I_8)\bigr]\oplus I_7.
\]
Fixed gates and \(U\otimes U^*\) therefore implement Eq.~\eqref{w4:eq:preparation-action}.
The Levi--Civita construction in Eq.~\eqref{w4:eq:iota} realizes \(U^*\) with two forward queries on two encoded qutrits~\cite{miyazaki}.
One further query supplies \(U\), completing the three-query preparation of \(|c_a(U)\rangle_{\mathsf P}\).
For \(a=0\), the known state in the trivial summand requires no query.
\end{wproof}

Taking \(a=4/5\) supplies the catalyst for Theorem~\ref{w4:thm:S1}.
Once it is prepared, processing takes place entirely in the workspace
of Eq.~\eqref{w4:eq:workspace}.

\section{Transpose-covariant maps}\label{w4:chap:covariance}

\phantomsection\label{w4:tc:representation-labels}
\noindent\textbf{Notation.} We use dimensions as shorthand for the irreducible representations appearing here, with a bar denoting the conjugate representation~\cite{slansky1981}.
Thus \(3\) denotes the fundamental representation \(g\mapsto g\), \(\bar3\) its complex conjugate \(g\mapsto g^*\), and \(6\) and \(\bar6\) their respective symmetric squares.
The label \(1\) denotes the trivial representation, while \(8\) denotes the adjoint representation on traceless \(3\times3\) matrices, with action \(X\mapsto gXg^\dagger\).
Its matrix in the basis \(T_0,\ldots,T_7\) is \(R(g)\) from Eq.~\eqref{w4:eq:adjoint-matrix}.

Transpose covariance lets us design one circuit for every unknown \(U\) by relating the map at different parameters.
We apply equivariance, Schur multiplicity, and tensor--Hom duality~\cite{etingof2011} directly to the coherent linear maps of the construction.
At the level of Choi operators and superchannels, Chiribella et al. describe the seed--stabilizer structure of covariant quantum networks~\cite{covariant2009}, and Quintino and Ebler derive two-sided symmetries and reconstruction from the identity for anti-homomorphic unitary transformations~\cite{quintinoebler}.
We formulate this covariance directly for linear maps, retaining amplitudes and phases. The starting point is the identity
\begin{equation}
  (gUh^{\top})^{\top}g^*=hU^{\top}.
  \label{w4:tc:transpose-motivation}
\end{equation}
The input action \(g^*\) cancels \(g^\top\) and leaves the output action \(h\).
We extend this relation to larger output spaces and, by analogy with covariant state estimation~\cite{chiribella2004}, call the value at \(U=I\) the \emph{seed map}, which determines the full family.
We can therefore match seeds while fixed encodings and queries change the output representation.

\subsection{Transpose-covariant maps}\label{w4:tc:sec:definition}

\begin{wdefinition}[Transpose covariance]\label{w4:def:tc}
Let \(V=\mathbb C^3\), and let the output space \(\mathcal X\) carry a unitary representation \(\rho\) of \(SU(3)\times SU(3)\). A family of linear maps \(K(U):V\to\mathcal X\) satisfying
\begin{equation}
  \boxed{K(gUh^{\top})g^*=\rho(g,h)K(U)}
  \qquad\text{for all }g,h,U\in SU(3),
  \label{w4:tc:definition}
\end{equation}
is called \emph{transpose covariant} with respect to \(\rho\).
\end{wdefinition}

The fixed input action \(g^*\) links the families across circuit stages, while the output representation identifies the space occupied at each stage.
The intermediate maps of Theorem~\ref{w4:thm:S1} obey this covariance relation.

\begin{wlemma}[Transpose covariance of the intermediate maps]\label{w4:lem:tc-phi}
The maps \(\Phi_{3,6},\Phi_{\bar6,\bar3},\Phi_{3,\bar3}\) defined in Eq.~\eqref{w4:eq:phi-components} are transpose covariant with respect to
\begin{equation}
  \rho_1(g,h)=g\otimes\operatorname{Sym}^2h,\qquad
  \rho_2(g,h)=\operatorname{Sym}^2(g^*)\otimes h^*,\qquad
  \rho_3(g,h)=g\otimes h^*
\label{w4:tc:phi-actions}
\end{equation}
respectively.
Their weighted direct sum \(\Phi_\alpha\), defined in
Eq.~\eqref{w4:eq:phi-family}, is transpose covariant with respect to
\(\rho_1\oplus\rho_2\oplus\rho_3\), for every \(0\le\alpha\le1/2\).
\end{wlemma}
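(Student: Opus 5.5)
The plan is to verify the covariance relation \eqref{w4:tc:definition} directly for each of the three components in Eq.~\eqref{w4:eq:phi-components}. I will substitute \(U\mapsto gUh^\top\), multiply on the right by \(g^*\), and move every factor of \(g\) and \(h\) to the left until the expression becomes \(\rho_i(g,h)\Phi_{\lambda,\mu}(U)\). Three elementary tools suffice. The first is the transpose trick \((M\otimes I)|I\rangle\!\rangle=(I\otimes M^\top)|I\rangle\!\rangle\). The second is the intertwining relations \(\bigl(A_6^{3,3}\bigr)^\dagger(h\otimes h)=\operatorname{Sym}^2h\,\bigl(A_6^{3,3}\bigr)^\dagger\), its entrywise conjugate for \(A_{\bar6}^{\bar3,\bar3}\), and \((g\otimes g)A_{\bar3}^{3,3}=A_{\bar3}^{3,3}g^*\) from Eq.~\eqref{w4:eq:iota}. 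The third is unitarity in the forms \(g^\top g^*=(g^\dagger g)^*=I\) and \(g^\dagger g=I\). Since \(\det g=1\), no phase appears.

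For \(\Phi_{3,6}\), the right factor becomes \((gUh^\top)^\top g^*=hU^\top g^\top g^*=hU^\top\). The first factor of \(|I\rangle\!\rangle\) then carries \(gUh^\top\). The transpose trick rewrites \((gUh^\top\otimes I)|I\rangle\!\rangle\) as \((gU\otimes h)|I\rangle\!\rangle\). The two copies of \(h\) now sit on the two factors entering \(\bigl(A_6^{3,3}\bigr)^\dagger\), which converts them to \(\operatorname{Sym}^2h\). The result is \((g\otimes\operatorname{Sym}^2h)\Phi_{3,6}(U)\), matching \(\rho_1\).

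For \(\Phi_{\bar6,\bar3}\), I write \((gUh^\top)^\dagger=h^*U^\dagger g^\dagger\). The input factor \(g^*\) acts on the first tensor slot of \(I\otimes|I\rangle\!\rangle\). The \(g^\dagger\) on the third slot transfers to the second slot as \((g^\dagger)^\top=g^*\). The first two slots then carry \(g^*\otimes g^*\), which \(\bigl(A_{\bar6}^{\bar3,\bar3}\bigr)^\dagger\) turns into \(\operatorname{Sym}^2(g^*)\), while \(h^*\) remains on the last slot. This matches \(\rho_2\).

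For \(\Phi_{3,\bar3}\), I use \(A_{\bar3}^{3,3}g^*=(g\otimes g)A_{\bar3}^{3,3}\). The second factor becomes \(h^*U^\dagger g^\dagger g=h^*U^\dagger\), giving \((g\otimes h^*)\Phi_{3,\bar3}(U)\). This matches \(\rho_3\).

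Finally, \(\Phi_\alpha\) is a direct sum with \(U\)-independent weights, and \(\rho_1\oplus\rho_2\oplus\rho_3\) acts blockwise, so the relation holds summand by summand. The only step requiring care, and the one I expect to be the main obstacle, is the second component. There one must track which tensor slot each conjugated factor lands on, and check that the conjugate-basis conventions for \(V^*\) make \((g^\dagger)^\top=g^*\) act with the same real Clebsch--Gordan coefficients as \(A_6^{3,3}\). I would resolve this by writing the three indices explicitly with the multi-index convention of Sec.~\ref{w4:sec:notation}.
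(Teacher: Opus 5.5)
Your proposal is correct and follows essentially the same route as the paper: a direct check of each component using the transpose trick, symmetric compression through $(A_6^{3,3})^\dagger$ and its conjugate, and the Levi--Civita identity $A_{\bar3}^{3,3}g^*=(g\otimes g)A_{\bar3}^{3,3}$, followed by blockwise assembly of the weighted direct sum. The bookkeeping you flag for $\Phi_{\bar6,\bar3}$ matches the step the paper displays: $g^\dagger$ on the third slot moves through the Bell vector to $g^*$ on the second slot.
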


\begin{wproof}
We prove the covariance relations by moving the group actions through the Bell vector and the symmetric and antisymmetric embeddings.
For \(M\in\mathbb C^{3\times3}\), we use the transpose trick~\cite{bennett1996,werner2001} with the unnormalized Bell vector, together with symmetric projection and compression and the \(d=3\) antisymmetric identity of Eq.~\eqref{w4:eq:iota}~\cite{miyazaki}:
\begin{equation}
\begin{aligned}
  (M\otimes I)|I\rangle\!\rangle
    &=(I\otimes M^{\top})|I\rangle\!\rangle,\\
  \Pi_{\operatorname{Sym}^2}(M\otimes M)&=(M\otimes M)\Pi_{\operatorname{Sym}^2},\\
  \bigl(A_6^{3,3}\bigr)^{\dagger}(M\otimes M)
    &=\operatorname{Sym}^2M\,\bigl(A_6^{3,3}\bigr)^{\dagger},\\
  A_{\bar3}^{3,3} g^*&=(g\otimes g)A_{\bar3}^{3,3}.
\end{aligned}
\label{w4:tc:basic-identities}
\end{equation}
The transpose trick and symmetric compression give
\begin{align}
  \Phi_{3,6}(gUh^{\top})g^*
  &=\frac1{\sqrt2}\Bigl(gUh^{\top}\otimes \bigl(A_6^{3,3}\bigr)^{\dagger}\Bigr)
       \bigl(|I\rangle\!\rangle\otimes hU^{\top}\bigr)\notag\\
  &=\frac1{\sqrt2}\Bigl(gU\otimes \bigl(A_6^{3,3}\bigr)^{\dagger}\Bigr)
       (I\otimes h\otimes h)
       \bigl(|I\rangle\!\rangle\otimes U^{\top}\bigr)\notag\\
  &=\rho_1(g,h)\Phi_{3,6}(U),\qquad\text{and}
  \label{w4:tc:phi-one-proof}
\end{align}
\begin{align}
  \Phi_{\bar6,\bar3}(gUh^\top)g^*
  &=\frac1{\sqrt2}\Bigl(\bigl(A_{\bar6}^{\bar3,\bar3}\bigr)^{\dagger}\otimes h^*U^\dagger g^\dagger\Bigr)
       (g^*\otimes|I\rangle\!\rangle)\notag\\
  &=\frac1{\sqrt2}\Bigl(\bigl(A_{\bar6}^{\bar3,\bar3}\bigr)^{\dagger}\otimes h^*U^\dagger\Bigr)
       (g^*\otimes g^*\otimes I)
       (I\otimes|I\rangle\!\rangle)\notag\\
  &=\rho_2(g,h)\Phi_{\bar6,\bar3}(U).
  \label{w4:tc:phi-two-proof}
\end{align}
The Levi--Civita identity gives
\begin{equation}
  \Phi_{3,\bar3}(gUh^\top)g^*=(g\otimes h^*U^\dagger)A_{\bar3}^{3,3}=\rho_3(g,h)\Phi_{3,\bar3}(U).
  \label{w4:tc:phi-three-proof}
\end{equation}
Taking the same fixed weighted direct sum of these identities gives the covariance of \(\Phi_\alpha\).
\end{wproof}

\subsection{Seed reconstruction and normalization}\label{w4:tc:sec:seed}

A transpose-covariant family is determined by its value at \(I\).
Indeed, the pair \((g,h)=(U,I)\) takes \(I\) to any \(U\), so covariance reconstructs \(K(U)\) from \(K(I)\).
The seed must also be consistent with the pairs \((g,g^*)\), which leave \(I\) unchanged because \(g(g^*)^\top=I\).
These pairs therefore impose an intertwining constraint on \(K(I)\).
The reconstruction and constraint below express the seed--stabilizer principle for covariant instruments~\cite[Theorem~2]{covariant2009} and identity-based reconstruction for anti-homomorphic unitary transformations at the channel level~\cite[Lemma~1]{quintinoebler} in terms of coherent maps.

\begin{wproposition}[Seed reconstruction and constraint]\label{w4:lem:tc-seed}
If \(K\) is transpose covariant with respect to a unitary representation \(\rho\), then it satisfies the \emph{seed reconstruction identity}
\begin{equation}
  \boxed{K(U)=\rho(U,I)K(I)U^{\top},}
  \label{w4:tc:seed-reconstruction}
\end{equation}
and the \emph{seed constraint}
\begin{equation}
  \boxed{K(I)g^*=\rho(g,g^*)K(I).}
  \label{w4:tc:seed-constraint}
\end{equation}
Consequently, two covariant families with the same output representation agree for every \(U\) whenever their seeds at \(I\) agree.
\end{wproposition}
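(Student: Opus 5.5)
The plan is to specialize the defining relation \eqref{w4:tc:definition} at suitable pairs $(g,h)$ and use only that $\rho$ is a representation of $\SU(3)\times\SU(3)$ and that each group element is unitary.

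First, for the reconstruction identity, I will set $(g,h)=(U,I)$ and evaluate the family at the base point $I$. Since $gIh^\top=U$, covariance reads $K(U)U^*=\rho(U,I)K(I)$. Because $U$ is unitary, $(U^*)^{-1}=(U^*)^\dagger=U^\top$. Right-multiplying by $U^\top$ then gives $K(U)=\rho(U,I)K(I)U^\top$, which is \eqref{w4:tc:seed-reconstruction}.

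Second, for the seed constraint, I will take base point $U=I$ and the pair $(g,g^*)$, which lies in $\SU(3)\times\SU(3)$ because $g^*\in\SU(3)$. Then $gI(g^*)^\top=gg^\dagger=I$, so covariance reduces directly to $K(I)g^*=\rho(g,g^*)K(I)$, which is \eqref{w4:tc:seed-constraint}.

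Finally, the uniqueness statement follows immediately from the reconstruction identity. If $K$ and $K'$ are covariant for the same $\rho$ and $K(I)=K'(I)$, then for every $U$ we have $K(U)=\rho(U,I)K(I)U^\top=\rho(U,I)K'(I)U^\top=K'(U)$.

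There is no real obstacle here. The only points needing care are confirming that the chosen pairs lie in the group, in particular $\det g^*=\overline{\det g}=1$, and using unitarity to invert $U^*$; neither needs anything beyond the definition.
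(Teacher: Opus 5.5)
Your proposal is correct and matches the paper's proof: both specialize covariance at the base point $I$, take $h=I$ (with $g=U$) and right-multiply by $U^\top$ to get the reconstruction identity, take $h=g^*$ (using $g(g^*)^\top=I$) to get the seed constraint, and read off uniqueness from reconstruction. Your explicit checks that $g^*\in\mathrm{SU}(3)$ and that $U^*U^\top=I$ are the only details the paper leaves implicit.
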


\begin{wproof}
Both identities follow from covariance at \(I\). Setting \(U=I\) in Eq.~\eqref{w4:tc:definition} gives
\begin{equation}
  K(gh^{\top})g^*=\rho(g,h)K(I).
  \label{w4:tc:at-identity}
\end{equation}
The choices \(h=I\) and \(h=g^*\) respectively reconstruct the family and constrain its seed. For \(h=I\), Eq.~\eqref{w4:tc:at-identity} gives
\begin{equation}
  K(g)g^*=\rho(g,I)K(I),
  \qquad
  K(g)=\rho(g,I)K(I)g^{\top}.
  \label{w4:tc:reconstruction-step}
\end{equation}
Renaming \(g\) as \(U\) yields the reconstruction identity~\eqref{w4:tc:seed-reconstruction}. The uniqueness statement follows directly from Eq.~\eqref{w4:tc:seed-reconstruction}.
For the second choice, \(g(g^*)^{\top}=I\) gives
\[
  K(I)g^*=\rho(g,g^*)K(I),
\]
which is the seed constraint~\eqref{w4:tc:seed-constraint}.
\end{wproof}

The seed constraint also fixes normalization up to a scalar, because the input action is irreducible.

\begin{wcorollary}[Uniform normalization]\label{w4:lem:tc-normalization}
If \(K\) is transpose covariant with respect to a unitary representation \(\rho\), then there is a constant \(\gamma\ge0\), independent of \(U\), such that
\begin{equation}
  K(U)^\dagger K(U)=\gamma I_V.
  \label{w4:tc:uniform-normalization}
\end{equation}
Thus \(\|K(I)|\psi\rangle\|^2=\gamma\) for any normalized input state \(|\psi\rangle\). Evaluating this one output norm at \(I\) determines whether the entire family consists of isometric embeddings.
\end{wcorollary}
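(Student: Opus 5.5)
The plan is to apply Proposition~\ref{w4:lem:tc-seed} to the Gram operator $K(U)^\dagger K(U)$ at two levels. First I show that this operator does not depend on $U$. Then I show that its value at $I$ is a scalar.

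For the first step, I start from the seed reconstruction identity~\eqref{w4:tc:seed-reconstruction}, $K(U)=\rho(U,I)K(I)U^\top$. Because $\rho$ is a unitary representation, $\rho(U,I)^\dagger\rho(U,I)=I_{\mathcal X}$. This gives
\[
K(U)^\dagger K(U)=(U^\top)^\dagger K(I)^\dagger K(I)\,U^\top=U^*\,K(I)^\dagger K(I)\,U^\top .
\]
It therefore suffices to show that $Q:=K(I)^\dagger K(I)$ commutes with every $U^*$. Since $U^\top=(U^*)^{-1}$ for unitary $U$, that commutation collapses the right-hand side to $Q$.

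For the second step, I use the seed constraint~\eqref{w4:tc:seed-constraint}, $K(I)g^*=\rho(g,g^*)K(I)$. Taking adjoints gives $(g^*)^\dagger K(I)^\dagger=K(I)^\dagger\rho(g,g^*)^\dagger$. Multiplying the two relations and using unitarity of $\rho(g,g^*)$ yields
\[
(g^*)^\dagger Q\,g^*=K(I)^\dagger\rho(g,g^*)^\dagger\rho(g,g^*)K(I)=Q .
\]
Hence $Qg^*=g^*Q$ for all $g\in SU(3)$. The representation $g\mapsto g^*$ of $SU(3)$ on $V\cong\mathbb C^3$ is irreducible: it is the conjugate of the fundamental representation, and conjugation preserves irreducibility.

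By Schur's lemma, $Q=\gamma I_V$ for some $\gamma\in\mathbb C$. Since $Q$ is positive semidefinite, $\gamma\ge0$. Combining this with the first step gives $K(U)^\dagger K(U)=U^*\gamma I_V U^\top=\gamma I_V$, which is Eq.~\eqref{w4:tc:uniform-normalization}.

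The remaining claims are immediate. For any normalized $\ket\psi$ we have $\|K(I)\ket\psi\|^2=\langle\psi|Q|\psi\rangle=\gamma$. So a single output norm at $I$ determines $\gamma$, and the whole family consists of isometric embeddings exactly when $\gamma=1$.

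The only step that needs any care is the irreducibility of $g\mapsto g^*$ used in Schur's lemma. I expect to dispose of it in one line: any subspace invariant under all $g^*$ has a conjugate subspace that is invariant under all $g$. Since the defining representation of $SU(3)$ on $\mathbb C^3$ is irreducible, that conjugate subspace, and hence the original one, must be $0$ or all of $V$.
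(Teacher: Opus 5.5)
Your proposal is correct and follows essentially the same route as the paper. You use the seed constraint and the unitarity of $\rho(g,g^*)$ to show that $K(I)^\dagger K(I)$ commutes with every $g^*$, apply Schur's lemma for the irreducible $g\mapsto g^*$, and then use the reconstruction identity to get $U^*\gamma I_V U^\top=\gamma I_V$. The only differences are that you present the two steps in the opposite order and add a one-line justification of irreducibility.
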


\begin{wproof}
Using the seed constraint~\eqref{w4:tc:seed-constraint} and the unitarity of \(\rho(g,g^*)\), we obtain
\begin{align}
  g^{\top}K(I)^\dagger K(I)g^*
  &=\bigl[\rho(g,g^*)K(I)\bigr]^\dagger
    \bigl[\rho(g,g^*)K(I)\bigr]\notag\\
  &=K(I)^\dagger\rho(g,g^*)^\dagger\rho(g,g^*)K(I)=K(I)^\dagger K(I),
  \label{w4:tc:normalization-adjoint-product}
\end{align}
which implies that \(K(I)^\dagger K(I)\) commutes with every \(g^*\).
Since \(g\mapsto g^*\) is irreducible, Schur's lemma~\cite{fultonharris1991} in its scalar-intertwiner form~\cite[Cor.~2.3.10]{etingof2011} gives \(K(I)^\dagger K(I)=\gamma I_V\), with \(\gamma\ge0\) by positivity.
The reconstruction identity~\eqref{w4:tc:seed-reconstruction} then yields
\begin{equation}
  K(U)^\dagger K(U)
  =U^*K(I)^\dagger K(I)U^{\top}
  =\gamma I_V,
  \label{w4:tc:normalization-all-u}
\end{equation}
proving Eq.~\eqref{w4:tc:uniform-normalization}.
\end{wproof}

\subsection{Representation blocks and seed coefficients}\label{w4:tc:sec:blocks}

Representation multiplicities count the independent seed coefficients.
We decompose the output into blocks on which \(g\) and \(h\) act irreducibly on separate factors, then impose the seed constraint within each block.
Schur's lemma identifies the allowed components and leaves one coefficient for each copy of the input representation.
The underlying compact-group tensor representation decomposes into irreducible and multiplicity spaces~\cite{grinko}.

\begin{wdefinition}[Left and right representation blocks]\label{w4:def:tc-block}
Let \(\lambda\) and \(\mu\) be irreducible unitary representations of \(SU(3)\) on \(W_\lambda\) and \(W_\mu\), respectively.

A subspace of the output space is called a \((\lambda,\mu)\) block if it is the image of a fixed isometric embedding
\begin{equation}
  A_{\lambda,\mu}^{\mathcal X}:W_\lambda\otimes W_\mu\longrightarrow\mathcal X,
  \qquad \bigl(A_{\lambda,\mu}^{\mathcal X}\bigr)^\dagger A_{\lambda,\mu}^{\mathcal X}=I,
  \label{w4:tc:block-embedding}
\end{equation}
satisfying, for all \(g,h\),
\begin{equation}
  \boxed{\rho(g,h)A_{\lambda,\mu}^{\mathcal X}=A_{\lambda,\mu}^{\mathcal X}\bigl[\lambda(g)\otimes\mu(h)\bigr].}
  \label{w4:tc:block-intertwining}
\end{equation}
\end{wdefinition}

We use the letter \(A\) for chosen fixed isometric embeddings, with CG maps as a special case.

Equation~\eqref{w4:tc:block-intertwining} ensures that \(\rho(g,h)\) preserves the image of \(A_{\lambda,\mu}^{\mathcal X}\)~\cite[Def.~2.3.6]{etingof2011}.
In the coordinates set by this embedding, the restricted action is \(\lambda(g)\otimes\mu(h)\).

A block decomposition is an orthogonal direct sum of these subspaces.
Write \(d_\lambda=\dim W_\lambda\) and \(d_\mu=\dim W_\mu\).
The labels record the action types of the independent variables
\(g\) and \(h\), and a \((\lambda,\mu)\) block has dimension
\(d_\lambda d_\mu\). The same type may occur several times;
choosing \(A_{\lambda,\mu}^{\mathcal X}\) specifies one copy.
When several copies need to be distinguished, we write \(A_{\lambda,\mu;r}^{\mathcal X}\), with copy index \(r\).

The intermediate space gives three concrete examples of these blocks.
\begin{wexample}[Intermediate representation blocks]\label{w4:tc:remark:midpoint-blocks}
The three summands of \(\mathcal M_4\) in Eq.~\eqref{w4:eq:middle-space}, carrying \(\rho_1(g,h)=g\otimes\operatorname{Sym}^2h\), \(\rho_2(g,h)=\operatorname{Sym}^2(g^*)\otimes h^*\), and \(\rho_3(g,h)=g\otimes h^*\) from Eq.~\eqref{w4:tc:phi-actions}, have types \((3,6)\), \((\bar6,\bar3)\), and \((3,\bar3)\), respectively.
\end{wexample}

The input action in the covariance relation~\eqref{w4:tc:definition} is \(g^*\), so it carries \(\bar3\). Restricting a \((\lambda,\mu)\) block to \((g,h)=(g,g^*)\), the seed constraint~\eqref{w4:tc:seed-constraint} makes its seed an intertwiner from \(\bar3\) to \(g\mapsto\lambda(g)\otimes\mu(g^*)\).

\begin{wproposition}[Multiplicity of block seeds]\label{w4:lem:tc-multiplicity}
Let \(K\) be transpose covariant with respect to \(\rho\), and let \(A_{\lambda,\mu}^{\mathcal X}\) specify one of its \((\lambda,\mu)\) blocks. The seed component \(S=\bigl(A_{\lambda,\mu}^{\mathcal X}\bigr)^\dagger K(I)\) in that block satisfies
\begin{equation}
  Sg^*=\bigl[\lambda(g)\otimes\mu(g^*)\bigr]S.
  \label{w4:tc:block-seed-constraint}
\end{equation}
The space of such seeds has complex dimension \(m\), which is the multiplicity of \(\bar3\) in \(g\mapsto\lambda(g)\otimes\mu(g^*)\). Thus \(m=0\) forces the branch to vanish, whereas \(m=1\) determines its seed up to a complex scalar multiplying a fixed isometric embedding.
\end{wproposition}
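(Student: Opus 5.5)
The plan is to restrict the transpose-covariance relation to the diagonal pairs $(g,g^*)$, which fix $U=I$, and then read off the seed space via Schur's lemma.

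\textbf{Seed constraint in the block.} Multiply the seed constraint~\eqref{w4:tc:seed-constraint} of Proposition~\ref{w4:lem:tc-seed} on the left by $\bigl(A_{\lambda,\mu}^{\mathcal X}\bigr)^\dagger$. Taking adjoints in the block intertwining relation~\eqref{w4:tc:block-intertwining} and using unitarity of $\rho$ and of $\lambda,\mu$ gives
\[
\bigl(A_{\lambda,\mu}^{\mathcal X}\bigr)^\dagger\rho(g,h)=\bigl[\lambda(g)\otimes\mu(h)\bigr]\bigl(A_{\lambda,\mu}^{\mathcal X}\bigr)^\dagger .
\]
Concretely, replace $(g,h)$ by $(g^{-1},h^{-1})$ in~\eqref{w4:tc:block-intertwining} and take the adjoint. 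Setting $h=g^*$ then turns~\eqref{w4:tc:seed-constraint} into
\[
Sg^*=\bigl[\lambda(g)\otimes\mu(g^*)\bigr]S,
\]
which is Eq.~\eqref{w4:tc:block-seed-constraint}.

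\textbf{Dimension of the seed space.} Equation~\eqref{w4:tc:block-seed-constraint} says exactly that $S$ is an intertwiner from the irreducible representation $\bar3$, $g\mapsto g^*$ on $V$, to the finite-dimensional unitary representation $\sigma(g)=\lambda(g)\otimes\mu(g^*)$ of the compact group $\SU(3)$.

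\begin{itemize}
\item By complete reducibility, write $W_\lambda\otimes W_\mu\cong\bigoplus_\nu W_\nu\otimes\mathbb C^{m_\nu}$.
\item By Schur's lemma~\cite[Cor.~2.3.10]{etingof2011}, $\operatorname{Hom}_{\SU(3)}(\bar3,\sigma)\cong\mathbb C^{m_{\bar3}}$, so the seed space has complex dimension $m=m_{\bar3}$.
\item Concretely, choose isometric embeddings $A_r$ of $\bar3$ onto the $m$ mutually orthogonal copies. Every admissible seed is then $S=\sum_r c_rA_r$ with $c_r\in\mathbb C$.
\end{itemize}

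One check is needed here: that $\mu(g^*)$ is again a unitary representation, so that $\sigma$ is a genuine unitary representation. This holds because $g\mapsto g^*$ is a continuous homomorphism.

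\textbf{The cases $m=0$ and $m=1$.} If $m=0$, then $S=0$. By the reconstruction identity~\eqref{w4:tc:seed-reconstruction} together with~\eqref{w4:tc:block-intertwining},
\[
\bigl(A_{\lambda,\mu}^{\mathcal X}\bigr)^\dagger K(U)=\bigl[\lambda(U)\otimes I\bigr]\,S\,U^\top=0
\]
for all $U$, so the branch vanishes identically. If $m=1$, then $S=cA_1$ with $A_1$ a fixed isometric embedding of $\bar3$, as claimed.

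\textbf{Main obstacle.} The only delicate point is the first step: correctly transferring $\rho(g,g^*)$ through $\bigl(A_{\lambda,\mu}^{\mathcal X}\bigr)^\dagger$. This needs the adjoint of the intertwining relation, and it relies on the block image being $\rho$-invariant. Since $\rho$ is unitary, invariance of the image gives invariance of its orthogonal complement, so the projected relation holds without cross terms. The remainder of the argument is a direct application of Schur's lemma.
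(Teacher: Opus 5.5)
Your proposal is correct and follows the paper's proof essentially step for step. Like the paper, you move the seed constraint through $\bigl(A_{\lambda,\mu}^{\mathcal X}\bigr)^\dagger$ using the adjoint form of the block intertwining relation, which the paper justifies by unitarity. You then count intertwiners from $\bar3$ into $g\mapsto\lambda(g)\otimes\mu(g^*)$ by complete reducibility and Schur's lemma, and use seed reconstruction to make the branch vanish when $m=0$. Your derivation via $(g^{-1},h^{-1})$ and the explicit formula $\bigl(A_{\lambda,\mu}^{\mathcal X}\bigr)^\dagger K(U)=[\lambda(U)\otimes I]\,S\,U^\top$ spell out steps the paper leaves implicit.
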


\begin{wproof}
Because both representations are unitary, the block relation~\eqref{w4:tc:block-intertwining} is equivalent to
\begin{equation}
  \bigl(A_{\lambda,\mu}^{\mathcal X}\bigr)^\dagger\rho(g,h)
  =\bigl[\lambda(g)\otimes\mu(h)\bigr]\bigl(A_{\lambda,\mu}^{\mathcal X}\bigr)^\dagger.
  \label{w4:tc:block-extraction}
\end{equation}
Applying this identity to the seed constraint~\eqref{w4:tc:seed-constraint} with \(h=g^*\) gives Eq.~\eqref{w4:tc:block-seed-constraint} for \(S=\bigl(A_{\lambda,\mu}^{\mathcal X}\bigr)^\dagger K(I)\).

Decompose \(g\mapsto\lambda(g)\otimes\mu(g^*)\) into orthogonal irreducible summands. By Schur's lemma, each component of \(S\) vanishes unless its output representation is equivalent to the input \(\bar3\). On each equivalent copy, the intertwiner is unique up to a complex scalar.

For the \(m\) copies of \(\bar3\), choose fixed isometric intertwiners \(S_{\lambda,\mu}^{(j)}:V\to W_\lambda\otimes W_\mu\), each with image in its corresponding summand. Every allowed seed then has the form
\begin{equation}
  S=\sum_{j=1}^m b_jS_{\lambda,\mu}^{(j)},\qquad b_j\in\mathbb C.
  \label{w4:tc:seed-multiplicity-expansion}
\end{equation}
Conversely, every such sum satisfies the block constraint. The summands have orthogonal images, so these \(m\) coefficients are independent. For \(m=0\), the seed vanishes and reconstruction makes the entire branch vanish; for \(m=1\), one scalar remains.
\end{wproof}

The following example applies the multiplicity rule to the antisymmetric seed used in the construction.
\begin{wexample}[Antisymmetric seed]
Recall from Eq.~\eqref{w4:tc:phi-actions} that the third intermediate block has action \(\rho_3(g,h)=g\otimes h^*\) and type \((3,\bar3)\). Setting \(h=g^*\) gives
\begin{equation}
  Sg^*=(g\otimes g)S.
  \label{w4:tc:antisymmetric-seed}
\end{equation}
The output splits into its six-dimensional symmetric and three-dimensional antisymmetric parts, with representation types \(6\oplus\bar3\). The input maps only into the antisymmetric part, which occurs once. Hence \(S=bA_{\bar3}^{3,3}\), with \(b\in\mathbb C\). The \(d=3\) antisymmetric encoding of Miyazaki et al.~\cite{miyazaki} gives this fixed seed, which is \(\Phi_{3,\bar3}(I)\) when \(b=1\). Equation~\eqref{w4:eq:iota} gives \(\mathrm{SWAP}A_{\bar3}^{3,3}=-A_{\bar3}^{3,3}\), so the negative exchange used in the midpoint flip preserves its seed.
\end{wexample}

\subsection{Tensor--Hom duality and seed multiplicities}\label{w4:sec:tc-reshaping}

Tensor--Hom duality~\cite[Problems~2.11.3(c), 2.14.3]{etingof2011} moves the second output factor of a block seed to its input, reducing seed counting to the decomposition of \(\bar3\otimes\mu\).
We realize this correspondence by contraction with the Bell vector in a basis satisfying
\begin{equation}
  \mu(g^*)=\mu(g)^*.
  \label{w4:tc:compatible-basis}
\end{equation}
This choice supplements the arbitrary fixed-basis convention of Proposition~\ref{w4:lem:tc-multiplicity}.

With Eq.~\eqref{w4:tc:compatible-basis} substituted into the block constraint~\eqref{w4:tc:block-seed-constraint}, the block seed satisfies
\begin{equation}
  Sg^*=\bigl[\lambda(g)\otimes\mu(g)^*\bigr]S,
  \qquad S:V\to W_\lambda\otimes W_\mu.
  \label{w4:tc:compatible-seed}
\end{equation}
Let \(|I_{d_\mu}\rangle\!\rangle\) denote the unnormalized Bell vector in the compatible basis, and define the linear map
\begin{equation}
  B=(I_{W_\lambda}\otimes\langle\!\langle I_{d_\mu}|)
       (S\otimes I_{W_\mu})
  :V\otimes W_\mu\to W_\lambda.
  \label{w4:tc:bell-reshaping}
\end{equation}
This reshaping is complex linear, with no coefficient conjugation; its inverse is
\begin{equation}
  S=(B\otimes I_{W_\mu})
      (I_V\otimes|I_{d_\mu}\rangle\!\rangle).
  \label{w4:tc:inverse-reshaping}
\end{equation}
Unitarity makes the Bell pairing invariant under \(\mu(g)^*\otimes\mu(g)\), so Eq.~\eqref{w4:tc:compatible-seed} is equivalent to
\begin{align}
  B\bigl(g^*\otimes\mu(g)\bigr)
  &=\lambda(g)B.
  \label{w4:tc:reshaped-intertwining}
\end{align}
Thus the number of allowed seed coefficients equals the multiplicity of \(\lambda\) in \(\bar3\otimes\mu\).

This correspondence recovers the seed count of Proposition~\ref{w4:lem:tc-multiplicity}.
This is the reciprocal multiplicity identity~\cite[Eq.~(A29)]{grinko}.
The reshaping above preserves the original coefficients; the normalized dual-Clebsch--Gordan convention~\cite[Eq.~(A31)]{grinko} instead uses coefficient conjugation and a dimension factor.
{For a transpose-covariant family with output action
\(g\otimes\lambda(g)\otimes\mu(h)\), the left representation types lie in
\(3\otimes\lambda\). In a right basis satisfying
Eq.~\eqref{w4:tc:compatible-basis}, the seed in a
\((\lambda_{\mathrm{out}},\mu)\) block can be nonzero only if
\(\lambda_{\mathrm{out}}\) also occurs in \(\bar3\otimes\mu\). Hence}
\begin{equation}\label{w4:eq:query-selection-rule}
\lambda_{\mathrm{out}}\in
\operatorname{Irr}(3\otimes\lambda)
\cap\operatorname{Irr}(\bar3\otimes\mu).
\end{equation}
Here \(\operatorname{Irr}\) records representation types{, and }\(\mu\) {is the }right output label.
The {selection rule} can exclude a branch created by a query.
\begin{wexample}[Excluding the \(10\) branch]
Let \(10\) denote the ten-dimensional symmetric cube
\(\operatorname{Sym}^3V\), and \(\overline{10}\) its conjugate representation.
{For \(\lambda=6\) and \(\mu=\bar6\), the decompositions}
\(3\otimes6=8\oplus10\) {and}
\(\bar3\otimes\bar6=8\oplus\overline{10}\) {show that Eq.~\eqref{w4:eq:query-selection-rule} allows} only the left \(8\) {to carry} a seed.
The \(10\) seed vanishes, and reconstruction makes that branch vanish for every \(U\).

This situation arises from the initial block \((6,8)\). The fixed encoding embeds its right \(8\) into \(\bar3\otimes\bar6\), leaving right memory \(\bar6\); the query then supplies the left \(3\) in \(3\otimes6\). Sec.~\ref{w4:ex:first-6-8} gives the corresponding circuit calculation.
\end{wexample}

\subsection{Fixed encodings and query updates}\label{w4:tc:sec:queries}

Fixed encodings expose the qutrit on which each query acts.
Within a representation block, the encoding makes this qutrit a tensor factor with the required group action.
Composition of intertwining maps preserves covariance; the query then changes the action on that factor.

\begin{wlemma}\label{w4:lem:tc-encoding}
Let \(K(U):V\to\mathcal X\) be transpose covariant with respect to
\(\rho\). Suppose that a fixed linear encoding
\(\mathcal E_{\mathrm{fix}}:\mathcal X\to\widetilde{\mathcal X}\), independent of \(U\),
maps into a space carrying \(\widetilde\rho\) and satisfies

  \begin{equation}
    \mathcal E_{\mathrm{fix}}\rho(g,h)=\widetilde\rho(g,h)\mathcal E_{\mathrm{fix}}.
    \label{w4:tc:fixed-encoding-intertwining}
  \end{equation}
Then the family \(U\mapsto \mathcal E_{\mathrm{fix}}K(U)\) is transpose covariant with respect to \(\widetilde\rho\).

\end{wlemma}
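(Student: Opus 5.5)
The statement to prove is Lemma~\ref{w4:lem:tc-encoding}: a fixed encoding that intertwines $\rho$ with $\widetilde\rho$ carries a transpose-covariant family to another transpose-covariant family. The plan is to verify the defining relation of Definition~\ref{w4:def:tc} directly, in one line, without going through seeds.

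Fix $g,h,U\in SU(3)$. Set $\widetilde K(U)=\mathcal E_{\mathrm{fix}}K(U)$, which is a linear map $V\to\widetilde{\mathcal X}$. Then
\[
\widetilde K(gUh^\top)g^*=\mathcal E_{\mathrm{fix}}K(gUh^\top)g^*=\mathcal E_{\mathrm{fix}}\rho(g,h)K(U)=\widetilde\rho(g,h)\mathcal E_{\mathrm{fix}}K(U)=\widetilde\rho(g,h)\widetilde K(U).
\]
The steps are as follows.
\begin{itemize}
\item The first equality holds because $\mathcal E_{\mathrm{fix}}$ is independent of $U$.
\item The second uses covariance of $K$, Eq.~\eqref{w4:tc:definition}.
\item The third uses the intertwining hypothesis, Eq.~\eqref{w4:tc:fixed-encoding-intertwining}.
\end{itemize}
This is exactly Eq.~\eqref{w4:tc:definition} for $\widetilde K$ with respect to $\widetilde\rho$.

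The only point needing care is that Definition~\ref{w4:def:tc} requires $\widetilde\rho$ to be a unitary representation of $SU(3)\times SU(3)$ on $\widetilde{\mathcal X}$. This is part of the lemma's hypothesis that $\widetilde{\mathcal X}$ "carries $\widetilde\rho$", so nothing more needs to be shown.

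Two further points are worth noting.
\begin{itemize}
\item No isometry or unitarity of $\mathcal E_{\mathrm{fix}}$ is used. If one wants isometric families, Corollary~\ref{w4:lem:tc-normalization} applied to $\widetilde K$ reduces normalization to a single check at $U=I$.
\item The seed of the new family is $\mathcal E_{\mathrm{fix}}K(I)$, consistent with Proposition~\ref{w4:lem:tc-seed}.
\end{itemize}

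I expect no real obstacle here; the argument is a direct composition of intertwiners.
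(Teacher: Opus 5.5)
Your proof is correct and follows the paper's argument: both verify the covariance relation directly, using that $\mathcal E_{\mathrm{fix}}$ is independent of $U$, then the covariance of $K$, then the intertwining relation. Your remarks that unitarity of $\mathcal E_{\mathrm{fix}}$ is not needed and that the new seed is $\mathcal E_{\mathrm{fix}}K(I)$ are also consistent with the paper.
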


\begin{wproof}
The fixed encoding carries the covariance of \(K\) to the new output representation.
Because \(\mathcal E_{\mathrm{fix}}\) is independent of \(U\), the transformed map is \(\mathcal E_{\mathrm{fix}}K(gUh^{\top})\).
Using the intertwining relation~\eqref{w4:tc:fixed-encoding-intertwining}, we obtain
\begin{equation}
  \mathcal E_{\mathrm{fix}}K(gUh^\top)g^*=\mathcal E_{\mathrm{fix}}\rho(g,h)K(U)=\widetilde\rho(g,h)\mathcal E_{\mathrm{fix}}K(U).
  \label{w4:tc:fixed-encoding-proof}
\end{equation}
By definition, \(\mathcal E_{\mathrm{fix}}K\) is transpose covariant with respect to \(\widetilde\rho\).
\end{wproof}

\begin{wlemma}\label{w4:lem:tc-query}
Suppose \(K(U):V\to V\otimes\mathcal M\), where \(\mathcal M\)
is the memory and the first output factor is the query qutrit.
If \(K\) is transpose covariant with respect to
\(h^*\otimes\rho_{\mathcal M}(g,h)\), where \(\rho_{\mathcal M}\) is a unitary
representation on \(\mathcal M\), then the family
\(U\mapsto(U\otimes I_{\mathcal M})K(U)\) after one forward query
is transpose covariant with respect to \(g\otimes\rho_{\mathcal M}(g,h)\).
\end{wlemma}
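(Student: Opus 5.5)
The plan is to verify Definition~\ref{w4:def:tc} directly for the new family $\widetilde K(U)=(U\otimes I_{\mathcal M})K(U)$. Fix $g,h,U\in SU(3)$. Evaluating the new family at the argument $gUh^\top$ gives the query factor $gUh^\top$ on the first output qutrit. We then apply the assumed covariance of $K$ with respect to $h^*\otimes\rho_{\mathcal M}(g,h)$:
\begin{equation*}
\widetilde K(gUh^\top)g^*
=\bigl(gUh^\top\otimes I_{\mathcal M}\bigr)K(gUh^\top)g^*
=\bigl(gUh^\top\otimes I_{\mathcal M}\bigr)\bigl(h^*\otimes\rho_{\mathcal M}(g,h)\bigr)K(U).
\end{equation*}

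The key cancellation happens on the query factor. Because $h$ is unitary, $h^\top h^*=(h^\dagger h)^*=I$, so $gUh^\top h^*=gU$. The memory factor is untouched by the query, and the two tensor factors act independently. Hence the product becomes
\begin{equation*}
\bigl(gU\otimes\rho_{\mathcal M}(g,h)\bigr)K(U)
=\bigl(g\otimes\rho_{\mathcal M}(g,h)\bigr)(U\otimes I_{\mathcal M})K(U)
=\bigl(g\otimes\rho_{\mathcal M}(g,h)\bigr)\widetilde K(U),
\end{equation*}
which is exactly the claimed covariance relation.

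It remains to check that $(g,h)\mapsto g\otimes\rho_{\mathcal M}(g,h)$ is a unitary representation of $SU(3)\times SU(3)$. This holds because it is the tensor product of the representation $(g,h)\mapsto g$ with $\rho_{\mathcal M}$, both of which are unitary. I expect no real obstacle here. The only point needing care is the bookkeeping: the query acts on the first output factor, which is the same factor that carried $h^*$. This is precisely the cancellation $h^\top h^*=I$ that motivates Eq.~\eqref{w4:tc:transpose-motivation}.
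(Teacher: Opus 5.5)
Your proposal is correct and follows essentially the same route as the paper's proof: substitute $gUh^\top$ into the queried family, apply the assumed covariance of $K$, cancel $h^\top h^*=I$ on the query factor, and factor out $U\otimes I_{\mathcal M}$. Your extra remark that $g\otimes\rho_{\mathcal M}(g,h)$ is itself a unitary representation is a harmless addition that the paper leaves implicit.
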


\begin{wproof}
A query changes the qutrit action while preserving the memory action.
Before the query, covariance reads
\begin{equation}
  K(gUh^{\top})g^*
  =\bigl[h^*\otimes\rho_{\mathcal M}(g,h)\bigr]K(U).
  \label{w4:tc:before-query-covariance}
\end{equation}
After the query, the map is \((U\otimes I)K(U)\).
Replacing \(U\) by \(gUh^{\top}\) in both factors gives \(\bigl((gUh^{\top})\otimes I\bigr)K(gUh^{\top})\).
Equation~\eqref{w4:tc:before-query-covariance} then yields
\begin{align}
  \bigl((gUh^{\top})\otimes I\bigr)K(gUh^{\top})g^*
  &=\bigl((gUh^{\top})\otimes I\bigr)
      \bigl[h^*\otimes\rho_{\mathcal M}(g,h)\bigr]K(U)\notag\\
  &=\bigl[(gUh^{\top})h^*\otimes\rho_{\mathcal M}(g,h)\bigr]K(U)\notag\\
  &=\bigl[gU\otimes\rho_{\mathcal M}(g,h)\bigr]K(U)\notag\\
  &=\bigl[g\otimes\rho_{\mathcal M}(g,h)\bigr](U\otimes I)K(U).
  \label{w4:tc:query-update-proof}
\end{align}
The first equality uses Eq.~\eqref{w4:tc:before-query-covariance}, and the last is the covariance relation for \(U\mapsto(U\otimes I)K(U)\) with output action \(g\otimes\rho_{\mathcal M}(g,h)\).
\end{wproof}

Alternating fixed encodings with forward queries traces a representation path.
Each query replaces \(h^*\) on the query qutrit by \(g\), while preserving the memory action \(\rho_{\mathcal M}(g,h)\) and input action \(g^*\).
Proposition~\ref{w4:lem:tc-seed} then lets us compare the families along that path through their seeds.
Grinko et al.~\cite[Theorems~3 and~4]{grinko} realize such paths at the network level with Clebsch--Gordan and dual-Clebsch--Gordan transforms and isometries on multiplicity spaces.
Here the coherent single-query identity gives the update used by our circuit.

Isometric encodings on the required subspaces yield physical gates when they fit in one common workspace.
Completing their orthonormal bases extends each encoding to a unitary.
Our two-query encoder in Sec.~\ref{w4:chap:processing} meets these conditions.

\subsection{A fixed-flip criterion}\label{w4:tc:sec:flip}

The following criterion implements the desired \(U^{\top}\) action without another query: a fixed unitary exchanges the representation actions and fixes the seed.

\begin{wproposition}[Fixed flip criterion]\label{w4:lem:tc-flip}
Let \(K(U):V\to\mathcal X\) be transpose covariant with respect to
\(\rho\). Suppose that a unitary
\(J:\mathcal X\to\mathcal X\), independent of \(U\),
satisfies the representation exchange condition
\begin{equation}
  J\rho(g,h)=\rho(h^*,g^*)J,
  \label{w4:tc:flip-representation}
\end{equation}
and the seed invariance condition
\begin{equation}
  JK(I)=K(I).
  \label{w4:tc:flip-seed}
\end{equation}
Then
\begin{equation}
  \boxed{JK(U)=K(U^\dagger)U^{\top}.}
  \label{w4:tc:flip-family}
\end{equation}
\end{wproposition}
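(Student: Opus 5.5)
The plan is to show that the flipped family $U\mapsto JK(U)$ and the family $U\mapsto K(U^\dagger)U^\top$ are both transpose covariant with respect to the \emph{same} representation $\rho$. They also share a seed, so the uniqueness part of Proposition~\ref{w4:lem:tc-seed} makes them equal for every $U$. Seeds agree because setting $U=I$ in Eq.~\eqref{w4:tc:flip-family} reduces it to the seed invariance condition~\eqref{w4:tc:flip-seed}: the right-hand side is $K(I)I=K(I)$.

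First I would check the covariance of $L(U):=K(U^\dagger)U^\top$. Substitute $U\mapsto gUh^\top$. Then $(gUh^\top)^\dagger=h^*U^\dagger g^\dagger=h^*U^\dagger(g^*)^\top$ and $(gUh^\top)^\top g^*=hU^\top$. This gives
\[
L(gUh^\top)g^*=K\bigl(h^*U^\dagger (g^*)^\top\bigr)hU^\top .
\]
Next I would apply the covariance of $K$ with the pair $(h^*,g^*)$ in place of $(g,h)$, in the form $K(h^*Wg^{*\top})=\rho(h^*,g^*)K(W)(h^*)^{-1}$ with $(h^*)^{-1}=h^\top$. This yields $\rho(h^*,g^*)K(U^\dagger)h^\top hU^\top$. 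The factor $h^\top h$ is not the identity, so I have to be careful with transposes here. In Eq.~\eqref{w4:tc:definition} the input factor multiplies on the right by $g^*$, so $K(gWh^\top)=\rho(g,h)K(W)(g^*)^{-1}=\rho(g,h)K(W)g^\top$. With $g\to h^*$ the right factor is $(h^*)^\top=h^\dagger$, and then $h^\dagger h=I$. Hence $L(gUh^\top)g^*=\rho(h^*,g^*)L(U)$, so $L$ is transpose covariant with respect to $\rho'(g,h):=\rho(h^*,g^*)$. This is again a unitary representation of $SU(3)\times SU(3)$, because $(g,h)\mapsto(h^*,g^*)$ is a group homomorphism.

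Second, $JK$ is covariant with respect to the same $\rho'$. The exchange condition~\eqref{w4:tc:flip-representation} says precisely that $J$ intertwines $\rho$ with $\rho'$, so Lemma~\ref{w4:lem:tc-encoding} applies with $\mathcal E_{\mathrm{fix}}=J$ and $\widetilde\rho=\rho'$. Proposition~\ref{w4:lem:tc-seed} then gives $JK(U)=\rho'(U,I)JK(I)U^\top=\rho'(U,I)K(I)U^\top=L(U)$, which is Eq.~\eqref{w4:tc:flip-family}.

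The main obstacle is the index bookkeeping in the first step: keeping track of which of $g,h,g^*,h^\top$ appears where, and confirming that the stray input factor collapses to the identity rather than leaving a residual such as $h^\top h$. If my substitution fails there, I would fall back to checking Eq.~\eqref{w4:tc:flip-family} directly from reconstruction. Use $K(U^\dagger)=\rho(U^\dagger,I)K(I)U^*$, so that $L(U)=\rho(U^\dagger,I)K(I)$. Compare this with $JK(U)=\rho(I,U^*)JK(I)U^\top=\rho(I,U^*)K(I)U^\top$. The two agree by the seed constraint~\eqref{w4:tc:seed-constraint}, which gives $\rho(U,U^*)K(I)U^\top\cdot$ rearranged as $K(I)U^\top=\rho(U^*{}^{*},\dots)$; concretely, $K(I)U^\top=\rho(U^\dagger{}^{*}\!,\,U^\top{}^{*})K(I)$ with $g=U^{\dagger *}=U^\top$, after which $\rho(I,U^*)\rho(U^\top{}^{\!},\ldots)$ collapses to $\rho(U^\dagger,I)$. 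This second route is a short self-contained check.
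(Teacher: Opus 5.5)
Your main argument is correct and is essentially the paper's proof in repackaged form. The paper writes $JK(U)=J\rho(U,I)K(I)U^\top=\rho(I,U^*)JK(I)U^\top=\rho(I,U^*)K(I)U^\top$ and closes with $K(U^\dagger)=\rho(I,U^*)K(I)$, which is exactly the reconstruction identity for your family $L(U)=K(U^\dagger)U^\top$ with respect to $\rho(h^*,g^*)$; your full covariance check of $L$ therefore proves more than the one special case the paper needs.

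In your fallback, which you do not need, the seed constraint should be used with $g=U^\dagger$, not $g=U^\top$. That choice gives $K(I)U^\top=\rho(U^\dagger,U^\top)K(I)$, and then $\rho(I,U^*)\rho(U^\dagger,U^\top)=\rho(U^\dagger,I)$, as you intended.
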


\begin{wproof}
Seed reconstruction~\eqref{w4:tc:seed-reconstruction} carries the representation exchange~\eqref{w4:tc:flip-representation} and seed invariance~\eqref{w4:tc:flip-seed} to the whole family:
\begin{equation}
  JK(U)=J\rho(U,I)K(I)U^{\top}=\rho(I,U^*)JK(I)U^{\top}=\rho(I,U^*)K(I)U^{\top}=K(U^\dagger)U^{\top}.
  \label{w4:tc:flip-proof}
\end{equation}
For the last equality, first set the original parameter \(U=I\) and \(g=I\) in the covariance relation~\eqref{w4:tc:definition}, obtaining
\[
  K(h^{\top})=\rho(I,h)K(I),
\]
then set \(h=U^*\), which gives
\begin{equation}
  K(U^\dagger)=\rho(I,U^*)K(I).
  \label{w4:tc:adjoint-parameter}
\end{equation}
This proves Eq.~\eqref{w4:tc:flip-family}.
\end{wproof}

\section{Construction and verification}
\label{w4:chap:processing}

We construct a two-query representation path with the output action and seed of \(\Phi_{1/5}\).
Fixed encodings and queries preserve covariance by Lemmas~\ref{w4:lem:tc-encoding} and~\ref{w4:lem:tc-query}, so Proposition~\ref{w4:lem:tc-seed} identifies the output family from its seed.
The three fixed gates act first on prescribed subspaces of the \(270\)-dimensional workspace; we complete their orthonormal bases to obtain physical unitaries.
Representation decompositions select the occupied branches, while tensor embeddings fix their amplitudes and phases.
The final two subsections implement the fixed flip and reverse decoding, completing the four-query processor and restoring the catalyst.
The full circuit is also verified symbolically in Sec.~\ref{sm:mathematica}.

\subsection{Initial representation blocks}
\label{w4:sec:branches}

We realize the adjoint representation on the traceless matrix space
\[
 W_8=\{X\in M_3(\mathbb C):\operatorname{tr}X=0\},
\qquad \langle X,Y\rangle=\operatorname{tr}(X^\dagger Y),
\]
with adjoint action
\begin{equation}\label{w4:eq:adjoint-action}
\operatorname{Ad}_g(X)=gXg^\dagger.
\end{equation}
The matrix \(R(g)\) defined in Eq.~\eqref{w4:eq:adjoint-matrix} represents this action
in the fixed orthonormal Hermitian basis \(T_0,\ldots,T_7\).
Write \(|X\rangle\in W_8\) for the coordinate ket of \(X\) in this basis.
Composing coordinate reconstruction \(|X\rangle\mapsto X\) with matrix vectorization gives the fixed isometry:
\begin{equation}\label{w4:eq:adjoint-coordinate-embedding}
{A_8^{3,\bar3}}:W_8\hookrightarrow V\otimes V^*,\qquad
{A_8^{3,\bar3}}|X\rangle=|X\rangle\!\rangle
=\sum_{b,c=0}^2X_{bc}|bc\rangle.
\end{equation}
Its image is the traceless matrix subspace of the nine-dimensional
tensor product. The conjugate ket carrier \(V^*\) is defined in
Sec.~\ref{w4:sec:notation}; bars on numerical representation labels
denote conjugate representations. We use the coordinate copy in the fixed bases \(V\to V^*\) that sends each computational basis
ket to the corresponding basis ket and preserves its coefficients.
It is a linear coordinate map, whereas the ket-to-bra Riesz
correspondence is antilinear.
In labeled tensor products, upright subscripts name the factors,
while a numerical subscript on \(I_d\) specifies a dimension.
Products whose output labels are disjoint are placed in the stated
factor order.

The catalyst \(|c_{4/5}(U)\rangle_{\mathsf P}\), defined in
Eq.~\eqref{w4:eq:catalyst-specific}, carries \((1,1)\oplus(8,8)\), with the
left and right labels specified in the \hyperref[w4:tc:representation-labels]{Notation paragraph} at the start of Sec.~\ref{w4:chap:covariance}.
The two \(8\) labels refer to its first and second adjoint factors.
A fixed transpose puts the second factor's right action into standard
adjoint form. We apply \(\mathsf T: W_8\to W_8\) to that factor, with
\(\mathsf T(X)=X^\top\) and
\(\mathsf T_{rs}=\operatorname{tr}(T_r^\dagger T_s^\top)\).
The map preserves the Hilbert--Schmidt inner product, and
\((h^*Xh^\top)^\top=hX^\top h^\dagger\) gives the required right action.
The data contribute the left action \(g^*\), with label \(\bar3\).
Coupling them to the first catalyst factor gives
\begin{equation}\label{w4:eq:initial-representation}
\begin{aligned}
(\bar3,1)\otimes[(1,1)\oplus(8,8)]
&=(\bar3\otimes1,1)\oplus(\bar3\otimes8,8)\\
&=(\bar3,1)\oplus(\bar3\oplus6\oplus\overline{15},8)\\
&=(\bar3,1)\oplus(\bar3,8)\oplus(6,8)
  \oplus(\overline{15},8).
\end{aligned}
\end{equation}
Thus only the first adjoint factor participates in the left
decomposition; the second remains the right \(8\) in each block.

Two fixed isometric embeddings realize the left decomposition,
with the conjugate fundamental space \(V^*\) carrying \(g^*\):
\({A_{\bar3}^{\bar3,8}}:V^*\hookrightarrow V^*\otimes W_8\) and
\({A_6^{\bar3,8}}:\operatorname{Sym}^2V\hookrightarrow V^*\otimes W_8\).
Label the initial data factor by \(\mathsf D\) and the two matrix
factors by \(\mathsf B,\mathsf C\), in that order.
Vectorizing the \(W_8\) factor gives \(27\) ambient coordinates for
the \(24\)-dimensional target \(V^*\otimes W_8\).
Qutrit coordinate indices take values \(0,1,2\), whereas the labels
\(r,s\) in \(T_r,T_s\) range from \(0\) to \(7\).
Coordinate indices are local to each tensor formula. Sums are written
explicitly; an index without a summation sign is a free entry label.
Let \(\delta_{ij}\) be the Kronecker delta and \(\epsilon_{ijk}\)
the completely antisymmetric Levi--Civita tensor, with
\(\epsilon_{012}=1\). A symmetric matrix \(Y\) represents the ket
\(|Y\rangle=\sum_{a,c=0}^2Y_{ac}|ac\rangle\in\operatorname{Sym}^2V\).
Using the natural inclusion \({A_6^{3,3}}\) defined above,
the two embeddings have the vectorized operator forms
\begin{equation}\label{w4:eq:initial-embeddings}
\begin{aligned}
(I_{\mathsf D}\otimes{A_8^{3,\bar3}}){A_{\bar3}^{\bar3,8}}
&=\sqrt{\frac38}\sum_{j=0}^2
\left(|I\rangle\!\rangle_{\mathsf{DB}}|j\rangle_{\mathsf C}
-\frac13|j\rangle_{\mathsf D}|I\rangle\!\rangle_{\mathsf{BC}}\right)
\langle j|,\\
(I_{\mathsf D}\otimes{A_8^{3,\bar3}}){A_6^{\bar3,8}}
&=\left(\frac1{\sqrt2}\sum_{j,a,b,c=0}^2
\epsilon_{jbc}|j\rangle_{\mathsf D}|a\rangle_{\mathsf B}
|b\rangle_{\mathsf C}\langle ac|\right){A_6^{3,3}}.
\end{aligned}
\end{equation}
Here \(|I\rangle\!\rangle_{\mathsf{DB}}
=\sum_{j=0}^2|j\rangle_{\mathsf D}|j\rangle_{\mathsf B}\).
The two terms in \({A_{\bar3}^{\bar3,8}}\) remove the trace on
\(\mathsf B,\mathsf C\), and symmetry of the input to \({A_6^{\bar3,8}}\)
makes its image traceless. Contracting these matrix factors with
\(\langle\!\langle T_r|\) gives the intrinsic \(W_8\) coordinates.
Equivalently,
\[
(I_{\mathsf D}\otimes{A_8^{3,\bar3}}){A_6^{\bar3,8}}
=(\mathrm{SWAP}\otimes I_3)\Bigl(I_3\otimes\bigl(A_{\bar3}^{3,3}\bigr)^*\Bigr)
{A_6^{3,3}}.
\]
Since \({A_{\bar3}^{3,3}}\) is real, conjugating Eq.~\eqref{w4:eq:iota} gives
\((g^*\otimes g^*)\bigl(A_{\bar3}^{3,3}\bigr)^*=\bigl(A_{\bar3}^{3,3}\bigr)^* g\); its same coefficient matrix
is used here as \(V\to V^*\otimes V^*\).
The embedding \({A_6^{\bar3,8}}\) acts on
\(\operatorname{Sym}^2V\subset V\otimes V\), of dimension six;
the tensor operator above has two input factors and three
output factors. Extending \({A_6^{\bar3,8}}\) by zero on the antisymmetric input
is achieved by replacing \({A_6^{3,3}}\) with
\(\Pi_{\operatorname{Sym}^2}\). The Gram operator of this extension
is \(\Pi_{\operatorname{Sym}^2}\), whereas
\(\bigl(A_6^{\bar3,8}\bigr)^\dagger {A_6^{\bar3,8}}=I_6\) on its full symmetric domain.
The tensor coordinates give the norm
\(\|Y\|^2=\sum_{a,c=0}^2|Y_{ac}|^2\). Its orthonormal ket basis consists of
\(|aa\rangle\), \(a=0,1,2\), and
\((|ac\rangle+|ca\rangle)/\sqrt2\), \(0\le a<c\le2\).
Paired indices such as \(ac\) label this natural tensor inclusion;
the normalized six-dimensional basis uses a single row index.
Contractions of \(\delta\) and \(\epsilon\) give
\[
\bigl(A_{\bar3}^{\bar3,8}\bigr)^\dagger {A_{\bar3}^{\bar3,8}}=I_3,\qquad
\bigl(A_6^{\bar3,8}\bigr)^\dagger {A_6^{\bar3,8}}=I_6,\qquad \bigl(A_{\bar3}^{\bar3,8}\bigr)^\dagger {A_6^{\bar3,8}}=0.
\]
The invariant Bell pairing and Levi--Civita tensor give
\[
(g^*\otimes\operatorname{Ad}_g){A_{\bar3}^{\bar3,8}}={A_{\bar3}^{\bar3,8}}g^*,
\qquad
(g^*\otimes\operatorname{Ad}_g){A_6^{\bar3,8}}={A_6^{\bar3,8}}\operatorname{Sym}^2g,
\]
where \(\operatorname{Ad}_g\) is defined in Eq.~\eqref{w4:eq:adjoint-action}.
We construct the remaining inclusion
\({A_{\overline{15}}^{\bar3,8}}:W_{\overline{15}}\hookrightarrow
V^*\otimes W_8\) by projecting the ordered vectors
\(|i\rangle\otimes|T_r\rangle\) with
\(I_{24}-{A_{\bar3}^{\bar3,8}}\bigl(A_{\bar3}^{\bar3,8}\bigr)^\dagger-{A_6^{\bar3,8}}\bigl(A_6^{\bar3,8}\bigr)^\dagger\) and applying
Gram--Schmidt to the nonzero independent projections. Its fifteen
orthonormal columns define the remaining representation by
\[
\overline{15}(g)=\bigl(A_{\overline{15}}^{\bar3,8}\bigr)^\dagger
(g^*\otimes\operatorname{Ad}_g){A_{\overline{15}}^{\bar3,8}}.
\]
Thus \(\overline{15}\) denotes the remaining irreducible component
of \(\bar3\otimes8\); the label \(15\) used below denotes its
complex-conjugate representation.

For \(\lambda=\bar3,6,\overline{15}\), let
\(d_\lambda=3,6,15\), with \(W_{\bar3}=V^*\) and
\(W_6=\operatorname{Sym}^2V\). Projecting the left factor of the
adjoint Bell component through \(\bigl(A_\lambda^{\bar3,8}\bigr)^\dagger\) gives the initial
seed \(S_{\lambda,8}:V\to W_\lambda\otimes W_8\), defined by
\begin{equation}\label{w4:eq:initial-block-seeds}
S_{\lambda,8}=\sqrt{\frac3{d_\lambda}}
\Bigl(\bigl(A_\lambda^{\bar3,8}\bigr)^\dagger\otimes I_8\Bigr)
\left[I_V\otimes\sum_{r=0}^7
|T_r\rangle\otimes|T_r^\top\rangle\right].
\end{equation}
The transpose in the last factor comes from \(\mathsf T\) on the second
catalyst factor. If \(\Pi_\lambda={A_\lambda^{\bar3,8}} \bigl(A_\lambda^{\bar3,8}\bigr)^\dagger\),
Schur's lemma gives
\(\operatorname{Tr}_8\Pi_\lambda=(d_\lambda/3)I_3\), which proves
\(S_{\lambda,8}^\dagger S_{\lambda,8}=I_3\). Each seed therefore
maps a normalized data vector to a normalized branch state. The reduced state of
the normalized adjoint Bell component is \(I_8/8\); hence its
projection to a left component of dimension \(d_\lambda\) has
probability \((4/5)d_\lambda/24\). In the block order
\((\bar3,1),(\bar3,8),(6,8),(\overline{15},8)\), the four amplitudes are
\begin{equation}\label{w4:eq:initial-weights}
\frac1{\sqrt5},\qquad \frac1{\sqrt{10}},\qquad
\frac1{\sqrt5},\qquad \frac1{\sqrt2}.
\end{equation}
Seed reconstruction now gives the input {map }in representation coordinates
after the fixed transpose on the second catalyst factor and the
left decomposition by \(\bigl(A_\lambda^{\bar3,8}\bigr)^\dagger\):
\begin{equation}\label{w4:eq:initial-state}
\frac1{\sqrt5}{I_{\mathsf S}} \oplus\frac1{\sqrt{10}}(U^*\otimes I_8)S_{\bar3,8}U^\top \oplus\frac1{\sqrt5} (\operatorname{Sym}^2U\otimes I_8)S_{6,8}U^\top \oplus\frac1{\sqrt2} (\overline{15}(U)\otimes I_8)S_{\overline{15},8}U^\top.
\end{equation}
All direct sums below use fixed representation coordinates embedded
in the physical workspace~\eqref{w4:eq:workspace}.{ In the trajectory formulas, \(\oplus\) stacks branch maps with the common input space \(V\), as in Eq.~\eqref{w4:eq:phi-family}.}

\subsection{The two-query encoder}

\begin{wlemma}[Two-query encoder]\label{w4:lem:two-query}
There are fixed unitary gates \(G_0,G_1,G_2\) on \(\mathcal H\) such
that the two-query circuit
\begin{equation}\label{w4:eq:two-query-circuit}
\mathcal E(U)=G_2(U_{\mathsf S}\otimes I_{\mathsf A})
G_1(U_{\mathsf S}\otimes I_{\mathsf A})G_0
\end{equation}
satisfies, for every \(U\in SU(3)\),
\begin{equation}\label{w4:eq:two-query-endpoint}
\mathcal E(U)({I_{\mathsf S}}\otimes
|c_{4/5}(U)\rangle_{\mathsf A})=A_{\mathcal M_4}^{\mathcal H}\Phi_{1/5}(U).
\end{equation}
\end{wlemma}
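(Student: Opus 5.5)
We prove Lemma~\ref{w4:lem:two-query} by following a representation path from the initial map in Eq.~\eqref{w4:eq:initial-state} to the midpoint family $\Phi_{1/5}$. Both ends are transpose covariant: the initial map by construction, and $\Phi_{1/5}$ by Lemma~\ref{w4:lem:tc-phi}. Lemmas~\ref{w4:lem:tc-encoding} and~\ref{w4:lem:tc-query} keep covariance intact under fixed encodings and queries. So by Proposition~\ref{w4:lem:tc-seed} it suffices to do two things. First, build fixed isometric encodings $\mathcal E_{\mathrm{fix}}$ before each query. Each must intertwine the current output action with one of the form $h^*\otimes\rho_{\mathcal M}(g,h)$, where the factor $h^*$ sits on the query qutrit. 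Second, check that the final output representation is $\rho_1\oplus\rho_2\oplus\rho_3$, embedded via $A_{\mathcal M_4}^{\mathcal H}$, and that the seed at $U=I$ equals $\Phi_{1/5}(I)$. Uniform normalization (Corollary~\ref{w4:lem:tc-normalization}) then reduces all isometry checks to one norm evaluation at $I$.

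\emph{First query.} The blocks are $(\bar3,1),(\bar3,8),(6,8),(\overline{15},8)$, with amplitudes from Eq.~\eqref{w4:eq:initial-weights}. In each block we use a CG decomposition to split the right label as $\mu\hookrightarrow\bar3\otimes\mu'$, so that a qutrit carrying $h^*$ is exposed:
\[
1\hookrightarrow\bar3\otimes 3,\qquad 8\hookrightarrow\bar3\otimes\bar6\ \text{or}\ \bar3\otimes 3.
\]
The query then turns that factor into $g$ (Lemma~\ref{w4:lem:tc-query}), and we recouple $3\otimes\lambda$ on the left. The selection rule~\eqref{w4:eq:query-selection-rule} tells us which left labels survive; for example, the $10$ branch from $(6,8)$ vanishes. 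Multiplicity counting (Proposition~\ref{w4:lem:tc-multiplicity}) and tensor--Hom reshaping then give each surviving seed up to one scalar. These scalars are computed from explicit contractions of $\delta$, $\epsilon$ and the Bell vector, in the same way as $S_{\lambda,8}$. The result should occupy a $75$-dimensional carrier $\mathcal M_2$ compressed into the $90$-dimensional memory. Where two equivalent copies of a block appear, the $1:4$ singlet--adjoint weight, i.e.\ $a=4/5$, lets a fixed unitary rotation merge them into one copy and leave an orthogonal copy empty. This rotation must be chosen so that the next stage lands in only $30$ memory dimensions.

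\emph{Second query.} We repeat the procedure. A fixed $\mathcal D_1$/$\mathcal E_2$ exposes a new $h^*$ qutrit, and the query converts it to $g$. The final recoupling $\mathcal D_2$ should send the surviving blocks onto the three types $(3,6)$, $(\bar6,\bar3)$ and $(3,\bar3)$ of Example~\ref{w4:tc:remark:midpoint-blocks}. We then compare seeds block by block. Each of these three blocks has seed multiplicity one: the multiplicity of $\bar3$ in $3\otimes6$, in $\bar6\otimes3$, and in $3\otimes3$ is $1$. Hence only three complex scalars need matching. Their moduli are fixed by the weights $\sqrt{1/5},\sqrt{1/5},\sqrt{3/5}$, and their phases can be absorbed into fixed diagonal gates.

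Finally, each isometric encoding is completed to a unitary on the $270$-dimensional workspace by extending an orthonormal basis. This yields the gates $G_0,G_1,G_2$, with the query qutrit kept as the tensor factor $\mathsf S$.

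\emph{Main obstacle.} The hardest step is the amplitude bookkeeping at the first query. The four input amplitudes must recombine, through the rotation of equivalent copies, into exactly the weights of $\Phi_{1/5}$, and every other branch must cancel. Only $a=4/5$ is expected to make this work. We would first settle it with an explicit Gram-matrix computation of the seeds at $U=I$, using the symbolic check of Sec.~\ref{sm:symbolic} as a cross-validation.
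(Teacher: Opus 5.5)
Your outline follows the same representation path as the paper: covariant seeds, right embeddings that expose an $h^*$ qutrit, the selection rule, merging the two $(8,3)$ copies, and unitary completion. However, the step that carries the lemma is not carried out, and the shortcut you state for it fails. Once seed reconstruction is invoked, the only freedom left at the end is a fixed phase on each multiplicity-one block. The moduli $\sqrt{1/5},\sqrt{1/5},\sqrt{3/5}$ are outputs of the circuit, not free parameters, and they are decided at two interference points upstream.

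First, after the first query, the $(\bar3,1)$ and $(\bar3,8)$ seeds $v_1,w_1$, with amplitudes $\sqrt{1-a}$ and $\sqrt{a/8}$, are re-expressed in the $(1,3)$ and $(8,3)$ couplings through $O_1=\tfrac13\begin{pmatrix}1&\sqrt8\\ \sqrt8&-1\end{pmatrix}$. This matrix fixes $(\sqrt{1-a},\sqrt{a/8})$ exactly when $a=4/5$, and that is where the $1:4$ weight is used. The merge you attribute it to works for every $a$, because each $(8,3)$ copy has a one-dimensional seed space.

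Second, the merged $(8,3)$ branch and the $(8,\bar6)$ branch are encoded into the same space $\mathcal H_{\mathsf S}\otimes W_8\otimes V^*$, and both feed the $(3,\bar3)$ and $(\bar6,\bar3)$ outputs. Take the source seeds $u_2+v_2$ and $(u_2-v_2)/\sqrt3$, with overlap $u_2^\dagger v_2=-I_3/2$, and a relative phase $e^{i\theta}$ between the two encodings. The output weights are then $\tfrac{3}{20}|1+e^{i\theta}|^2$ and $\tfrac1{20}|3-e^{i\theta}|^2$. Only $\theta=0$ gives $3/5$ and $1/5$, and no fixed gate after the query can repair a wrong $\theta$, since the two output types are inequivalent. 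The paper pins this down through explicit contractions:
\begin{itemize}
\item the decoding sign $-(A_8^{3,6})^\dagger$;
\item the encoding $-(A_{\bar3}^{3,3})^*$;
\item the identity $\delta_{ac}\epsilon_{bdj}-\delta_{bc}\epsilon_{adj}=\epsilon_{abd}\delta_{cj}-\epsilon_{abj}\delta_{cd}$;
\item the recoupling $O_2$, which fixes $(\sqrt{3/5},1/\sqrt5)$.
\end{itemize}

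Deferring all of this to a later Gram computation leaves the lemma unproved. The memory dimensions you assume are also consequences of choices you have not made. The first-query memory is $9+36+45=90$ only if $(6,8)$ is encoded via $8\hookrightarrow\bar3\otimes\bar6$ and $(\overline{15},8)$ via $8\hookrightarrow\bar3\otimes3$; the other choice for $(\overline{15},8)$ would alone need $15\cdot6=90$ memory dimensions. The second-query memory is $6+24=30$ only if $(1,3)$ is sent to right $6$ and both left-adjoint blocks to right $\bar3$. Finally, the seed multiplicity for the $(3,6)$ block is that of $\bar3$ in $3\otimes\bar6$; as you wrote it, $3\otimes6=8\oplus10$ contains no $\bar3$.
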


\begin{wproof}[Proof of Lemma~\ref{w4:lem:two-query}]
The first query selects three occupied representation blocks, and the
second maps them to the intermediate family. We fix the encoding and
decoding phases through their seeds and extend the resulting isometries
to unitaries on the common workspace.

\subsubsection{First encoding and query}

For an input block \((\lambda,\sigma)\), a right embedding
\(\sigma\hookrightarrow\bar3\otimes\mu\) exposes the query
qutrit with action \(h^*\); the memory retains \((\lambda,\mu)\). This fixed intertwining encoding preserves covariance by Lemma~\ref{w4:lem:tc-encoding}.
After a forward query, the {output action is \(g\otimes\lambda(g)\otimes\mu(h)\)} by Lemma~\ref{w4:lem:tc-query}. {By Eq.~\eqref{w4:eq:query-selection-rule},
the nonzero output branches therefore satisfy}
\[
\lambda_{\mathrm{out}}\in
\operatorname{Irr}(3\otimes\lambda)
\cap\operatorname{Irr}(\bar3\otimes\mu).
\]
The right memories \(3,\bar6\) at this query and \(6,\bar3\)
at the next use the computational or normalized symmetric tensor
bases, which satisfy the compatibility condition
in Eq.~\eqref{w4:tc:compatible-basis}.
The Hermitian basis of \(W_8\) is used for adjoint coordinates,
with its tensor action expressed through \({A_8^{3,\bar3}}\).

For \(\lambda=6,\overline{15}\), Bell pairing gives the embedding
\({A_8^{3,\lambda}}: W_8\hookrightarrow V\otimes W_\lambda\),
\begin{equation}\label{w4:eq:left-eight-embedding}
{A_8^{3,\lambda}}=\sqrt{\frac8{d_\lambda}}
\Bigl(I_{\mathsf S}\otimes \bigl(A_\lambda^{\bar3,8}\bigr)^\dagger\Bigr)
(|I\rangle\!\rangle_{\mathsf{SD}}\otimes I_{W_8}).
\end{equation}
This map intertwines \(8\) with \(3\otimes\lambda\).
The Bell ket creates \(\mathcal H_{\mathsf S}\otimes(V^*)_{\mathsf D}\),
and \(\bigl(A_\lambda^{\bar3,8}\bigr)^\dagger\) contracts
\((V^*)_{\mathsf D}\otimes W_8\), leaving the output
\(\mathcal H_{\mathsf S}\otimes W_\lambda\). The identity
\(\operatorname{Tr}_{V^*}\Pi_\lambda=(d_\lambda/8)I_8\) gives
\(\bigl(A_8^{3,\lambda}\bigr)^\dagger {A_8^{3,\lambda}}=I_8\). For the displayed \({A_6^{\bar3,8}}\),
\[
[{A_8^{3,6}}|X\rangle]_{q,ac}=-\frac1{\sqrt6}\sum_{t=0}^2
(\epsilon_{qat}X_{ct}+\epsilon_{qct}X_{at}).
\]
For the singlet and adjoint right factors, use
the normalized Bell inclusion \(\mathbb C\hookrightarrow V^*\otimes V\) and
\(\bigl(A_8^{3,\bar3}\bigr)^*\), respectively, where \({A_8^{3,\bar3}}\) is defined in Eq.~\eqref{w4:eq:adjoint-coordinate-embedding}. The former is defined for \(z\in\mathbb C\) by
\begin{equation}\label{w4:eq:first-right-embeddings}
z\longmapsto\frac{z}{\sqrt3}|I\rangle\!\rangle_{\mathsf{SR}}.
\end{equation}
For a right \(\bar6\) memory, use
\(\bigl[-(A_8^{3,6})^*\bigr]: W_8\hookrightarrow
V^*\otimes\operatorname{Sym}^2(V^*)\), with
\begin{equation}\label{w4:eq:first-six-encoding}
\Bigl[\bigl[-(A_8^{3,6})^*\bigr]|X\rangle\Bigr]_{i,jk}=\frac1{\sqrt6}\sum_{t=0}^2
\bigl(\epsilon_{ijt}X_{tk}+\epsilon_{ikt}X_{tj}\bigr).
\end{equation}
The conjugate embedding \(\bigl(A_8^{3,\bar3}\bigr)^*\) places the \(V^*\) query factor before the \(V\) memory factor.
The normalized Bell inclusion and \(\bigl(A_8^{3,\bar3}\bigr)^*\) are orthogonal isometries onto the singlet and adjoint
parts of \(\bar3\otimes3\). The squared norm of the numerator in
Eq.~\eqref{w4:eq:first-six-encoding} is
\(6\|X\|^2-2|\operatorname{tr}X|^2\), so this map is isometric on
\( W_8\). Their intertwining relations are
\[
(h^*\otimes h)\bigl(A_8^{3,\bar3}\bigr)^*=\bigl(A_8^{3,\bar3}\bigr)^*\operatorname{Ad}_h,
\qquad
(h^*\otimes\operatorname{Sym}^2h^*)\bigl[-(A_8^{3,6})^*\bigr]
=\bigl[-(A_8^{3,6})^*\bigr]\operatorname{Ad}_h.
\]
We encode with the Bell inclusion and \(\bigl(A_8^{3,\bar3}\bigr)^*\) for the initial \((\bar3,1),(\bar3,8)\)
blocks, \(\bigl[-(A_8^{3,6})^*\bigr]\) for \((6,8)\), and \(\bigl(A_8^{3,\bar3}\bigr)^*\) for
\((\overline{15},8)\). The first-query memory is
\begin{equation}\label{w4:eq:first-memory}
\mathcal M_1=(V^*\otimes V)\oplus
(\operatorname{Sym}^2V\otimes\operatorname{Sym}^2(V^*))
\oplus(W_{\overline{15}}\otimes V),
\qquad \dim\mathcal M_1=9+36+45=90.
\end{equation}
We identify \(\mathcal M_1\) with \(\mathcal H_{\mathsf A}\) through fixed
orthonormal bases and place the query qutrit first by permuting factors. Together with \(\mathsf T\) and the initial left
decomposition, these maps specify \(G_0\) isometrically on the full
\(195\)-dimensional input subspace \(\mathcal H_{\mathsf S}\otimes\mathcal H_{\mathsf P}\).

For the first memory branch, label the three factors as query
\(\mathsf S\), left memory \(\mathsf L\), and right memory
\(\mathsf R\), in that order. Define the seed maps from \(V\) to
\(\mathcal H_{\mathsf S}\otimes(V^*)_{\mathsf L}\otimes V_{\mathsf R}\)
by
\[
u_1=
{\frac1{\sqrt3}\sum_{a,j=0}^2|a\rangle_{\mathsf S}|a\rangle_{\mathsf L}|j\rangle_{\mathsf R}\langle j|},
\qquad
v_1=
{\frac1{\sqrt3}\sum_{a,j=0}^2|a\rangle_{\mathsf S}|j\rangle_{\mathsf L}|a\rangle_{\mathsf R}\langle j|},
\qquad w_1=\frac{3u_1-v_1}{\sqrt8}.
\]
{The bras use the fixed input basis of \(V\). }{The maps} satisfy \(u_1^\dagger u_1=v_1^\dagger v_1=I_3\),
\(u_1^\dagger v_1=I_3/3\), and \(v_1^\dagger w_1=0\),
\(w_1^\dagger w_1=I_3\). Define the fixed seeds for the other two
branches by
\[
\chi_6=(\mathrm{SWAP}_{6,3}\otimes I_6)
\Bigl(I_6\otimes \bigl[-(A_8^{3,6})^*\bigr]\Bigr)S_{6,8},
\qquad
\chi_{\overline{15}}=(\mathrm{SWAP}_{15,3}\otimes I_3)
\Bigl(I_{15}\otimes \bigl(A_8^{3,\bar3}\bigr)^*\Bigr)S_{\overline{15},8},
\]
where \(S_{6,8}\) and \(S_{\overline{15},8}\) are defined in Eq.~\eqref{w4:eq:initial-block-seeds}.
Each map takes \(V\) into its corresponding sector of
\(\mathcal H_{\mathsf S}\otimes\mathcal M_1\), where \(\mathcal M_1\) is defined in Eq.~\eqref{w4:eq:first-memory}. The encoded {map} before
the first query is
\begin{equation}\label{w4:eq:first-query-before}
(I_{\mathsf S}\otimes U_{\mathsf L}^*\otimes I_{\mathsf R}) \left(\frac{v_1}{\sqrt5}+\frac{w_1}{\sqrt{10}}\right)U^\top \oplus\frac1{\sqrt5} (I_{\mathsf S}\otimes(\operatorname{Sym}^2U)_{\mathsf L}\otimes I_{\mathsf R}) \chi_6U^\top \oplus\frac1{\sqrt2} (I_{\mathsf S}\otimes\overline{15}(U)_{\mathsf L}\otimes I_{\mathsf R}) \chi_{\overline{15}}U^\top.
\end{equation}
The first query replaces the first identity in each summand by
\(U\); in the first summand this gives
\((U_{\mathsf S}\otimes U_{\mathsf L}^*\otimes I_{\mathsf R})
(v_1/\sqrt5+w_1/\sqrt{10})U^\top\).

\noindent\begin{minipage}{\textwidth}
The selection rule gives the following output blocks:
\begin{center}
\setlength{\tabcolsep}{3pt}
\renewcommand{\arraystretch}{1.25}
\begin{tabularx}{\linewidth}{@{}l X X l@{}}
\toprule
Initial blocks & Right embedding & Left candidates & Output blocks\\
\midrule
\((\bar3,1),(\bar3,8)\) &
\(1\oplus8=\bar3\otimes3\) &
\(3\otimes\bar3=1\oplus8\) & \((1,3),(8,3)\)\\
\((6,8)\) & \(8\hookrightarrow\bar3\otimes\bar6=8\oplus\overline{10}\) &
\(3\otimes6=8\oplus10\) & \((8,\bar6)\)\\
\((\overline{15},8)\) & \(8\hookrightarrow\bar3\otimes3=1\oplus8\) &
\(3\otimes\overline{15}=8\oplus\overline{10}\oplus27\) & \((8,3)\)\\
\bottomrule
\end{tabularx}
\end{center}
\end{minipage}
\par
The right memories \(3,\bar6\) permit, respectively,
\(\bar3\otimes3=1\oplus8\) and
\(\bar3\otimes\bar6=8\oplus\overline{10}\). In particular,
\(10=\operatorname{Sym}^3V\) and its complex conjugate \(\overline{10}\) are inequivalent
representations. The label \(27\) denotes the
remaining irreducible component, of dimension \(3\cdot15-8-10=27\). Every unwanted branch has zero seed and therefore
vanishes for all \(U\), by Proposition~\ref{w4:lem:tc-seed}.

\subsubsection{A worked branch: \((6,8)\) to \((8,\bar6)\)}
\label{w4:sec:first-query-example}
\label{w4:ex:first-6-8}

The \((6,8)\) branch leaves only \((8,\bar6)\) after the first query.
Its left \(6\) comes from coupling the data with the catalyst's first
adjoint factor; its right \(8\) is the second adjoint factor.
The initial {map}, its encoding, and the queried {map} are
\begin{equation}\label{w4:eq:worked-six-branch}
\frac1{\sqrt5}(\operatorname{Sym}^2U\otimes I_8) S_{6,8}U^\top, \frac1{\sqrt5}(I_{\mathsf S}\otimes(\operatorname{Sym}^2U)_{\mathsf L}\otimes I_{\mathsf R}) \chi_6U^\top, \frac1{\sqrt5}(U_{\mathsf S}\otimes(\operatorname{Sym}^2U)_{\mathsf L}\otimes I_{\mathsf R}) \chi_6U^\top.
\end{equation}
The last {map} has left candidates \(8\oplus10\) and right memory
\(\bar6\). The seed constraint permits \(8\oplus\overline{10}\),
so only the left \(8\) survives.

We decode the surviving left adjoint component with the embedding
\({A_8^{3,6}}\) defined in Eq.~\eqref{w4:eq:left-eight-embedding}
and fix its phase using the standard normalized seed
\(S_{8,\bar6}:V\to W_8\otimes\operatorname{Sym}^2(V^*)\),
whose entries below are those of
\(({A_8^{3,\bar3}}\otimes I_{\operatorname{Sym}^2(V^*)})S_{8,\bar6}\),
with the symmetric factor written in its natural tensor coordinates:
\begin{equation}\label{w4:eq:seed-eight-six}
(S_{8,\bar6})_{bc,ad;j}
=\frac14(\delta_{ac}\epsilon_{bdj}-\epsilon_{abj}\delta_{cd}).
\end{equation}
Here \(bc\) are traceless matrix indices, \(ad\) are symmetric
right-memory indices, and \(j\) is the data column. At \(U=I\),
contraction with the uncorrected \(\bigl(A_8^{3,6}\bigr)^\dagger\) gives
\[
\frac14(\epsilon_{jab}\delta_{cd}+\epsilon_{jdb}\delta_{ca})
=-(S_{8,\bar6})_{bc,ad;j}.
\]
We therefore choose the fixed decoding
\(-\bigl(A_8^{3,6}\bigr)^\dagger\). For arbitrary \(U\), it produces
\begin{equation}\label{w4:eq:worked-six-output}
\frac1{\sqrt5}(R(U)\otimes I_6)S_{8,\bar6}U^\top
{:V\longrightarrow} W_8\otimes\operatorname{Sym}^2(V^*).
\end{equation}
This space carries type \((8,\bar6)\).
Its amplitude is preserved because the encoding, query, and decoding
are isometric on this single surviving branch. The explicit
second-query calculation below also verifies
\(S_{8,\bar6}^\dagger S_{8,\bar6}=I_3\).

\subsubsection{First recoupling and the fixed multiplicity rotation}

Recoupling the first memory branch preserves its chosen amplitudes.
The source basis \((v_1,w_1)\) and the post-query singlet--adjoint basis
\((u_1,(3v_1-u_1)/\sqrt8)\) satisfy
\[
\left(u_1,\frac{3v_1-u_1}{\sqrt8}\right)=(v_1,w_1)O_1,
\]
where
\begin{equation}\label{w4:eq:first-recoupling}
O_1=\frac13\begin{pmatrix}1&\sqrt8\\\sqrt8&-1\end{pmatrix},
\qquad
O_1\binom{1/\sqrt5}{1/\sqrt{10}}
=\binom{1/\sqrt5}{1/\sqrt{10}}.
\end{equation}
Because \(O_1\) is real, orthogonal, and symmetric, it also maps the
source coefficient column to the target coefficient column, leaving
the chosen amplitudes unchanged. At the tensor level,
\begin{equation}\label{w4:eq:first-seed-match}
\sqrt2v_1+\frac{3u_1-v_1}{\sqrt8}
=\frac{3(u_1+v_1)}{\sqrt8}
=\sqrt2u_1+\frac{3v_1-u_1}{\sqrt8}.
\end{equation}
The \((1,3)\) decoding contracts \(\mathsf S,\mathsf L\) with the
normalized Bell bra. The \((8,3)\) decoding contracts these factors
with \(\langle\!\langle T_r|\), giving the standard seed
\(S_{8,3}:V\to W_8\otimes V\). Its matrix-coordinate entries below
are those of \(({A_8^{3,\bar3}}\otimes I_V)S_{8,3}\):
\begin{equation}\label{w4:eq:seed-eight-three}
(S_{8,3})_{bc,k;j}=\sqrt{\frac38}
\left(\delta_{bk}\delta_{cj}-\frac13\delta_{bc}\delta_{kj}\right).
\end{equation}
The last first-query branch uses \(\bigl(A_8^{3,\overline{15}}\bigr)^\dagger\) from
Eq.~\eqref{w4:eq:left-eight-embedding}. Its decoded seed is also the
positive \(S_{8,3}\): contracting the projected initial Bell seed
uses
\(\sum_{r=0}^7(T_r)_{bc}(T_r)_{jk}
=\delta_{bk}\delta_{cj}-\delta_{bc}\delta_{jk}/3\).
These tensor contractions fix every decoding phase. Together they define \(\mathcal D_1\), an isometric decoding
from the 99-dimensional surviving first-query subspace to
\[
V\oplus(W_8\otimes V)\oplus
(W_8\otimes\operatorname{Sym}^2(V^*))\oplus(W_8\otimes V).
\]
The four summands have types
\((1,3),(8,3),(8,\bar6),(8,3)\), respectively. On this source, the ordered branch contractions are
\begin{equation}\label{w4:eq:first-decoding}
\mathcal D_1
=\left[
\begin{pmatrix}
\langle\!\langle I|/\sqrt3\\
\bigl(A_8^{3,\bar3}\bigr)^\dagger
\end{pmatrix}\otimes I_3\right]
\oplus\left[-\bigl(A_8^{3,6}\bigr)^\dagger\otimes I_6\right]
\oplus\left[\bigl(A_8^{3,\overline{15}}\bigr)^\dagger\otimes I_3\right].
\end{equation}
The two stacked contractions in the first block share the same
\(27\)-dimensional source. Its output is
\begin{equation}\label{w4:eq:first-decoded-state}
\frac1{\sqrt5}U^\top \oplus\frac1{\sqrt{10}}(R(U)\otimes I_3)S_{8,3}U^\top \oplus\frac1{\sqrt5}(R(U)\otimes I_6)S_{8,\bar6}U^\top \oplus\frac1{\sqrt2}(R(U)\otimes I_3)S_{8,3}U^\top.
\end{equation}
A fixed multiplicity rotation merges the two \((8,3)\) copies.
Their seeds and output actions agree, so they carry the same normalized
family for every \(U\). We rotate their multiplicity factor by
\begin{equation}\label{w4:eq:multiplicity-rotation}
O_{\mathrm{merge}}=\frac1{\sqrt6}
\begin{pmatrix}1&\sqrt5\\\sqrt5&-1\end{pmatrix},
\qquad
O_{\mathrm{merge}}\binom{1/\sqrt{10}}{1/\sqrt2}
=\binom{\sqrt{3/5}}0.
\end{equation}
The occupied family now lies in
\begin{equation}\label{w4:eq:first-occupied-space}
\mathcal M_2=V\oplus(W_8\otimes V)
\oplus(W_8\otimes\operatorname{Sym}^2(V^*)),
\qquad \dim\mathcal M_2=3+24+48=75,
\end{equation}
with block types \((1,3),(8,3),(8,\bar6)\) and {map}
\begin{equation}\label{w4:eq:first-query-amplitudes}
\frac1{\sqrt5}\left[
U^\top\oplus
\sqrt3(R(U)\otimes I_3)S_{8,3}U^\top\oplus
(R(U)\otimes I_6)S_{8,\bar6}U^\top\right].
\end{equation}
The rotation acts unitarily on both full copies and places this
{map's image} in the first.

\subsubsection{Second encoding and query}

For the second query we use the memory
\begin{equation}\label{w4:eq:second-memory}
\mathcal M_3=\operatorname{Sym}^2V\oplus( W_8\otimes V^*),
\qquad \dim\mathcal M_3=6+24=30,
\end{equation}
embedded into the same \(90\)-dimensional physical memory.
The \((1,3)\) branch reaches its target through the encoding
\({\Phi_{3,6}}(I):
V\hookrightarrow V^*\otimes\operatorname{Sym}^2V\),
which realizes \(3\hookrightarrow\bar3\otimes6\).
It is isometric by the normalization in Sec.~\ref{w4:sec:notation} and
satisfies \((h^*\otimes\operatorname{Sym}^2h){\Phi_{3,6}}(I)
={\Phi_{3,6}}(I)h\). Its {map} before and after the second query is
\begin{equation}\label{w4:eq:second-singlet-state}
\frac1{\sqrt5}{\Phi_{3,6}}(I)U^\top
\ \longmapsto\
\frac1{\sqrt5}(U\otimes I_6){\Phi_{3,6}}(I)U^\top
=\frac1{\sqrt5}{\Phi_{3,6}}(U).
\end{equation}

The two left-adjoint blocks enter orthogonal antisymmetric and symmetric
sectors. We order their tensor factors as query \(\mathsf S\), matrix
coordinates \(\mathsf B,\mathsf C\), and right memory \(\mathsf R\),
and vectorize the matrix ket as \(|X\rangle\!\rangle_{\mathsf{BC}}\).
On \(\mathsf S,\mathsf R\), we encode right \(3\) with
\(-\bigl(A_{\bar3}^{3,3}\bigr)^*\) and right \(\bar6\) with the same real symmetric inclusion
\(A_{\bar6}^{\bar3,\bar3}\), now of type
\(\operatorname{Sym}^2(V^*)\hookrightarrow V^*\otimes V^*\),
on these two factors.
The antisymmetric and symmetric images are orthogonal and together
fill the \(72\)-dimensional space associated with the common
\(24\)-dimensional memory. These embeddings and
\({\Phi_{3,6}}(I)\) define an isometry
\(\mathcal E_2:\mathcal M_2\to \mathcal H_{\mathsf S}\otimes\mathcal M_3\).
{In these input and output decompositions, its block matrix is}
\begin{equation}\label{w4:eq:second-encoding}
\mathcal E_2={\begin{pmatrix}
\Phi_{3,6}(I)&0&0\\
0&-(\mathrm{SWAP}_{8,3}\otimes I_3)\bigl(I_8\otimes(A_{\bar3}^{3,3})^*\bigr)&
(\mathrm{SWAP}_{8,3}\otimes I_3)\bigl(I_8\otimes A_{\bar6}^{\bar3,\bar3}\bigr)
\end{pmatrix}}.
\end{equation}
The swap places the query factor first; the domain has dimension
\(75\), and the ambient codomain has dimension \(90\).

Let
\(\Pi_{\mathrm{tr}}=I_{\mathsf B}\otimes I_{\mathsf C}
-|I\rangle\!\rangle_{\mathsf{BC}}\langle\!\langle I|_{\mathsf{BC}}/3\)
be the projector onto \({A_8^{3,\bar3}}(W_8)\).
Define \(u_2,v_2:V\to\mathcal H_{\mathsf S}\otimes W_8\otimes(V^*)_{\mathsf R}\)
through their vectorized forms:
\begin{equation}\label{w4:eq:second-pairings}
\begin{aligned}
(I_{\mathsf S}\otimes{A_8^{3,\bar3}}\otimes I_{\mathsf R})u_2
&=\sqrt{\frac38}(I_{\mathsf S}\otimes\Pi_{\mathrm{tr}}\otimes I_{\mathsf R})
\Bigl(|I\rangle\!\rangle_{\mathsf{SC}}\otimes\bigl(A_{\bar3}^{3,3}\bigr)_{\mathsf{BR}}\Bigr),\\
(I_{\mathsf S}\otimes{A_8^{3,\bar3}}\otimes I_{\mathsf R})v_2
&=\sqrt{\frac38}(I_{\mathsf S}\otimes\Pi_{\mathrm{tr}}\otimes I_{\mathsf R})
\Bigl(\bigl(A_{\bar3}^{3,3}\bigr)_{\mathsf{SB}}\otimes|I\rangle\!\rangle_{\mathsf{CR}}\Bigr).
\end{aligned}
\end{equation}
Both right sides are placed in the order
\(\mathsf S,\mathsf B,\mathsf C,\mathsf R\).
Write \((u_2)_{abcd;j}\) and \((v_2)_{abcd;j}\) for the entries of these
vectorized maps; they are
\begin{equation}\label{w4:eq:second-pairing-coordinates}
\begin{aligned}
(u_2)_{abcd;j}&=\frac{\sqrt3}{4}
\left(\delta_{ac}\epsilon_{bdj}-\frac13\delta_{bc}\epsilon_{adj}\right),\\
(v_2)_{abcd;j}&=\frac{\sqrt3}{4}
\left(\epsilon_{abj}\delta_{cd}-\frac13\delta_{bc}\epsilon_{adj}\right).
\end{aligned}
\end{equation}
Both pairings are isometries with a fixed cross overlap; Bell contraction gives
\begin{equation}\label{w4:eq:second-pairing-gram}
u_2^\dagger u_2=v_2^\dagger v_2=I_3,\qquad u_2^\dagger v_2=-I_3/2.
\end{equation}
For example, the unprojected cross contraction is
\(\frac12\sum_{a,b=0}^2\epsilon_{baj}\epsilon_{abk}=-\delta_{jk}\);
removing the common Bell direction changes the diagonal and cross
entries to \(8/3\) and \(-4/3\), respectively, before the common
normalization \(3/8\).

The right-\(3,\bar6\) source seeds are the orthogonal isometries
\begin{equation}\label{w4:eq:second-seeds}
u_2+v_2,\qquad \frac{u_2-v_2}{\sqrt3}.
\end{equation}
Encoding \(S_{8,3}\) with \(-\bigl(A_{\bar3}^{3,3}\bigr)^*\) gives the first, and encoding
\(S_{8,\bar6}\) symmetrically gives the second. These identities
follow directly from Eq.~\eqref{w4:eq:second-pairing-coordinates} and
the three-dimensional identity
\begin{equation}\label{w4:eq:epsilon-recoupling}
\delta_{ac}\epsilon_{bdj}-\delta_{bc}\epsilon_{adj}
=\epsilon_{abd}\delta_{cj}-\epsilon_{abj}\delta_{cd}.
\end{equation}
They also verify the normalization of both standard seeds.

Define the fixed source combination
\begin{equation}\label{w4:eq:second-source-combination}
\Xi=\sqrt{\frac35}(u_2+v_2)+\frac1{\sqrt5}\frac{u_2-v_2}{\sqrt3}.
\end{equation}
The second query acts on the left-adjoint part as
\begin{equation}\label{w4:eq:second-query-states}
(I_{\mathsf S}\otimes R(U)\otimes I_{\mathsf R})\Xi U^\top
\longmapsto
(U_{\mathsf S}\otimes R(U)\otimes I_{\mathsf R})\Xi U^\top.
\end{equation}
Here
\(\bigl(A_8^{3,\bar3}\bigr)^\dagger(U_{\mathsf B}\otimes U_{\mathsf C}^*)
{A_8^{3,\bar3}}=R(U)\), so the {image of the displayed map} lies in
the \(3\times8\times3=72\)-dimensional carrier.
Thus the post-query left candidates are
\(3\otimes8=3\oplus\bar6\oplus15\). Right memory \(\bar3\)
permits only \(\bar3\otimes\bar3=3\oplus\bar6\), so the
\(15\) branch has zero seed and vanishes for all \(U\).

\subsubsection{Second recoupling and the final decoding}

The second recoupling also preserves the chosen amplitudes.
Its normalized target basis is \((u_2,(u_2+2v_2)/\sqrt3)\).
The Gram relations in Eq.~\eqref{w4:eq:second-pairing-gram} connect it
to the source basis in Eq.~\eqref{w4:eq:second-seeds} through
\[
\left(u_2,\frac{u_2+2v_2}{\sqrt3}\right)
=\left(u_2+v_2,\frac{u_2-v_2}{\sqrt3}\right)O_2,
\]
where
\begin{equation}\label{w4:eq:second-recoupling}
O_2=\frac12\begin{pmatrix}1&\sqrt3\\\sqrt3&-1\end{pmatrix},
\qquad
O_2\binom{\sqrt{3/5}}{1/\sqrt5}
=\binom{\sqrt{3/5}}{1/\sqrt5}.
\end{equation}
Like \(O_1\), the real orthogonal symmetric matrix \(O_2\) maps source
coefficients to target coefficients and fixes the chosen vector.
Equivalently,
\begin{equation}\label{w4:eq:second-seed-match}
\sqrt3(u_2+v_2)+\frac{u_2-v_2}{\sqrt3}
=\sqrt3u_2+\frac{u_2+2v_2}{\sqrt3}.
\end{equation}
The matrices \(O_1,O_2\) express the same seeds in different
coupling bases. The multiplicity rotation
\(O_{\mathrm{merge}}\) is the physical operation included in
\(G_1\) to combine the two equivalent copies.

Concrete left embeddings identify the two target branches:
\(\bigl(A_{\bar3}^{\bar3,8}\bigr)^*:V\hookrightarrow V\otimes W_8\) for left \(3\) and
\(\bigl(A_6^{\bar3,8}\bigr)^*:\operatorname{Sym}^2(V^*)\hookrightarrow
V\otimes W_8\) for left \(\bar6\). Their fixed-basis matrices are the complex conjugates of the initial embeddings in Eq.~\eqref{w4:eq:initial-embeddings}.
{With input column \(j\) and} the symmetric ket
\(|Y\rangle=\sum_{k,c=0}^2Y_{kc}|kc\rangle\),
{the vectorized matrix coordinates are}
\begin{equation}\label{w4:eq:final-embeddings}
\begin{aligned}
\Bigl[(I_{\mathsf S}\otimes{A_8^{3,\bar3}})\bigl(A_{\bar3}^{\bar3,8}\bigr)^*\Bigr]_{abc{;j}}
&=\sqrt{\frac38}
\left(\delta_{ac}{\delta_{bj}}-\frac13\delta_{bc}{\delta_{aj}}\right),\\
\Bigl[(I_{\mathsf S}\otimes{A_8^{3,\bar3}})\bigl(A_6^{\bar3,8}\bigr)^*|Y\rangle\Bigr]_{abc}
&=\frac1{\sqrt2}\sum_{t=0}^2\epsilon_{abt}Y_{tc}.
\end{aligned}
\end{equation}
Hermiticity of the basis matrices gives
\(\bigl(A_8^{3,\bar3}\bigr)^*=\mathrm{SWAP}\,{A_8^{3,\bar3}}\), so complex conjugation
exchanges the two matrix indices in the real tensor formulas of
Eq.~\eqref{w4:eq:initial-embeddings}.
Since \(R(g)\) is real in this basis, the orthogonality, isometry,
and intertwining relations follow by conjugating those of
\({A_{\bar3}^{\bar3,8}}\) and \({A_6^{\bar3,8}}\).
Their seed matches are
\begin{equation}\label{w4:eq:final-seed-identifications}
u_2=\Bigl(\bigl(A_{\bar3}^{\bar3,8}\bigr)^*\otimes I_{\mathsf R}\Bigr){\Phi_{3,\bar3}}(I),
\qquad
\frac{u_2+2v_2}{\sqrt3}
=\Bigl(\bigl(A_6^{\bar3,8}\bigr)^*\otimes I_{\mathsf R}\Bigr){\Phi_{\bar6,\bar3}}(I).
\end{equation}
The full sources of \(\bigl(A_{\bar3}^{\bar3,8}\bigr)^*\otimes I_{\mathsf R}\) and
\(\bigl(A_6^{\bar3,8}\bigr)^*\otimes I_{\mathsf R}\) are \(V\otimes(V^*)_{\mathsf R}\)
and \(\operatorname{Sym}^2(V^*)\otimes(V^*)_{\mathsf R}\),
of dimensions \(9\) and \(18\), respectively.
For the second equality, use the tensor entries of
\((A_{\bar6}^{\bar3,\bar3}\otimes I_{\mathsf R}){\Phi_{\bar6,\bar3}}(I)\)
in the natural factor order:
\([(A_{\bar6}^{\bar3,\bar3}\otimes I_{\mathsf R}){\Phi_{\bar6,\bar3}}(I)]_{kl,d;j}
=(\delta_{dk}\delta_{lj}+\delta_{dl}\delta_{kj})/(2\sqrt2)\).
The contraction is explicitly
\[
\frac1{\sqrt2}\sum_{t=0}^2\epsilon_{abt}
[(A_{\bar6}^{\bar3,\bar3}\otimes I_{\mathsf R}){\Phi_{\bar6,\bar3}}(I)]_{tc,d;j}
=\frac{\epsilon_{abd}\delta_{cj}+\epsilon_{abj}\delta_{cd}}4,
\]
which equals the vectorized entries of \((u_2+2v_2)/\sqrt3\)
by Eq.~\eqref{w4:eq:epsilon-recoupling}.

These seed identifications determine the final decoding. We apply
\(\bigl((A_{\bar3}^{\bar3,8})^*\bigr)^\dagger\otimes I_{\mathsf R}\) and
\(\bigl((A_6^{\bar3,8})^*\bigr)^\dagger\otimes I_{\mathsf R}\),
retain the latter output in \(\operatorname{Sym}^2(V^*)\otimes V^*\),
and map the independent \((3,6)\) branch to the first intermediate summand. These isometries,
followed by the chosen fixed embedding \(A_{\mathcal M_4}^{\mathcal H}:\mathcal M_{4}\hookrightarrow\mathcal H\),
specify a decoding \(\mathcal D_2\) between two 
\(45\)-dimensional physical subspaces. Explicitly, {on the three summands of \(\mathcal M_4\),}
\begin{equation}\label{w4:eq:second-decoding}
\mathcal D_2{\begin{pmatrix}
I_{18}&0&0\\
0&(A_6^{\bar3,8})^*\otimes I_{\mathsf R}&(A_{\bar3}^{\bar3,8})^*\otimes I_{\mathsf R}
\end{pmatrix}}
=A_{\mathcal M_4}^{\mathcal H}.
\end{equation}
The three input images have dimensions \(18,18,9\) and specify
the full source of \(\mathcal D_2\). The embedding exchanges
the second summand's \(6,3\) factors into the \(3,6\) order of its
physical coordinates. In representation coordinates, its seed is
\begin{equation}\label{w4:eq:middle-seed}
\frac1{\sqrt5}{\Phi_{3,6}}(I)\oplus
\frac1{\sqrt5}{\Phi_{\bar6,\bar3}}(I)\oplus
\sqrt{\frac35}{\Phi_{3,\bar3}}(I)=\Phi_{1/5}(I).
\end{equation}
Its representation also equals that of \(\Phi_{1/5}\), namely
\((3,6)\oplus(\bar6,\bar3)\oplus(3,\bar3)\), in the natural left--right
factor order. Seed reconstruction therefore gives
the arbitrary-\(U\) output {map}
\begin{equation}\label{w4:eq:two-query-state}
\frac1{\sqrt5}{\Phi_{3,6}}(U)\oplus
\frac1{\sqrt5}{\Phi_{\bar6,\bar3}}(U)\oplus
\sqrt{\frac35}{\Phi_{3,\bar3}}(U)
=\Phi_{1/5}(U).
\end{equation}
Thus the physical output {map} of \(\mathcal D_2\), defined in
Eq.~\eqref{w4:eq:second-decoding}, is
\(A_{\mathcal M_4}^{\mathcal H}\Phi_{1/5}(U)\).

\subsubsection{Completing the fixed gates on the workspace}
\label{w4:sec:fixed-gate-completion}

The three prescribed isometries now determine the fixed gates. The first, \(G_0\), takes
\(\mathcal H_{\mathsf S}\otimes\mathcal H_{\mathsf P}\) into
\(\mathcal H_{\mathsf S}\otimes\mathcal M_1\). It combines the transpose
\(\mathsf T\) on the second catalyst factor, the initial left decomposition,
and the first right encodings into one fixed isometry.
Let \(\mathcal O_{\mathrm{merge}}\) act as \(O_{\mathrm{merge}}\otimes I_{24}\)
on the multiplicity factor of the two equivalent \(W_8\otimes V\)
copies, and as the identity on the other decoded blocks.
The first decoding has the 99-dimensional target given above,
and its rotated state family lies in
\(\mathcal M_2\oplus\{0\}_{W_8\otimes V}\).
Let \(P_{\mathcal M_2}\) be the coordinate projection from this
99-dimensional target onto the retained \(75\)-dimensional
carrier \(\mathcal M_2\), removing the final zero copy.
Consequently we prescribe \(G_1\) on the source subspace
\begin{equation}\label{w4:eq:middle-gate-source}
G_1=\mathcal E_2P_{\mathcal M_2}\mathcal O_{\mathrm{merge}}\mathcal D_1,
\qquad {\text{on }}
(\mathcal O_{\mathrm{merge}}\mathcal D_1)^\dagger
\bigl(\mathcal M_2\oplus\{0\}_{W_8\otimes V}\bigr),
\end{equation}
where \(\mathcal D_1\) and \(\mathcal E_2\) are defined in
Eqs.~\eqref{w4:eq:first-decoding} and~\eqref{w4:eq:second-encoding},
respectively, and \(\mathcal O_{\mathrm{merge}}\) extends the
multiplicity rotation in Eq.~\eqref{w4:eq:multiplicity-rotation}.
This composition is isometric on the displayed source subspace. The prescribed
\(G_2\) is \(\mathcal D_2\) on the three allowed second-query
blocks, whose total dimension is \(18+18+9=45\), where
\(\mathcal D_2\) is defined in Eq.~\eqref{w4:eq:second-decoding}.

All these source and target subspaces are contained in the same
\(\mathcal H=\mathcal H_{\mathsf S}\otimes\mathcal H_{\mathsf A}\),
with dimensions \((3,90)\). In particular,
\(\mathcal M_3\) occupies a fixed \(30\)-dimensional subspace of
the physical memory. The joint space
\(\mathcal H_{\mathsf S}\otimes\mathcal M_3\) occupies a fixed
\(90\)-dimensional subspace of \(\mathcal H\).
Each prescribed isometry extends to a unitary gate
\(G_\ell\), \(\ell=0,1,2\), on all of \(\mathcal H\) by completing
orthonormal bases of its source and image. Because the prescribed maps
use fixed tensors, phases, and numerical coefficients, the resulting gates
are independent of \(U\). The evolving family stays in their prescribed
domains throughout the circuit. Its endpoint therefore follows from these
actions for any unitary completion, proving
Eq.~\eqref{w4:eq:two-query-endpoint} and Lemma~\ref{w4:lem:two-query}.
\end{wproof}

\subsection{The fixed flip}

\begin{wlemma}[Flip of the intermediate state]\label{w4:lem:middle-flip}
On the intermediate space of Eq.~\eqref{w4:eq:middle-space}
\begingroup\postdisplaypenalty=10000
\[
\mathcal M_{4}=(V\otimes\operatorname{Sym}^2V)
\oplus(\operatorname{Sym}^2(V^*)\otimes V^*)
\oplus(V\otimes V^*),
\]\endgroup
define the fixed unitary
\[
J{(|x\rangle\oplus|y\rangle\oplus|z\rangle)}
={\mathrm{SWAP}_{6,3}|y\rangle
\oplus\mathrm{SWAP}_{3,6}|x\rangle
\oplus(-\mathrm{SWAP}|z\rangle)}.
\]
Its physical extension is
\begingroup\postdisplaypenalty=10000
\[
\widehat J=A_{\mathcal M_4}^{\mathcal H} J\bigl(A_{\mathcal M_4}^{\mathcal H}\bigr)^\dagger+
\Bigl(I_{\mathcal H}-A_{\mathcal M_4}^{\mathcal H}\bigl(A_{\mathcal M_4}^{\mathcal H}\bigr)^\dagger\Bigr),
\]\endgroup
which acts as the identity on the orthogonal complement of
\(A_{\mathcal M_4}^{\mathcal H}(\mathcal M_{4})\)
in \(\mathcal H\).
The abstract flip satisfies
\begin{equation}\label{w4:eq:middle-flip}
J\Phi_\alpha(U)=\Phi_\alpha(U^\dagger)U^\top,
\qquad 0\le\alpha\le\frac12.
\end{equation}
\end{wlemma}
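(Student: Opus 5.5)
The plan is to apply the Fixed flip criterion, Proposition~\ref{w4:lem:tc-flip}, to the transpose-covariant family \(K=\Phi_\alpha\). Lemma~\ref{w4:lem:tc-phi} already makes this family covariant with respect to \(\rho=\rho_1\oplus\rho_2\oplus\rho_3\). Two hypotheses then remain. The first is the representation exchange condition \(J\rho(g,h)=\rho(h^*,g^*)J\). The second is seed invariance, \(J\Phi_\alpha(I)=\Phi_\alpha(I)\). Once both hold, Eq.~\eqref{w4:tc:flip-family} gives \(J\Phi_\alpha(U)=\Phi_\alpha(U^\dagger)U^\top\) for every \(\alpha\). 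Unitarity of \(J\) is immediate, since it is a block permutation of the summands composed with tensor swaps and a sign.

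For the exchange condition I check each block separately. On the first summand, \(J\) sends \(x\in V\otimes\operatorname{Sym}^2V\), which carries \(g\otimes\operatorname{Sym}^2h\), to \(\mathrm{SWAP}_{3,6}x\in\operatorname{Sym}^2V\otimes V\). The target is the second summand, and \(\rho_2(h^*,g^*)=\operatorname{Sym}^2h\otimes g\), because \((h^*)^*=h\). Under the coordinate identification of \(V^*\) with \(V\) in fixed bases, the identity \(\mathrm{SWAP}_{3,6}(g\otimes\operatorname{Sym}^2h)=(\operatorname{Sym}^2h\otimes g)\mathrm{SWAP}_{3,6}\) is exactly what is needed. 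The block from the second summand back to the first works symmetrically, using \(\rho_1(h^*,g^*)=h^*\otimes\operatorname{Sym}^2g^*\). On the third summand, \(\rho_3(h^*,g^*)=h^*\otimes g\), and \(-\mathrm{SWAP}\) intertwines \(g\otimes h^*\) with it. The main care here is bookkeeping: the bases on the starred carriers are coordinate copies, so that for instance \(\operatorname{Sym}^2(V^*)\) with action \(\operatorname{Sym}^2(g^*)\) is literally \(\operatorname{Sym}^2V\) with action \(\operatorname{Sym}^2 g\) evaluated at \(g^*\).

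For seed invariance I use the relation noted in Sec.~\ref{w4:sec:notation}, \(\Phi_{\bar6,\bar3}(I)=\mathrm{SWAP}_{3,6}\Phi_{3,6}(I)\). It gives \(\mathrm{SWAP}_{3,6}\Phi_{3,6}(I)=\Phi_{\bar6,\bar3}(I)\) and \(\mathrm{SWAP}_{6,3}\Phi_{\bar6,\bar3}(I)=\Phi_{3,6}(I)\). Because the first two components carry equal weights \(\sqrt\alpha\), the first two blocks of \(\Phi_\alpha(I)\) are exchanged and therefore preserved. On the third block, \(\Phi_{3,\bar3}(I)=A_{\bar3}^{3,3}\), and \(-\mathrm{SWAP}A_{\bar3}^{3,3}=A_{\bar3}^{3,3}\) by Eq.~\eqref{w4:eq:iota}. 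I would verify the swap relation directly from Eq.~\eqref{w4:eq:phi-components}. At \(U=I\) the first map has entries \(\propto(A_6^{3,3})^\dagger\) applied to \(|i\rangle|i\rangle|j\rangle\)-type Bell contractions, and the second has the mirrored ordering. I expect this step, keeping tensor factor orders consistent, to be the main obstacle, though it is routine.

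Finally, the statement about \(\widehat J\) follows by construction. \(\widehat J\) is unitary because \(J\) is unitary and \(A_{\mathcal M_4}^{\mathcal H}\) is an isometry, and it acts trivially on the complement of the image. Since \(\widehat J A_{\mathcal M_4}^{\mathcal H}=A_{\mathcal M_4}^{\mathcal H}J\), Eq.~\eqref{eq:flip-main} follows.
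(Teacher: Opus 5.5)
Your proposal is correct and follows the paper's proof. Both arguments check the exchange relation $J\rho(g,h)=\rho(h^*,g^*)J$ and then establish seed invariance $J\Phi_\alpha(I)=\Phi_\alpha(I)$ from $\Phi_{\bar6,\bar3}(I)=\mathrm{SWAP}_{3,6}\Phi_{3,6}(I)$, the equal weights $\sqrt\alpha$, and $-\mathrm{SWAP}A_{\bar3}^{3,3}=A_{\bar3}^{3,3}$, before invoking Proposition~\ref{w4:lem:tc-flip} and $\widehat J A_{\mathcal M_4}^{\mathcal H}=A_{\mathcal M_4}^{\mathcal H}J$; your block-by-block check of the exchange relation just spells out what the paper states in one line.
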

\begin{wproof}
Write \(\rho=\rho_1\oplus\rho_2\oplus\rho_3\) for the representation
on \(\mathcal M_{4}\), with the component actions in Eq.~\eqref{w4:tc:phi-actions}.
They give
\(J\rho(g,h)=\rho(h^*,g^*)J\).
The equal first two coefficients,
\({\Phi_{\bar6,\bar3}}(I)=\mathrm{SWAP}_{3,6}{\Phi_{3,6}}(I)\), and
\(-\mathrm{SWAP}{A_{\bar3}^{3,3}}={A_{\bar3}^{3,3}}\) give
\(J\Phi_\alpha(I)=\Phi_\alpha(I)\).
Proposition~\ref{w4:lem:tc-flip} therefore yields Eq.~\eqref{w4:eq:middle-flip}.
Since \(\widehat JA_{\mathcal M_4}^{\mathcal H}=A_{\mathcal M_4}^{\mathcal H} J\), the actual physical midpoint {map}
after the flip is
\(A_{\mathcal M_4}^{\mathcal H}\Phi_{1/5}(U^\dagger)U^\top\).
\end{wproof}

\subsection{Decoding and catalyst restoration}
\label{w4:sec:four-query-closure}

\begin{wproof}[Proof of Theorem~\ref{w4:thm:S1}]
Two more forward queries decode the flipped {map}.
Lemmas~\ref{w4:lem:two-query} and~\ref{w4:lem:middle-flip} identify it
{as \(\mathcal E(U^\dagger)(U^\top\otimes|c_{4/5}(U^\dagger)\rangle_{\mathsf A})\).}
We therefore apply the inverse circuit
\begin{equation}\label{w4:eq:reverse-circuit}
\mathcal E(U^\dagger)^\dagger
=G_0^\dagger(U_{\mathsf S}\otimes I_{\mathsf A})
G_1^\dagger(U_{\mathsf S}\otimes I_{\mathsf A})G_2^\dagger,
\end{equation}
which uses two further forward queries to \(U\).

{Evaluating the forward maps at \(U^\dagger\) and composing on the right with \(U^\top\) gives the reverse trajectory. The rightmost product reduces to \((U^\dagger)^\top U^\top=I_{\mathsf S}\).}
After \(G_2^\dagger\), the {map} before the third query is
\begin{equation}\label{w4:eq:third-query-before}
\frac1{\sqrt5}(U^\dagger\otimes I_6)
{\Phi_{3,6}}(I)
\oplus
(U_{\mathsf S}^\dagger\otimes R(U^\dagger)\otimes I_{\mathsf R})
\Xi.
\end{equation}
The third forward query cancels the first \(U^\dagger\) in each
summand, leaving
\begin{equation}\label{w4:eq:third-query-after}
\frac1{\sqrt5}{\Phi_{3,6}}(I)
\oplus
(I_{\mathsf S}\otimes R(U^\dagger)\otimes I_{\mathsf R})
\Xi.
\end{equation}
Next \(G_1^\dagger\) reverses the second encoding, multiplicity
rotation, and first decoding. The {map} before the fourth query is
\begin{equation}\label{w4:eq:fourth-query-before}
(U_{\mathsf S}^\dagger\otimes U_{\mathsf L}^\top\otimes I_{\mathsf R}) \left(\frac{v_1}{\sqrt5}+\frac{w_1}{\sqrt{10}}\right) \oplus\frac1{\sqrt5} (U_{\mathsf S}^\dagger\otimes(\operatorname{Sym}^2U^\dagger)_{\mathsf L}\otimes I_{\mathsf R}) \chi_6 \oplus\frac1{\sqrt2} (U_{\mathsf S}^\dagger\otimes\overline{15}(U^\dagger)_{\mathsf L}\otimes I_{\mathsf R}) \chi_{\overline{15}}.
\end{equation}
The fourth forward query replaces each leading \(U_{\mathsf S}^\dagger\)
in Eq.~\eqref{w4:eq:fourth-query-before} by
\(U_{\mathsf S}U_{\mathsf S}^\dagger=I_{\mathsf S}\).
Its output is therefore
\[
\begin{aligned}
&(I_{\mathsf S}\otimes U_{\mathsf L}^\top\otimes I_{\mathsf R}) \left(\frac{v_1}{\sqrt5}+\frac{w_1}{\sqrt{10}}\right) \oplus\frac1{\sqrt5} (I_{\mathsf S}\otimes(\operatorname{Sym}^2U^\dagger)_{\mathsf L}\otimes I_{\mathsf R}) \chi_6 \oplus\frac1{\sqrt2} (I_{\mathsf S}\otimes\overline{15}(U^\dagger)_{\mathsf L}\otimes I_{\mathsf R}) \chi_{\overline{15}}\\
&=G_0\bigl(U^\top \otimes|c_{4/5}(U^\dagger)\rangle_{\mathsf A}\bigr).
\end{aligned}
\]
The last equality follows from Eq.~\eqref{w4:eq:first-query-before} with
\(U\mapsto U^\dagger\){ and right composition by \(U^\top\)}.
The final fixed gate then removes \(G_0\), leaving
\(U^\top\otimes
|c_{4/5}(U^\dagger)\rangle_{\mathsf A}\).

A swap of the adjoint factors restores the catalyst. The vectorization
convention and \(R(U^\dagger)=R(U)^\top\) give the fixed restoration
\(G_{{\mathrm{swap}}}=(1\oplus\mathrm{SWAP}_{8,8})\oplus I_{25}\).
Explicitly,
\[
\mathrm{SWAP}_{8,8}|R(U^\dagger)\rangle\!\rangle
=\sum_{r,s=0}^7R(U^\dagger)_{rs}|s\rangle\otimes|r\rangle
=\sum_{r,s=0}^7R(U)_{sr}|s\rangle\otimes|r\rangle
=|R(U)\rangle\!\rangle.
\]
On the physical memory, this gives
\[
\begin{aligned}
G_{{\mathrm{swap}}}|c_{4/5}(U^\dagger)\rangle_{\mathsf A}
&=\left(\frac1{\sqrt5}\oplus
\frac1{\sqrt{10}}\mathrm{SWAP}_{8,8}
|R(U^\dagger)\rangle\!\rangle\right)\oplus0_{25}\\
&=\left(\frac1{\sqrt5}\oplus
\frac1{\sqrt{10}}|R(U)\rangle\!\rangle\right)\oplus0_{25}
=|c_{4/5}(U)\rangle_{\mathsf A}.
\end{aligned}
\]
Combining encoding, flip, reverse decoding, and restoration gives
\begin{equation}\label{w4:eq:w4-circuit}
\boxed{{C}_4(U)=(I_{\mathsf S}\otimes G_{{\mathrm{swap}}})
\mathcal E(U^\dagger)^\dagger \widehat J\mathcal E(U).}
\end{equation}
The circuit uses exactly four forward processing queries and returns the
complete catalyst, as required by Eq.~\eqref{w4:eq:w4-target}.
Lemma~\ref{w4:lem:catalyst-preparation} prepares that catalyst with three
forward queries before the data arrive. Because the endpoint is a linear-map
identity on all three data columns, tensoring it with the identity on an
external reference also proves the result for entangled data inputs.
\end{wproof}

\begin{wremark}\label{w4:rem:general-catalyst-family}
The construction extends to every singlet--adjoint catalyst
\(|c_a(U)\rangle_{\mathsf P}\) of Eq.~\eqref{w4:eq:catalyst} with
\(4/5\le a\le1\), retaining three forward queries for preparation
and four forward queries for catalytic transposition.
The midpoint has the following left--right representation blocks:
\[
(3,6),\quad(\bar6,\bar3),\quad(3,\bar3),\quad
(15,6),\quad(\bar6,\overline{15}),\quad(15,\overline{15}).
\]
Here \((\lambda,\mu)\) means the action
\(\lambda(g)\otimes\mu(h)\), and bars denote conjugate representations.
Their dimensions are \(18,18,9,90,90,225\), and the normalized seed maps
have squared amplitudes, in this order,
\[
(\alpha,\alpha,3\alpha,3\delta,3\delta,\delta),\qquad
\alpha=\frac{12-7a}{32},\quad\delta=\frac{5a-4}{32},\quad
5\alpha+7\delta=1.
\]
In compatible fixed bases, the flip exchanges blocks one and two and
blocks four and five, and acts as minus the factor swap on blocks three
and six. At \(a=4/5\), \(\alpha=1/5\) and \(\delta=0\), so the last three
blocks disappear.
In this continuous extension, the intermediate representation carrier
has dimension \(450\) for every \(a>4/5\), whereas the endpoint
\(a=4/5\) reduces to the \(45\)-dimensional carrier used above.

For each known \(a\in[4/5,1]\), the same encode--flip--decode mechanism, with gates chosen for that parameter, gives the
four-query endpoint for every \(U\in SU(3)\).
The exact parameter-family endpoints and their assumptions are identified
in the Lean verification entry of Sec.~\ref{sm:lean}; the finite numerical
checks are described separately in Sec.~\ref{sm:symbolic}.
\end{wremark}

The following bounds show that four processing queries are optimal and that, at this threshold, three preparation queries are necessary.

\input{supplemental_storage.tex}

\section{{Computer-assisted analysis}}
\label{sm:mathematica}
\subsection{Symbolic circuit verification}
\label{sm:symbolic}
{The self-contained notebook \nbfile\ }verifies the complete preparation and
processing circuit {in Secs}.~\ref{w4:sec:catalyst-preparation} and
\ref{w4:sec:four-query-closure}. {It constructs} the nine {fixed} $270\times270$
gates as exact sparse arrays{, checks} their two-sided unitarity{, and evaluates}
the three-query preparation and {all three input
columns of} the four-query {processor}. The {companion \texttt{gates.nb} gives}
the {representation construction. Both notebooks and a fresh-kernel replay
script are supplied} in \path{anc/mathematica/}.

For nine independent commuting entries of $U$, write $\operatorname{adj}U$
for the adjugate and set
$R_{rs}^{\rm poly}(U)=\Tr(T_r^\dagger UT_s\operatorname{adj}U)$ and
\[
 |c_{4/5}^{\mathrm{poly}}(U)\rangle
 =\frac{\det U}{\sqrt5}\oplus
 \frac{|R^{\rm poly}(U)\rangle\!\rangle}{\sqrt{10}}\oplus0_{25}.
\]
Here $r,s=0,\ldots,7$ and $0_{25}{\in}\mathbb C^{25}$, {so this} vector {belongs
to} the physical memory $\mathcal H_{\mathsf A}$. The verified degree-seven
identity is
\begin{equation}
 C_4(U)(I_{\mathsf S}\otimes|c_{4/5}^{\mathrm{poly}}(U)\rangle_{\mathsf A})
 =\det(U)\,[U^\top\otimes|c_{4/5}^{\mathrm{poly}}(U)\rangle_{\mathsf A}].
 \label{sm:nb-identity}
\end{equation}
All 810 entry differences vanish exactly in the Mathematica 13.3.1 {replay}.
On $\SU(3)$, $\operatorname{adj}U=U^\dagger$ and $\det U=1$, {giving} the
physical catalytic identity for {every} data {input}. {The} calculation supports
the analytic proof of the fixed $a=4/5$ circuit.

{Running} \texttt{python run.py} in \path{anc/family/} certifies {the} local
Clebsch--Gordan projector identities, recoupling coefficients, and parameter
weights {of} Remark~\ref{w4:rem:general-catalyst-family} exactly {for}
$4/5\le a\le1$. {It} tests the completed {circuits} numerically {at} finitely many
parameter values and $\SU(3)$ matrices.

\subsection{{Lean} verification {of \ref{w4:chap:construction}--\ref{sm:generalstorage}}}
\label{sm:lean}
The Lean 4 sources in \path{anc/lean/} formalize the exact mathematical
{results} of Secs.~\ref{w4:chap:construction}--\ref{sm:generalstorage} and the
main-text {repeated-use constructions}. The coverage ledger records {their
manuscript correspondence and theorem assumptions}.
{The entry \path{anc/lean/MAIN_RESULTS.md} lists the final instantiated theorems for the fixed circuit, the parameter family, the processing and preparation lower bounds, and repeated use, together with their model assumptions.}

{The constructions include preparation} for $0\le a\le1$, the fixed $a=4/5$
{circuit} on the $270$-dimensional workspace, and the compatible family
$4/5\le a\le1$ {with} a $450$-dimensional intermediate carrier in a
$594=3\times198$-dimensional workspace. {These are construction dimensions,
not minimum-memory bounds}. For each known $a$, the {gates are fixed before}
$U$; the {catalytic} endpoint holds for all $U\in\SU(3)$ and arbitrary finite
{references}, including mixed or entangled joint inputs, and returns the same catalyst.

The {bounds cover} finite causal protocols built from completely positive
instruments, {with} mixed initial states, discarded environments, and changing
finite {workspaces}. The repeated-use {constructions cost} $3+4m$ and
$3+8m$, {allow} known {unitaries} on accessible registers, {preserve} passive
references, return the same program{ state}, and reset the added inversion qutrit.

{Replay} uses Lean 4.35.0-rc3 and the pinned Mathlib revision. It
compiles {the} listed {sources} and audits transitive axiom dependencies,
{rejecting} \texttt{sorryAx} {and allowing only} \texttt{propext},
\texttt{Classical.choice}, and \texttt{Quot.sound}. {Source inventories},
commands, and audit logs {accompany} the coverage {ledger}.

\input{supplemental_auxiliary.tex}

%% file: supplemental_storage.tex
\section{Processing bounds with a consumable program{ state}}
\label{sm:consumable}

{Finite preparation cannot lower the processing cost of exact transposition or inversion below $d+1$ or $d^2-1$ forward queries, respectively, even when processing consumes the program{ state}. We derive these bounds by extending the differential completely positive method of Ref.~\cite[Theorem~1]{odake} to $U$-dependent input program{ states}. The circuits use $U$-independent known gates and forward queries in a fixed causal order. We retain finite purifications, coherent measurement records, and processing ancillas in a fixed finite system $\mathsf P$, giving a unitary description without changing either query count.}

For a returned program state prepared with finitely many queries, the repeated-use argument underlying Ref.~\cite[Theorem~1 and Eq.~(5)]{odake} already yields the processing bound: $m$ repetitions use $p+mq$ queries, while the differential bound at $U=I$ for the composed target is $mN$, with $N=d+1$ for transposition and $N=d^2-1$ for inversion. Thus $p+mq\ge mN$ for every $m$ implies $q\ge N$. The proof below also allows a program state that is consumed after one use, so it retains the responses of both the input and output program states instead of relying on repetition.

\begin{wchangedtheorem}\label{sm:thm:consumed}
Let $d\ge2$, $\dim\mathcal H_{\mathsf S}=d$,
and let $\mathcal H_{\mathsf P}$ be any finite-dimensional program space.
Let $\mathrm{Pre}(U)_{\mathsf P}$ be a $p$-query quantum circuit on
$\mathcal H_{\mathsf P}$ that prepares the program state
\[
 |c_{\mathrm{in}}(U)\rangle_{\mathsf P}
 =\mathrm{Pre}(U)_{\mathsf P}|0\rangle_{\mathsf P}{.}
\]
Suppose the data are processed by
a $q$-query quantum circuit ${C_q(U)}$ on
$\mathcal H_{\mathsf S}\otimes\mathcal H_{\mathsf P}$,
whose joint output satisfies
\begin{equation}\label{sm:consumedidentity}
 {C_q(U)}\bigl(I_{\mathsf S}\otimes
 |c_{\mathrm{in}}(U)\rangle_{\mathsf P}\bigr)
 =f(U)\otimes
 |c_{\mathrm{out}}(U)\rangle_{\mathsf P},
\end{equation}
for every $U\in SU(d)$, where
$f(U)=U^\top$ or $U^\dagger$
and $|c_{\mathrm{out}}(U)\rangle_{\mathsf P}$ is the resulting program state.
Then the processing query counts satisfy
\begin{equation}\label{sm:consumedbounds}
 q_\top\ge d+1,\qquad q_{{\dagger}}\ge d^2-1,
\end{equation}
for transposition and inversion, respectively, independently of $p$.
\end{wchangedtheorem}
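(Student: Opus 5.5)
We will extend the differential completely positive (CP) argument of Ref.~\cite[Theorem~1]{odake} by differentiating the identity~\eqref{sm:consumedidentity} at $U=I$ along $U_t=e^{itH}$, where $H$ is traceless Hermitian.

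\textbf{Step 1: Kraus form of the processor.} Write the processing circuit as $C_q(U)=G_q(U\otimes I)G_{q-1}\cdots G_1(U\otimes I)G_0$, with all ancillas, purifications and coherent records kept in $\mathsf P$. Differentiating the circuit while holding the input program state fixed gives
\[
\frac{d}{dt}C_q(U_t)\Big|_{0}=i\sum_{k=1}^q B_k(H\otimes I)A_k,
\]
where $A_k$ and $B_k$ are the fixed partial products on either side of the $k$-th query, so $B_kA_k=C_q(I)$. Contracting with $I_{\mathsf S}\otimes|c_{\mathrm{in}}(I)\rangle$ on the right and with $I_{\mathsf S}\otimes\langle c_{\mathrm{out}}(I)|$ on the left, and using~\eqref{sm:consumedidentity} at $U=I$, we define
\[
\Lambda(X)=\sum_k\bigl(I\otimes\langle c_{\mathrm{out}}|\bigr)\bigl(f(I)^\dagger\otimes I\bigr)B_k(X\otimes I)A_k\bigl(I\otimes|c_{\mathrm{in}}\rangle\bigr).
\]
This is not yet manifestly CP. To make it so, I will instead use the form $\Lambda(X)=\sum_k K_k^\dagger(X\otimes I)K_k$ with $K_k=A_k(I\otimes|c_{\mathrm{in}}\rangle)$, obtained by inserting $B_k^\dagger B_k=I$. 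The derivative of $C_q(U)$ composed with $C_q(U)^\dagger$ yields exactly this Hermitian expression via the product rule. Then $\Lambda$ is CP and $\Lambda(I)=\sum_kK_k^\dagger K_k=qI$, since each $A_k$ is unitary.

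\textbf{Step 2: Differentiate the identity.} Apply $\frac d{dt}$ to~\eqref{sm:consumedidentity} and left-multiply by $C_q(I)^\dagger$ followed by $I\otimes\langle c_{\mathrm{in}}(I)|$. The left side gives $i\Lambda(H)$ plus $I\otimes\langle c_{\mathrm{in}}|\dot c_{\mathrm{in}}\rangle$. The right side gives $\frac{d}{dt}f(U_t)|_0$, which equals $iH^\top$ for transposition and $-iH$ for inversion, plus $\langle c_{\mathrm{out}}|\dot c_{\mathrm{out}}\rangle I$.

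Both program-state terms are scalar multiples of the identity. Each is linear in $H$, since $|c_{\mathrm{in}}(U)\rangle$ is a polynomial in $U$ coming from a finite-query preparation and $|c_{\mathrm{out}}(U)\rangle$ is determined by the identity. So each has the form $\Tr(F'H)$, and the scalar is purely imaginary by normalization. Hence, for a Hermitian $F$,
\[
\Lambda(H)=H^\top+\Tr(FH)I \quad\text{or}\quad \Lambda(H)=-H+\Tr(FH)I .
\]
Together with $\Lambda(I)=qI$, this fixes $\Lambda$ on all matrices by linearity. It is important that no unitarity constraint on $\Lambda$ beyond CP is used, so $p$ never enters the argument.

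\textbf{Step 3: Choi positivity.} First average $\Lambda$ over the covariance group: conjugating by $g\otimes g^*$ for transposition, or by $g\otimes g$ for inversion. The averaged map keeps the form above but replaces $F$ by its invariant part, so $F\to0$ on the traceless sector. Hence $\Lambda(X)=\frac{q}{d}\Tr(X)I+X_0^\top$ for transposition, and $\Lambda(X)=\frac qd\Tr(X)I-X_0$ for inversion, where $X_0$ is the traceless part of $X$. The averaging is legitimate because the set of CP maps is convex and the constraints are covariant.

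For transposition, the Choi matrix is $\frac{q-1}{d}I\otimes I+\mathrm{SWAP}$ up to ordering. Its minimal eigenvalue $-1$ on antisymmetric vectors forces $q-1\ge d$, that is, $q\ge d+1$. For inversion, the Choi matrix is $\frac{q+1}{d}I-|I\rangle\!\rangle\langle\!\langle I|$. The eigenvalue on the Bell vector gives $\frac{q+1}{d}-d\ge0$, that is, $q\ge d^2-1$. These traces match the stated $\Tr F_\top=q_\top-1$ and $\Tr F_\dagger=q_\dagger+1$.

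\textbf{Main obstacle.} The delicate point is Step 1: making the response genuinely CP while the input and output program states differ. My plan is to absorb $|c_{\mathrm{out}}\rangle$ entirely into the scalar term, so that CP-ness rests only on the input state, and to verify carefully that the phase/normalization terms contribute only $\Tr(FH)I$ with Hermitian $F$.
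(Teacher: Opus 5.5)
Your proposal is correct and follows essentially the same route as the paper. Both arguments build the CP response $\sum_k K_k^\dagger(H\otimes I)K_k$ from the fixed input program state, absorb both program-state derivatives into a scalar term to get $\Lambda(H)=H^\top+\Tr(FH)I$ or $-H+\Tr(FH)I$, and conclude by Choi positivity. The only differences are cosmetic: you differentiate at $U=I$, whereas the paper works at a general $U$ because its preparation bound later reuses saturation at every $U$; and you twirl $\Lambda$ before reading off eigenvalues, whereas the paper pairs $\operatorname{Choi}(\Lambda)$ directly with the invariant test operators $I-\mathrm{SWAP}_{d,d}$ and $|I_d\rangle\!\rangle\langle\!\langle I_d|$, which is equivalent.
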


\par\noindent\emph{Proof.} 
{In this proof, write $C(U):=C_q(U)$.}
The preparation circuit $\mathrm{Pre}(U)_{\mathsf P}$, the processor
${C(U)}$, and both program states have entries polynomial in the matrix
entries of $U$ on $SU(d)$, and are therefore smooth. For
$\mathrm{Pre}(U)_{\mathsf P}$, ${C(U)}$, and $|c_{\mathrm{in}}(U)\rangle_{\mathsf P}$,
this follows from their finite-query expansions. For the output program{ state},
use
\[
 |c_{\mathrm{out}}(U)\rangle_{\mathsf P}
 =\bigl(\langle0|_{\mathsf S}f(U)^\dagger\otimes I_{\mathsf P}\bigr)
 {C(U)}\bigl(|0\rangle_{\mathsf S}\otimes
 |c_{\mathrm{in}}(U)\rangle_{\mathsf P}\bigr).
\]
Here $f(U)^\dagger$ equals $U$ for inversion and the entrywise conjugate $U^*$ for
transposition; the entries of $U^*$ are cofactors of $U$ because
$\det U=1$.

Write the processor ${C(U)}$ as
\[
 {C(U)}=G_q(U\otimes I_{\mathsf P})G_{q-1}
 \cdots(U\otimes I_{\mathsf P})G_0,
\]
with $q$ oracle factors and $U$-independent known unitary gates
$G_0,\ldots,G_q$. {Fixed swaps route queries through the displayed $d$-dimensional factor and are absorbed into these gates.} For $q=0$, ${C(U)}=G_0$.

Fix $U$, and consider $U_\epsilon=e^{i\epsilon H}U$, with $H$ traceless and
Hermitian, and set
${C'_H(U)}:=\left.\frac{d{C(U_\epsilon)}}{d\epsilon}\right|_{\epsilon=0}$.
Each query has derivative
\[
 \left.\frac{d\bigl(U_\epsilon\otimes I_{\mathsf P}\bigr)}
 {d\epsilon}\right|_{\epsilon=0}
 =i(H\otimes I_{\mathsf P})(U\otimes I_{\mathsf P}).
\]
The product rule expresses ${C'_H(U)}$ as the sum of contributions from
the $q$ queries. Multiplication by ${C(U)}^\dagger$ cancels the circuit
factors after each differentiated query, leaving the response pulled
back through the prefix ending at that query.
For $1\le j\le q$, define the prefix
$\mathcal V_j(U):\mathcal H_{\mathsf S}\to
\mathcal H_{\mathsf S}\otimes\mathcal H_{\mathsf P}$ through the $j$th query by
\[
 \mathcal V_j(U)=(U\otimes I_{\mathsf P})G_{j-1}
 \cdots(U\otimes I_{\mathsf P})G_0
 \bigl(I_{\mathsf S}\otimes|c_{\mathrm{in}}(U)\rangle_{\mathsf P}\bigr).
\]
By construction, each $\mathcal V_j(U)$ is an isometry. For the chosen target $f$ and protocol, define the query response by
\[
 \Lambda_{f}(U,H)=\sum_{j=1}^{q}\mathcal V_j(U)^\dagger
 (H\otimes I_{\mathsf P})\mathcal V_j(U).
\]
The same formula defines a completely positive map on the full
matrix algebra, with
\[
 \Lambda_{f}(U,I)=qI.
\]
The preceding differentiation therefore gives
\[
 -i\bigl(I_{\mathsf S}\otimes\langle c_{\mathrm{in}}(U)|_{\mathsf P}\bigr)
 {C(U)}^\dagger {C'_H(U)}
 \bigl(I_{\mathsf S}\otimes|c_{\mathrm{in}}(U)\rangle_{\mathsf P}\bigr)
 =\Lambda_{f}(U,H).
\]
The input program{ state} is held fixed in this expression, and only ${C(U)}$
is differentiated, so $\Lambda_{f}(U,\cdot)$
accounts only for the $q$ processing queries.

The target response in the input data frame is
\[
 g_U(H):=-i f(U)^\dagger
 \left.\frac{df(U_\epsilon)}{d\epsilon}\right|_{\epsilon=0}
 =\begin{cases}
 H^\top,& f(U)=U^\top,\\
 -H,& f(U)=U^\dagger.
 \end{cases}
\]
This follows from
$U_\epsilon^\top=U^\top e^{i\epsilon H^\top}$ and
$U_\epsilon^\dagger=U^\dagger e^{-i\epsilon H}$.

Differentiate the exact input-output identity and multiply on the left by
the adjoint of its unperturbed logical isometry,
\[
 \bigl(I_{\mathsf S}\otimes\langle c_{\mathrm{in}}(U)|_{\mathsf P}\bigr)
 {C(U)}^\dagger
 =\bigl(f(U)\otimes|c_{\mathrm{out}}(U)\rangle\bigr)^\dagger.
\]
Writing $c'_a[H]$ for the directional derivative of the corresponding
program state along $U_\epsilon$, this yields
\[
 i\Lambda_{f}(U,H)+
 \langle c_{\mathrm{in}}|c'_{\mathrm{in}}[H]\rangle I
 =ig_U(H)+
 \langle c_{\mathrm{out}}|c'_{\mathrm{out}}[H]\rangle I.
\]
The input term follows from unitarity,
${C(U)}^\dagger {C(U)}=I_{\mathsf S}\otimes I_{\mathsf P}$;
the output term follows from the factorized, data-independent program{ state}.
Thus
\[
 \Lambda_{f}(U,H)=g_U(H)+\alpha_U(H)I,
\]
where
\[
 \alpha_U(H)=-i\!\left(
 \langle c_{\mathrm{out}}|c'_{\mathrm{out}}[H]\rangle
 -\langle c_{\mathrm{in}}|c'_{\mathrm{in}}[H]\rangle
 \right).
\]
Normalization makes each inner product purely imaginary, so
$\alpha_U(H)$ is real. Directional differentiation makes it real-linear in
$H$. This formula retains both program{-state} derivatives and requires no
equality between the input and output program{ states}.

For each target at the fixed $U$, represent the real-linear
functional $\alpha_U$ on traceless Hermitian matrices by a Hermitian
matrix, denoted $F_\top$ or $F_{{\dagger}}$. Each representative is determined
up to a multiple of $I$; fix this freedom using $\Lambda_f(U,I)=qI$.
The subscripts $\top$ and ${\dagger}$ label transposition and inversion,
respectively. Since traceless Hermitian matrices together with $I$
span all matrices over the complex numbers, linearity gives
\begin{align}
 \Lambda_{\top}(U,M)&=M^\top+\operatorname{Tr}(F_{\top}M)I,
 &\operatorname{Tr}F_{\top}&=q_\top-1,\label{sm:generalET}\\
 \Lambda_{{\dagger}}(U,M)&=-M+\operatorname{Tr}(F_{{\dagger}}M)I,
 &\operatorname{Tr}F_{{\dagger}}&=q_{{\dagger}}+1.\label{sm:generalEI}
\end{align}

With the query-copy factor first, the Choi convention at the fixed $U$
\[
 \operatorname{Choi}(\Lambda_{f})=\sum_{j,k}|j\rangle\langle k|\otimes
 \Lambda_{f}(U,|j\rangle\langle k|)
\]
gives
\begin{align*}
 \operatorname{Choi}(\Lambda_{\top})&=\mathrm{SWAP}_{{d,d}}+F_{\top}^\top\otimes I\succeq0,\\
 \operatorname{Choi}(\Lambda_{{\dagger}})&=F_{{\dagger}}^\top\otimes I-|I_{{d}}\rangle\!\rangle\langle\!\langle I_{{d}}|\succeq0.
\end{align*}

For transposition, $I-\mathrm{SWAP}_{{d,d}}\succeq0$ gives
\[
 0\le\operatorname{Tr}\!\left[(I-\mathrm{SWAP}_{{d,d}})\operatorname{Choi}(\Lambda_\top)\right]
 =(d-1)(\operatorname{Tr}F_\top-d).
\]
Since $d\ge2$, this implies $\operatorname{Tr}F_\top\ge d$.

For inversion, positivity gives
\[
 0\le\langle\!\langle I_{{d}}|\operatorname{Choi}(\Lambda_{{\dagger}})|I_{{d}}\rangle\!\rangle
 =\operatorname{Tr}F_{{\dagger}}-d^2.
\]
Together with $\operatorname{Tr}F_\top=q_\top-1$ and
$\operatorname{Tr}F_{{\dagger}}=q_{{\dagger}}+1$, these inequalities give
$q_\top\ge d+1$ and $q_{{\dagger}}\ge d^2-1$, independently of $p$.\hfill$\square$\par

{For approximate unitary time reversal, Ref.~\cite{chen2026} establishes robust total-query lower bounds. The processing cost alone can vanish when the program{ state} may be supplied with an arbitrary $U$-dependent encoding.}

\begin{wremark}[Approximation with unrestricted program{ states}]\label{sm:rem:unrestricted-approximation}
\normalfont
{Suppose that $U$-dependent program states may be supplied without requiring finite-query preparation. For every $\epsilon>0$, there exist an encoding into a finite-dimensional program space and fixed processors that approximate transposition and inversion using no processing queries, with joint-output fidelity at least $1-\epsilon$ uniformly over all $U\in\mathrm{SU}(d)$ and normalized input states, while returning the program state exactly.}

Indeed, fix $\epsilon>0$ and choose
$0<\delta<\min\{1,\sqrt{\epsilon}\}$.
There is a finite operator-norm $\delta$-net
$\{V_1,\ldots,V_N\}\subset SU(d)$.
For each $U$, select $k(U)$ such that
$\|U-V_{k(U)}\|_{\mathrm{op}}\le\delta$, and set
\[
 |c_{\mathrm{in}}(U)\rangle
 =|c_{\mathrm{out}}(U)\rangle
 =|k(U)\rangle,
 \qquad
 C_0=\sum_{k=1}^{N}f(V_k)\otimes|k\rangle\langle k|.
\]
The processor $C_0$ is a fixed known unitary and uses no oracle queries.
For either $f(U)=U^\top$ or $f(U)=U^\dagger$, it satisfies, uniformly over
all $U$ and normalized data states $|\psi\rangle$,
\[
 \begin{aligned}
 1-\Bigl|
 &\bigl(\langle\psi|f(U)^\dagger
 \otimes\langle c_{\mathrm{out}}(U)|\bigr)\,
 C_0\,
 \bigl(|\psi\rangle\otimes|c_{\mathrm{in}}(U)\rangle\bigr)
 \Bigr|^2
 \le\delta^2<\epsilon.
 \end{aligned}
\]
Moreover, the program{ state} is returned exactly. Thus, even a worst-case
joint-output fidelity requirement permits zero-query processing.
Consequently, no positive processing-query lower bound depending only on
$d$ and $\epsilon$ can hold in this unrestricted program{-state} model; a
nontrivial approximate formulation must also constrain the information
supplied by the program{ state}.
\end{wremark}

\section{Preparation lower bound for qutrit transposition}
\label{sm:generalstorage}

{At the four-query processing threshold of
{Theorem}~\ref{sm:thm:consumed}, exact qutrit transposition requires at
least three preparation queries.}
Our catalytic construction in Sec.~\ref{w4:chap:construction} attains both costs.

\begin{wtheorem}

\label{sm:thm:generalstorage}

{In the model of {Theorem}~\ref{sm:thm:consumed}, suppose }
{that}
{\begin{equation}
 C_q(U)\bigl(I_{\mathsf S}\otimes
 \ket{c_{\mathrm{in}}(U)}_{\mathsf P}\bigr)
 =U^\top\otimes
 \ket{c_{\mathrm{out}}(U)}_{\mathsf P}
 \label{sm:storageidentity}
\end{equation}}
{holds for $q=4$ and every $U\in\SU(3)$,
on an arbitrary finite-dimensional program space $\mathcal H_{\mathsf P}$.
Then preparing the {family of} program{ states}
$\ket{c_{\mathrm{in}}(U)}_{\mathsf P}$ requires at least 3 forward queries.}

\end{wtheorem}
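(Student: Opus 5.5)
The plan is to combine the differential argument of Theorem~\ref{sm:thm:consumed} at saturation with a polynomial-degree analysis of the encoded program state. I would suppose, for contradiction, a protocol with $q=4$ and $p\le2$ preparation queries.

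\emph{Step 1: rigidity at saturation.} Because $q_\top=4=d+1$ for $d=3$, the inequality in the proof of Theorem~\ref{sm:thm:consumed} is tight. Then $\operatorname{Tr}F_\top=3$ and $\operatorname{Tr}[(I-\mathrm{SWAP})\operatorname{Choi}(\Lambda_\top)]=0$. Since $\operatorname{Choi}(\Lambda_\top)\succeq0$, it must annihilate the antisymmetric subspace. On that subspace it acts as $-I+F_\top^\top\otimes I$, so this forces $F_\top=I$. The response is therefore rigid: $\Lambda_\top(U,M)=M^\top+\operatorname{Tr}(M)I$ at every $U$, and $\alpha_U\equiv0$ on traceless $H$. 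I would then use the Kraus form $\Lambda_\top(U,M)=\sum_j\mathcal V_j^\dagger(M\otimes I)\mathcal V_j$ to pin down each prefix $\mathcal V_j(U)$. The Choi operator of this response is $\mathrm{SWAP}+I$, a multiple of the symmetric projector. So every Kraus operator of every prefix, viewed as a vector in $V\otimes V$, lies in $\operatorname{Sym}^2V$. In particular, at the first query the state $G_0(I_{\mathsf S}\otimes|c_{\mathrm{in}}(U)\rangle)$ must already route the data into the query register through a symmetric pairing with the program.

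\emph{Step 2: polynomial extension.} As in Sec.~\ref{sm:symbolic}, I would treat the nine entries of $U$ as independent variables. The prepared state is then a homogeneous polynomial of degree $p$, and the processor is homogeneous of degree $4$. The identity~\eqref{sm:storageidentity} holds on $\SU(3)$, so it lifts to $\mathrm{GL}(3)$ up to powers of $\det U$, with $U^\dagger$ replaced by $\operatorname{adj}U$. Comparing degrees modulo $3$ then determines the homogeneity class of $|c_{\mathrm{out}}\rangle$.

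\emph{Step 3: low-degree cases.} For $p\le2$, I would write the first-query input as quadratic $3\times3$ slices $L_j(M)$ of the fixed initial encoding; for $p=0,1$ these slices are of lower degree and the argument only simplifies. Step~1 requires these slices to reproduce, after the remaining three queries, the conjugate action $U^*$ that transposition needs. By the Levi--Civita identity~\eqref{w4:eq:iota}, the only degree-two source of $U^*$ is the adjugate. So the slices must factor as $L_j(M)=\widetilde L_j(\operatorname{adj}M)$ with $\widetilde L_j$ linear.

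Substituting this factorization into the rigid response is the main obstacle. The contradiction should come from comparing the rank and representation content of $\operatorname{Sym}^2V$ with that of the adjoint/adjugate component: with only $\bar3$ available from the program (degree $\le2$), the representation path of Sec.~\ref{w4:sec:tc-reshaping} cannot reach the midpoint types needed before a flip within four queries. If this path-counting argument does not close the case directly, I would fall back on an exact Gram-matrix or linear-algebra certificate over the polynomial coefficients, in the style of the Lean and Mathematica checks, showing that the resulting linear system for the fixed gates has no solution.

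Finally, Lemma~\ref{w4:lem:catalyst-preparation} shows that $p=3$ is attained, so the bound is tight.
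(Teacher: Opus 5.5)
\textbf{Where the proposal breaks down: the adjugate factorization in Step 3 is not derived.} Your Step 1 matches the paper's starting point. Saturation forces each query contribution of $\Lambda_\top$ onto the symmetric subspace, and for the first query this says precisely that every $UL_j(U)$ is a symmetric matrix. Your conclusion $F_\top=I$ is true but not needed. In Step 3 you justify $L_j(M)=\widetilde L_j(\operatorname{adj}M)$ only by the heuristic that ``the only degree-two source of $U^*$ is the adjugate.'' Nothing requires the slices to produce $U^*$. A priori they are arbitrary matrix-valued quadratics, including ones that are not combinations of $2\times2$ minors. The paper derives the factorization in three steps:
\begin{enumerate}
\item It extends the symmetry $ML_j(M)=[ML_j(M)]^\top$ from $\mathrm{SU}(3)$ to all $M\in\mathbb C^{3\times3}$, using homogeneity and a singular-value argument.
\item It differentiates this identity at a rank-one $M=|x\rangle\langle y|$ and compresses by a rank-two $R$ with $R|x\rangle=0$, which gives $L_j(M)R^\top=0$.
\item It uses the isometry identity $\sum_jL_j(M)^\dagger L_j(M)=\|\,|c_{\mathrm{in}}(M)\rangle\|^2I$, which your proposal never invokes, to conclude that all slices vanish on rank-one matrices.
\end{enumerate}
Only after these steps do the entries become combinations of minors.

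\textbf{The more serious gap: no mechanism turns the factorization into a contradiction.} The missing step is to run the same differentiation argument on $M\widetilde L_j(M)^\top=\widetilde L_j(M)M^\top$. This shows that $\widetilde L_j$ is right multiplication by a constant, so $L_j(M)=\operatorname{adj}(M)L_j(I)$. On $\mathrm{SU}(3)$ this gives $UL_j(U)=L_j(I)$, that is,
\[
(U\otimes I)G_0(I\otimes|c_{\mathrm{in}}(U)\rangle)=G_0(I\otimes|c_{\mathrm{in}}(I)\rangle).
\]
So the isometry produced by the first processing query does not depend on $U$. Preparing $|c_{\mathrm{in}}(I)\rangle$ with no queries and applying $G_0$ then replaces both preparation queries and the first processing query. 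What remains is an exact transposition with three processing queries, contradicting Theorem~\ref{sm:thm:consumed}.

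\textbf{Your suggested closings cannot replace this reduction.} The ``midpoint types needed before a flip'' describe the paper's particular encode--flip--decode construction. The bound must cover arbitrary protocols, which need have no midpoint or flip. A Gram or linear-system certificate over gate coefficients presupposes a fixed $\dim\mathcal H_{\mathsf P}$, while the theorem allows any finite program space. Your Step 2 degree bookkeeping for $|c_{\mathrm{out}}\rangle$ is also unnecessary; only the symmetry identity needs to be extended off $\mathrm{SU}(3)$.
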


{
\par\noindent\emph{Proof.} {I}t suffices to exclude \(p=2\): a protocol with \(p<2\) can be padded with preparation calls on retained auxiliary qutrits on which the processor acts trivially. {Suppose, for a contradiction, that an exact protocol exists with \(p=2\) and \(q=4\). These assumptions force the first-query prefix to be independent of \(U\), so a fixed encoding would leave only three processing queries, contradicting {Theorem}~\ref{sm:thm:consumed}.}

Replacing the two preparation calls by \(M\in\mathbb C^{3\times3}\) makes \(\lvert c_{\mathrm{in}}(M)\rangle_{\mathsf P}\) homogeneous quadratic. {As in the proof of {Theorem}~\ref{sm:thm:consumed}, l}et \(G_0\) be the fixed joint unitary before the first query on \(\mathsf S\). In a fixed orthonormal basis \(\{\lvert j\rangle_{\mathsf P}\}\), define
\begin{equation}\label{sm:encodingslices}
L_j(M)=(I_{\mathsf S}\otimes\langle j\rvert_{\mathsf P})G_0
\bigl(I_{\mathsf S}\otimes\lvert c_{\mathrm{in}}(M)\rangle_{\mathsf P}\bigr).
\end{equation}

Saturation of the trace inequality in Sec.~\ref{sm:consumable} gives
\(\operatorname{Tr}[(I-\mathrm{SWAP})\operatorname{Choi}(\Lambda_\top)]=0\).
Each query contribution is positive semidefinite, so each is supported on the symmetric subspace. The first contribution is
\(\sum_j\lvert[UL_j(U)]^*\rangle\!\rangle\langle\!\langle[UL_j(U)]^*\rvert\); hence each \(UL_j(U)\) is symmetric. Cubic homogeneity extends this identity from \(\mathrm{SU}(3)\) to \(\mathrm U(3)\). Fixing the two unitary factors in a singular-value decomposition, successive one-variable polynomial identities extend symmetry from unit-modulus diagonal entries to arbitrary complex diagonal entries. {Thus}, for every \(M\in\mathbb C^{3\times3}\),
\begin{equation}\label{sm:generalsymmetry}
ML_j(M)=[ML_j(M)]^\top{.}
\end{equation}

{Completeness of the program basis in Eq.~\eqref{sm:encodingslices}, \(\sum_j\lvert j\rangle\langle j\rvert=I_{\mathsf P}\), and }the unitarity of \(G_0\){ give}
\begin{equation}\label{sm:encodingisometry}
\sum_j L_j(M)^\dagger L_j(M) {=(I_{\mathsf S}\otimes\langle c_{\mathrm{in}}(M)\rvert) G_0^\dagger G_0 (I_{\mathsf S}\otimes\lvert c_{\mathrm{in}}(M)\rangle)} =\bigl\|\lvert c_{\mathrm{in}}(M)\rangle\bigr\|^2I_{\mathsf S}.
\end{equation}

These constraints force the quadratic slices to factor through the adjugate. At a nonzero rank-one \(M=\lvert x\rangle\langle y\rvert\), choose a rank-two \(2\times3\) matrix \(R\) with \(R\lvert x\rangle=0\). Differentiating the symmetry identity in an arbitrary direction \(\delta M\) and multiplying by \(R\) and \(R^\top\) makes \((R\delta M)L_j(M)R^\top\) symmetric. Since \(R\delta M\) is arbitrary, \(L_j(M)R^\top=0\). Choosing $R\delta M$ with only its first or only its second row nonzero forces both columns of $L_j(M)R^\top$ to vanish. The slices annihilate a common two-dimensional subspace, so {Eq.~\eqref{sm:encodingisometry} gives }\(L_j(M){=0}\){ for every \(j\) whenever \(\operatorname{rank}M=1\)}. Indeed, a nonzero vector $v$ in that common kernel gives $0=\|c_{\mathrm{in}}(M)\|^2\|v\|^2$; the scalar is zero, and the positive semidefinite sum in Eq.~\eqref{sm:encodingisometry} then makes every slice zero.

{These rank-one zeros force e}ach quadratic entry {to be} a linear combination of the \(2\times2\) minors: substituting \(M=\lvert x\rangle\langle y\rvert\) pairs its coefficients into the differences \(M_{ab}M_{cd}-M_{ad}M_{cb}\). Equating the coefficients of $x_ax_c\bar y_b\bar y_d$ makes terms with a repeated row or column vanish and pairs the remaining monomials into these minors. Here \(M^\dagger\) denotes the classical adjugate; other daggers denote the usual adjoint. Hence there is a linear map \(\widetilde L_j\) such that
\[
L_j(M)=\widetilde L_j(M^\dagger){.}
\]
 Every invertible \(M\) is the classical adjugate of \(\sqrt{\det M}\,M^{-1}\), with either square root. The symmetry identity and polynomiality thus give
\[
M\widetilde L_j(M)^\top=\widetilde L_j(M)M^\top
\qquad(M\in\mathbb C^{3\times3}).
\]

The same differentiation argument, applied to \(\widetilde L_j(M)^\top\), shows that every column of \(\widetilde L_j(\lvert x\rangle\langle y\rvert)\) lies in \(\operatorname{span}\{\lvert x\rangle\}\). A linear map with this property is right multiplication by a constant matrix: its value at the computational matrix unit \(\lvert a\rangle\langle b\rvert\) has only row \(a\), and evaluation at \((\lvert a\rangle+\lvert c\rangle)\langle b\rvert\) makes that row independent of \(a\). Evaluating the constant at \(M=I\) gives \(\widetilde L_j(M)=ML_j(I)\), and therefore
\begin{equation}\label{sm:encodingfactorization}
L_j(M)=M^\dagger L_j(I).
\end{equation}

For \(U\in\mathrm{SU}(3)\), this factorization gives \(UL_j(U)=L_j(I)\), so
\begin{equation}\label{sm:firstquery-fixed}
(U\otimes I_{\mathsf P})G_0
\bigl(I_{\mathsf S}\otimes\lvert c_{\mathrm{in}}(U)\rangle_{\mathsf P}\bigr)
=G_0\bigl(I_{\mathsf S}\otimes\lvert c_{\mathrm{in}}(I)\rangle_{\mathsf P}\bigr).
\end{equation}

Preparing the fixed state \(\lvert c_{\mathrm{in}}(I)\rangle\) and applying \(G_0\) therefore replaces both preparation queries and the first processing query. The remaining three oracle calls still implement exact transposition, contradicting {Theorem}~\ref{sm:thm:consumed}.
\hfill$\square$\par
}

%% file: supplemental_auxiliary.tex
\subsection{Computational observation on auxiliary dimension}
\label{sm:auxiliary}
{The} construction {in Secs}.~\ref{w4:chap:construction}--\ref{w4:chap:processing}
uses a memory of dimension 90{. Let $M_{\min}$ be the smallest peak} dimension
{outside the query qutrit among} pure coherent, deterministic protocols in
fixed causal order {that use} three data-independent preparation queries and
four processing queries {to transpose an} unknown qutrit unitary exactly {and
return the} catalyst. {The} peak {includes} both stages {and counts} the logical
data whenever they are stored outside the query register. {{Preparation uses \(3^2\cdot8=72\) dimensions outside the queried qutrit, while processing uses \(90\); hence}}
$M_{\min}\le90$.

{Exact coefficient-Gram calculations {motivate} the candidate value}
$M_{\min}=90${. Rational reconstruction} and 1045 nonzero minors modulo 1009{,
checked by an independent elimination, certify injectivity of} the {constructed
four-query coefficient chain}. {A reduction from every admissible physical protocol to this finite coefficient chain, together with the resulting physical-memory rank bound, remains to be established. Thus a matching lower bound valid for all protocols in this model is still open.}

The {attachment \path{anc/aux/} provides a self-contained replay. Running
\texttt{python replay.py} in that directory regenerates} the {rational spaces},
{rebuilds the modular certificate}, {and independently checks every selected
minor using only the Python standard library. Its README states the model,
inputs,} outputs, and {scope of each check}.